\newtheorem{lemma}{Lemma}
\newtheorem{condition}{Condition}
\newtheorem{theorem}{Theorem}
\newtheorem{remark}{Remark}
\newcommand{\bbeta}{{\boldsymbol{\beta}}}
\newcommand{\bmu}{{\boldsymbol{\mu}}}
\newcommand{\bw}{{\boldsymbol{w}}}
\DeclareMathOperator*{\argmin}{arg\,min}
\DeclareMathOperator*{\argmax}{arg\,max}
\definecolor{rp}{RGB}{83,54,106}
\def\boxit#1{\vbox{\hrule\hbox{\vrule\kern6pt\vbox{\kern6pt#1\kern6pt}\kern6pt\vrule}\hrule}}
\begin{document}

\begin{frontmatter}
	\title{Community detection with nodal information}
	\runtitle{Community detection with nodal information}
\begin{aug}
	\author{\fnms{Haolei} \snm{Weng}\ead[label=e1]{hw2375@columbia.edu}} 
	\and 
	\author{\fnms{Yang} \snm{Feng}\thanksref{t2}\ead[label=e2]{yang.feng@columbia.edu}}
\thankstext{t2}{Partially supported by NSF CAREER Grant DMS-1554804.}

\runauthor{H. Weng and Y. Feng}

\affiliation{Columbia University}
\address{Department of Statistics, \\ Columbia University, \\New York, NY, 10027. \\
\printead{e1}\\
\phantom{E-mail:\ }\printead*{e2}}
\end{aug}

\begin{abstract}
Community detection is one of the fundamental
problems in the study of network data. Most existing community detection
approaches only consider edge information as inputs, and the output
could be suboptimal when nodal information is available. In such
cases, it is desirable to leverage nodal information for the
improvement of community detection accuracy. Towards this goal, we
propose a flexible network model incorporating nodal information, and
develop likelihood-based inference methods.  For the proposed methods,
we establish favorable asymptotic properties as well as efficient
algorithms for computation. Numerical experiments show the
effectiveness of our methods in utilizing nodal information across a
variety of simulated and real network data sets.
\end{abstract}

\begin{keyword}[class=AMS]
\kwd[Primary ]{62F99}
\kwd[; secondary ]{62P25.}
\end{keyword}

\begin{keyword}
\kwd{Networks}
\kwd{Community detection}
\kwd{Stochastic block model}
\kwd{Multi-logistic regression}
\kwd{Maximum likelihood}
\kwd{Profile likelihood}
\kwd{Variational inference}
\kwd{Consistency}
\kwd{Semidefinite programming.}\end{keyword}

\end{frontmatter}

\section{Introduction.}\label{sec1}

Networked systems are ubiquitous in  modern society. Examples include worldwide web, gene regulatory networks, and social networks. Network analysis has attracted a lot of research attention from social science, physics, computer science and mathematical science. There have been some interesting findings regarding the network structures, such as small world phenomena and power-law degree distributions \citep{newman2003structure}. One of the fundamental problems in network analysis is detecting and characterizing community structure in networks. Communities can be intuitively understood as groups of nodes which are densely connected within groups while sparsely connected between groups\footnote{More rarely, one can encounter communities of the opposite meaning in disassortative mixing networks.}. Identifying network communities not only helps better understand structural features of the network, but also offers practical benefits. For example, communities in social networks tend to share similar interest, which could provide useful information to build recommendation systems. 

Existing community detection methods can be roughly divided into algorithmic and model-based ones \citep{zhao2012consistency}. Algorithmic methods typically define an objective function such as modularity \citep{newman2006modularity},  which measures the goodness of a network partition, and design algorithms to search for the solution of the corresponding optimization problem. See \cite{fortunato2010community} for a thorough discussion of various algorithms. Unlike algorithmic approaches, model-based methods first construct statistical models that are assumed to generate the networks under study, and then develop statistical inference tools to learn the latent communities. Some popular models include stochastic block model \citep{holland1983stochastic}, degree-corrected stochastic block model \citep{dasgupta2004spectral, karrer2011stochastic} and mixed membership stochastic block model \citep{airoldi2009mixed}. 

In recent years, there have been increasingly active researches towards understanding the theoretical performances of community detection methods under different types of models. Regarding the stochastic block model, consistency results have been proved for likelihood based approaches, including maximum likelihood \citep{celisse2012consistency, choi2012stochastic}, profile likelihood \citep{bickel2009nonparametric}, pseudo likelihood \citep{amini2013pseudo} and variational inference \citep{celisse2012consistency, bickel2013asymptotic}, among others. Some of the existing results are generalized to degree-corrected block models \citep{zhao2012consistency}. Another line of theoretical works focuses on methods of moments. See \cite{rohe2011spectral, lei2014consistency, jin2015fast, qin2013regularized, joseph2013impact} on theoretical analysis of spectral clustering for detecting communities in block models. Spectral clustering \citep{zhang2014detecting} and tensor spectral method \citep{anandkumar2014tensor} have also been used to detect overlapping communities 
under mixed membership models. In addition, carefully constructed convex programming has been shown to enjoy provable guarantees for community detection \citep{chen2012clustering, amini2014semidefinite, cai2015robust, guedon2015community,chen2015convexified}. See also the interesting theoretical works of community detection under minimax framework \citep{zhang2015minimax, gao2015achieving}. Finally, there exists a different research theme focusing on detectability instead of consistency \citep{decelle2011asymptotic, krzakala2013spectral, saade2014spectral, abbe2015community}.

All the aforementioned methods are  based on only the observations of the edge connections in the networks. In the real world, however, networks often appear with additional nodal information. For example, social networks such as Facebook and Twitter contain users' personal profile information. A citation network has the authors' names, keywords, and abstracts of papers. Since nodes in the same communities tend to share similar features, we can expect that nodal attributes are in turn indicative of community structures. Combining both sources of edge and nodal information opens the possibility for more accurate community discovery. Many efficient heuristic algorithms are proposed in recent years to accomplish this goal \citep{akoglu2012pics, ruan2013efficient, chang2010hierarchical,  nallapati2008link, yang2013community}. However, not much theory has been established to understand the statistical properties. See \cite{binkiewicz2014covariate, zhang2013community} for some theoretical developments. In this paper, we aim to give a thorough study of the community detection with nodal information problem. Our work first introduces a flexible modeling framework tuned for community detection when edge and nodal information coexist. Under a specific model, we then study three likelihood methods and derive their asymptotic properties. Regarding the computation of the estimators, we resort to a convex relaxation (semidefinite programming) approach to obtain a preliminary community estimate serving as a good initialization fed into ``coordinate" ascent type iterative algorithms, to help locate the global optima. Various numerical experiments demonstrate that our methods can  accurately discover community structures by making efficient use of nodal information.

The rest of the paper is organized as follows. Section 2 introduces our network model with nodal information. We then propose likelihood based methods and derive the corresponding asymptotic properties in Section 3. Section 4 is devoted to the design and analysis of practical algorithms. Simulation examples and real data analysis are presented in Section 5. We conclude the paper with a discussion in Section 6. All the technical proofs are collected in the Appendix.

\section{Network Modeling with Nodal Information.}
A network is usually represented by a graph $G(V, E)$, where $V=\{1,2,\dots,n\}$ is the set of nodes and $E$ is the set of edges. Throughout the paper, we will focus on the networks in which the corresponding graphs are undirected and contain no self-edges. The observed edge information can be recorded in the adjacency matrix $A \in \{0,1\}^{n \times n}$, where $A_{ij}=A_{ji}=1$ if and only if $(i,j) \in E$. Suppose the network can be divided into $K$ non-overlapping communities. Let $\bm{c}=(c_1,\dots, c_n)$ be the community assignment vector, with $c_i$ denoting the community membership of node $i$ and taking values in $\{1, 2, \dots, K\}$. Additionally, the available nodal information is formulated in a covariate matrix $X=(\bm{x}_1, \dots, \bm{x}_n)^T \in \mathbb{R}^{n\times p}$, where $\bm{x}_i \in \mathbb{R}^p$ is the $i$-th node's covariate vector. The goal is to estimate $\bm{c}$ from the observations $A$ and $X$. 
\subsection{Conditional Independence.}\label{model:ci}

We treat $A, X$ and $\bm{c}$ as random and posit a statistical model for them. Before  introducing the model, we would like to elucidate the main motivation. For the purpose of community detection, we  follow the standard two-step inference procedure:
\begin{itemize}
\item[(1)] Derive parameter estimator $\hat{\bm{\theta}}$ based on $P(A, X; \bm{\theta})$,
\item[(2)] Perform posterior inference according to $P(\bm{c} \mid A,X; \hat{\bm{\theta}})$.
\end{itemize}
Under this framework, we now make a conditional independence assumption: \emph{$A \perp X \mid  \bm{c}$}. Admittedly,  the assumption imposes a strong constraint that given the community membership, what nodes are like (described by covariates $X$) does not affect how they are connected (encoded in $A$). On the other hand, this assumption is consistent with our belief that knowing nodal information can help identify community structure $\bm{c}$. More importantly, this assumed conditional independence turns out to simplify the above two steps to a great extent. First,  for the parameter estimation step, the conditional independence assumption implies that
\begin{align}
P(A,X;  \bm{\theta})=\sum_{\bm{c}} P(A\mid \bm{c})P(X\mid \bm{c}) P(\bm{c})=P(X; \bm{\theta}_1) \sum_{\bm{c}} P(A\mid \bm{c}; \bm{\theta}_2) P(\bm{c} \mid X;  \bm{\theta}_3),\label{mle:one}
\end{align}
where $\bm{\theta}=(\bm{\theta}_1,\bm{\theta}_2, \bm{\theta}_3)$ indexes a family of generative models (not restricted to parametric forms). Regarding the second step, conditional independence leads to 
\begin{align}
P(\bm{c} \mid A, X)& =  \frac{P(A\mid \bm{c}) P(X \mid \bm{c})P(\bm{c})}{\sum_{\bm{c}} P(A\mid \bm{c}) P(X \mid \bm{c})P(\bm{c})} =\frac{P(A\mid \bm{c}) P(\bm{c} \mid X)P(X)}{\sum_{\bm{c}} P(A\mid \bm{c}) P(\bm{c} \mid X)P(X)}  \label{mle:two}  \\
& = \frac{P(A\mid \bm{c}) P(\bm{c} \mid X)}{\sum_{\bm{c}} P(A\mid \bm{c}) P(\bm{c} \mid X)}.\nonumber
\end{align}
From  \eqref{mle:one} and \eqref{mle:two}, we observe that the distribution $P(X)$ is a ``nuisance" in the two-step procedure. Hence we are able to avoid modeling and estimating the marginal distribution of $X$. As a result, the effort can be saved for the inference of $P(A\mid \bm{c})$ and $P(\bm{c} \mid X)$.

\subsection{Node-coupled Stochastic Block Model.} \label{nsbm}

The conditional independence and follow-up arguments in Section \ref{model:ci} pave the way to a flexible framework of models for networks with nodal covariates: specifying the two conditionals $P(A\mid \bm{c})$ and $P(\bm{c}\mid X)$. A similar modeling strategy was proposed in \cite{newman2015structure}, along with detailed empirical results. Unlike them, we will consider a different model and present a thorough study from both theoretical and computational perspectives. Note that the conditional distribution $P(A\mid \bm{c})$ only involves the edge information of the network, while $P(\bm{c} \mid X)$ is often encountered in the standard regression setting for i.i.d. data. This motivates us to consider the following model.

\vspace{0.2cm}
\emph{Node-coupled Stochastic Block Model} (NSBM):
\begin{itemize}
\item[(a)] $P(A\mid \bm{c})=\underset{ i <j}{\prod} B_{c_ic_j}^{A_{ij}}(1-B_{c_ic_j})^{1-A_{ij}}$,
\item[(b)] $P(\bm{c} \mid X)=\underset{ i}{\prod} \frac{\exp(\bm{\beta}^T_{c_i}\bm{x}_i)}{\sum_{k=1}^K\exp({\bm{\beta}^T_{k}\bm{x}_i})}$,
\end{itemize}
where $B=(B_{ab}) \in [0,1]^{K\times K}$ is symmetric, $\bm{\beta}=(\bm{\beta}_1,\dots, \bm{\beta}_K) \in \mathbb{R}^{K  p}$. The distribution $P(A\mid \bm{c})$ in (a) follows the stochastic block model (SBM), which, as the fundamental model, has been extensively studied in the literature. The SBM implies that the distribution of an edge between node $i$ and $j$ only depends on their community membership $c_i$ and $c_j$. The nodes from the same community are stochastically equivalent. The element $B_{ab}$ in the matrix $B$ represents the probability of edge connection between a node in community $a$ and a node in community $b$. The $P(\bm{c} \mid X)$ in (b) simply takes a multi-logistic regression form, where we will assume $\bm{\beta}_K=\bm{0}$ for identifiability. Simple as it looks, we would like to point out some advantages of NSBM:
\begin{itemize}
\item The parameters in NSBM can be estimated by combining the estimation of $B$ under (a) and $\bm{\beta}$ under (b), as we shall elaborate in Section \ref{algo:pp}.
\item The coefficient $\bm{\beta}$ reflects the contribution of each nodal covariate for identifying community structures. This information can help us better understand the implication of the network communities. 
\item The probability $p(c=k \mid \bm{x})=\frac{\exp({\bm{\beta}^T_{k}\bm{x}})}{\sum_{k=1}^K\exp({\bm{\beta}^T_{k}\bm{x}})}$ can be used to predict a new node's community membership $c$ based on its covariates $\bm{x}$, without waiting for it to form network connections.
\item Both $P(A\mid \bm{c})$ and $P(\bm{c} \mid X)$ can be readily generalized to fit more complicated structures.
\end{itemize}
\begin{remark}
As illustrated in Section \ref{model:ci}, under the conditional independence assumption $A \perp X \mid  \bm{c}$, 	it is sufficient  to consider the conditional likelihood $P(A,\bm{c}\mid X)$ instead of the full version $P(A,\bm{c},X)$. In particular, we study the maximum likelihood estimate,  maximum variational likelihood estimate, and the maximum profile likelihood estimate based on the conditional likelihood in the next section. However, we emphasize that the conditional independence assumption is not part of NSBM, though it was used to motivate the model. In the next section, we will treat this assumption as working independence  to derive the likelihood based estimates. Hence, the three estimates are in fact based on pseudo likelihood. With a slight abuse of terminology and for simplicity, we still call them the aforementioned likelihood names in the rest of the paper.
\end{remark}

\section{Statistical Inference under NSBM.} \label{infer:nsbm}

For community detection, our main goal is to find an accurate community assignment estimator $\hat{\bm{c}}$ for the underlying true communities $\bm{c}$. Theoretically, we would like to  study the consistency of community detection for a given method. We adopt the notions of consistency from \cite{bickel2009nonparametric} and \cite{zhao2012consistency}:
\begin{eqnarray*}
(\mbox{\emph{strong consistency}})&&P(\hat{\bm{c}}=\bm{c}) \rightarrow 1, \mbox{~as~} n \rightarrow \infty, \\
(\mbox{\emph{weak consistency}})&&\forall \epsilon>0, P\Big(\frac{1}{n}\sum_{i=1}^n \mathbbm{1}(\hat{c}_i \neq c_i) < \epsilon \Big) \rightarrow 1, \mbox{~as~} n \rightarrow \infty.
\end{eqnarray*}
As the network size increases to infinity, with probability approaching 1, strong consistency requires perfect recovery of the true community structure, while weak consistency only needs the mis-classification rate to be arbitrarily small. Note that since community structure is invariant under a permutation of the community labels in $\{1, 2, \dots, K\}$, the consistency notations above as well as the estimators to be introduced should always be interpreted up to label permutations.

In the asymptotic setting where the network size $n \rightarrow \infty$, holding the parameter $B \in [0,1]^{K\times K}$ unchanged implies that the total number of edges present in the network is of order $O(n^2)$. Such networks are unrealistically dense. To study under a more realistic asymptotic framework, we allow $B$ to change with $n$. In particular, we consider a sequence of submodels where $B =\rho_n \bar{B}$ with $\bar{B}$ fixed and $\rho_n=P(A_{ij}=1) \rightarrow 0$ as $n \rightarrow \infty$. The same asymptotic formulation was studied in \cite{bickel2009nonparametric, zhao2012consistency, bickel2013asymptotic}. In this way, the parameter $\rho_n$ directly represents the sparsity level of the network. For the consistency results to be derived in the subsequent sections, we will specify the sufficient conditions on the order of $\rho_n$. 

As pointed out in Section \ref{nsbm}, the parameter $\bm{\beta} $ in NSBM is associated with the contribution of each nodal covariate for discovering communities. Measuring the importance of each nodal attribute to the community structure may provide insightful information about the network. For that purpose, in addition to community detection, we will study the asymptotics of the estimators for $\bm{\beta}$ as well. Since the parameter $B$ is not of current interest, we will skip the theoretical analysis of the corresponding estimators.

\subsection{Consistency of Maximum Likelihood Method.} \label{cmle}

In Section \ref{model:ci}, we pointed out the appealing implication of the assumed conditional independence for likelihood based inference procedure. We now evaluate this procedure under the asymptotic framework we introduced at the beginning of Section \ref{infer:nsbm}. Towards that end, we define the following maximum likelihood based estimators\footnote{Recall that the likelihood formulation is the pseudo version as pointed out in Remark 1. Similar explanations hold for the other two likelihood based methods presented in the subsequent sections. }:
\begin{align}
(\hat{\bm{\beta}}, \hat{B})&=\argmax_{\substack{\bm{\beta}_K=\bm{0},~ \bm{\beta} \in \mathbb{R}^{Kp} \\  B \in [0,1]^{K\times K}, B^T = B}}\sum_{\bm{c}}~ \underset{ i <j}{\prod} B_{c_ic_j}^{A_{ij}}(1-B_{c_ic_j})^{1-A_{ij}}\cdot \underset{ i}{\prod} \frac{\exp(\bm{\beta}^T_{c_i}\bm{x}_i)}{\sum_{k=1}^K\exp({\bm{\beta}^T_{k}\bm{x}_i})},  \label{mle:form}\\
\hat{\bm{c}} &= \argmax_{\bm{c} \in \{1,\dots, K\}^n} \underset{ i <j}{\prod} \hat{B}_{c_ic_j}^{A_{ij}}(1-\hat{B}_{c_ic_j})^{1-A_{ij}}\cdot \underset{ i}{\prod} \frac{\exp({\hat{\bm{\beta}}^T_{c_i}\bm{x}_i})}{\sum_{k=1}^K\exp({\hat{\bm{\beta}}^T_{k}\bm{x}_i})}.  \label{mle:comd}
\end{align}

The estimators defined above are the realizations of the two-step procedure we mentioned at the beginning of Section \ref{model:ci}. We are mainly interested in studying the consistency of $\hat{\bm{\beta}}$ and $\hat{\bm{c}}$. 

First, we would like to introduce several technical conditions. 

\begin{condition}\label{condition::idenfiability}
	$\bar{B}$ has no two identical columns. 
\end{condition}
If the probability matrix $\bar{B}$ has two identical columns, then there exist at least two communities unidentifiable with each other. In practice, it makes sense to combine those communities into a bigger one. 

\begin{condition}\label{condition::c-x}
	$(c_1, \bm{x}_1), \dots, (c_n, \bm{x}_n) \overset{iid}{\sim} (c,  \bm{x})$ with $\mathbb{E}(\bm{x}\bm{x}^T)\succ 0$, where $\succ 0$ represents the matrix being positive definite. 
\end{condition}

Condition \ref{condition::c-x} ensures the coefficient vector $\bm{\beta}$ is uniquely identifiable. 

\begin{condition}\label{condition::x-tail}
There exist constants $\kappa_1$ and $\kappa_2$ such that for sufficiently large $t$, we have 
	$$P(\|\bm{x}\|_2 > t) \leq \kappa_1 e^{-\kappa_2 t}.$$
	\end{condition}
Condition \ref{condition::x-tail} imposes a sub-exponential tail bound on $\|\bm{x}\|_2$, which is equivalent to sub-exponential tail assumption on each component of $\bm{x}$, via a simple union bound argument. This covers many different types of covariates like discrete, Gaussian and exponential.

\begin{theorem}\label{thm:mle}
Assume the data $(A, X)$ follows NSBM and  Conditions \ref{condition::idenfiability}, \ref{condition::c-x} and \ref{condition::x-tail} hold. In addition, assume $\frac{n\rho_n}{\log n} \rightarrow \infty$ as $n\rightarrow \infty$.  Then, we have  as $n \rightarrow \infty$
\begin{eqnarray*}
P(\hat{\bm{c}} = \bm{c}) \rightarrow 1, \quad \sqrt{n}(\hat{\bm{\beta}}-\bm{\beta}) \overset{d}{\rightarrow} N({\bf 0}, I^{-1}(\bm{\beta})),
\end{eqnarray*}
where $I(\bm{\beta})$ is the Fisher information for the multi-logistic regression problem of regressing $\bm{c}$ on $X$. 
\end{theorem}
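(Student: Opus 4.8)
The plan is to reduce the intractable marginal maximum-likelihood problem to an \emph{oracle} problem in which the community labels $\bm{c}$ are treated as known, and then invoke classical i.i.d.\ likelihood theory for the logistic-regression block. The engine behind this reduction is the sparsity assumption $n\rho_n/\log n \to \infty$, which places us in the strong-recovery regime for the stochastic block model component.

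First I would analyze the complete-data log-likelihood
\[
\ell(\bm{c}; \bm{\beta}, B) = \sum_{i<j}\big[A_{ij}\log B_{c_ic_j} + (1-A_{ij})\log(1-B_{c_ic_j})\big] + \sum_i \log\frac{\exp(\bm{\beta}_{c_i}^T\bm{x}_i)}{\sum_{k}\exp(\bm{\beta}_k^T\bm{x}_i)},
\]
and establish a separation estimate: with probability tending to one, the true assignment is the unique maximizer up to label permutation, and every assignment with $m$ misclassified nodes incurs a likelihood deficit of order $m\,n\rho_n$, the positivity of the leading constant being guaranteed by Condition~\ref{condition::idenfiability} (distinct columns of $\bar B$). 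The edge term supplies this separation via Bernstein-type concentration of the grouped adjacency sums (the relevant Bernoulli divergences scale like $\rho_n$), while the covariate term is only a lower-order perturbation: under Condition~\ref{condition::x-tail} the sub-exponential tails give $\max_i\|\bm{x}_i\|_2 = O_P(\log n)$, so reassigning a node changes the logistic part by at most $O_P(\log n)$, which is negligible against the per-node edge gap $\asymp n\rho_n$ precisely because $n\rho_n/\log n\to\infty$.

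Second I would convert this complete-data separation into a statement about the marginal likelihood $L(\bm{\beta},B)=\sum_{\bm{c}} e^{\ell(\bm{c};\bm{\beta},B)}$. A counting argument, noting that there are at most $(nK)^m$ assignments with $m$ misclassifications pitted against a deficit $e^{-c\,m\,n\rho_n}$, shows that, uniformly over the relevant range of $(\bm{\beta},B)$, the sum is dominated by the true-label term, so $\log L(\bm{\beta},B) = \ell(\bm{c};\bm{\beta},B)$ up to a negligible remainder after modding out permutations. Maximizing the marginal likelihood is therefore asymptotically equivalent to maximizing the oracle likelihood $\ell(\bm{c};\bm{\beta},B)$ with $\bm{c}$ fixed at the truth. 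Because the model factorizes as $P(A\mid\bm{c})P(\bm{c}\mid X)$ with disjoint parameters $B$ and $\bm{\beta}$, the oracle problem decouples, and $\hat{\bm{\beta}}$ asymptotically coincides with the maximizer of the multinomial logistic log-likelihood $\sum_i \log P(c_i\mid\bm{x}_i;\bm{\beta})$ of the i.i.d.\ data $(c_i,\bm{x}_i)$.

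Finally I would collect the two conclusions. Consistency of $\hat{\bm{c}}$ follows from the separation estimate applied with the consistent plug-in estimates $(\hat{\bm{\beta}},\hat{B})$ in the classification rule \eqref{mle:comd}: the true $\bm{c}$ remains the unique maximizer with probability tending to one, giving $P(\hat{\bm{c}}=\bm{c})\to 1$. Asymptotic normality of $\hat{\bm{\beta}}$ then reduces to the standard central limit theorem for the correctly specified logistic-regression MLE on i.i.d.\ data, where Condition~\ref{condition::c-x} ($\mathbb{E}(\bm{x}\bm{x}^T)\succ 0$) guarantees a nonsingular, and Condition~\ref{condition::x-tail} a finite, Fisher information $I(\bm{\beta})$, yielding $\sqrt{n}(\hat{\bm{\beta}}-\bm{\beta})\overset{d}{\to}N(\mathbf{0}, I^{-1}(\bm{\beta}))$. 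I expect the main obstacle to be the second step: controlling the $K^n$-term combinatorial sum uniformly in the parameters and showing that the covariate-dependent, non-exchangeable prior $P(\bm{c}\mid X)$ does not disrupt the dominance of the true-label term. This is exactly where the interplay between the sparsity rate $n\rho_n/\log n\to\infty$ and the sub-exponential tail condition must be managed with care, since the two effects must be balanced per misclassified node rather than merely on average.
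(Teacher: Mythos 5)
Your proposal follows essentially the same route as the paper: the paper delegates your second step (dominance of the true-label term in the marginal sum, hence the reduction to the oracle multi-logistic MLE and the $\sqrt{n}$-normality of $\hat{\bm{\beta}}$) to Lemma~1 and Theorem~2 of \cite{bickel2013asymptotic} adapted to NSBM, and then proves $P(\hat{\bm{c}}=\bm{c})\to 1$ exactly as you describe --- a separation estimate for the plug-in classification rule in which the edge term supplies a deficit proportional to $\|V(\bm{e})-V(\bm{c})\|_1\,n^2\rho_n$ (via Lemma~\ref{strong:curv} and the concentration bounds \eqref{concen:one}--\eqref{concen:two}) while the covariate term is a lower-order perturbation controlled through $\max_i\|\bm{x}_i\|_2=o_p(n\rho_n)$, which is where Condition~\ref{condition::x-tail} and $n\rho_n/\log n\to\infty$ enter. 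The only presentational difference is that the paper implements your single per-node separation bound as a two-stage argument (global weak consistency first, then a refined local analysis near $\bm{c}$), but the substance is the same.
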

The key condition $\frac{n\rho_n}{\log n} \rightarrow \infty$ requires that the expected degree of every node to grow faster than the order of $\log n$. The same condition has been used in \cite{bickel2009nonparametric} and \cite{zhao2012consistency} to derive strong consistency under SBM. Under the  conditions of the theorem, the maximum likelihood method not only gives us a strong consistent community assignment estimate $\hat{\bm{c}}$, but also a coefficient estimate $\hat{\bm{\beta}}$ which is as efficient as if the true label $\bm{c}$ were known.

\subsection{Consistency of Variational Method.}\label{cvim}

The maximum likelihood method studied in Section \ref{cmle} has been shown to have nice theoretical properties. However, the likelihood function form in \eqref{mle:form} renders the computation of the estimators $(\hat{\bbeta},\hat{B})$ intractable. In particular, it is computationally infeasible to even evaluate the likelihood function value at a non-degenerate point  (when $n$ is not too small), due to the marginalization over all possible membership assignments. To address this computation issue, we propose a tractable variational method, and demonstrate that it enjoys equally favorable asymptotic properties as the maximum likelihood approach. This is motivated by the works about variational methods under SBM \citep{daudin2008mixture, celisse2012consistency, bickel2013asymptotic}. Throughout this section, we will use the generic symbol $P(\cdot)$ to denote joint distributions and $\bm{\theta}=(\bbeta, B)$. To begin with, recall the well known identity:
\begin{eqnarray*}
\log P(A, X; \bm{\theta})=\mathbb{E}_{Q}[ \log P(A,X, \bm{c}; \bm{\theta})-\log Q(\bm{c}) ]+D[Q(\bm{c})~||~ P(\bm{c}\mid A,X; \bm{\theta})],
\end{eqnarray*}
where $Q(\cdot)$ denotes any joint distribution of $\bm{c}$; the expectation $\mathbb{E}_{Q}(\cdot)$ is taken with respect to $\bm{c}$ under $Q(\bm{c})$; $D[\cdot || \cdot]$ is the Kullback-Leibler divergence. Since $D[Q(\bm{c})~||~ P(\bm{c}\mid A,X; \bm{\theta})]\geq 0$ and the equality holds when $Q(\bm{c})=P(\bm{c}\mid A,X; \bm{\theta})$, it is not hard to verify the following variational equality,
\begin{align}\label{vi:g}
\max_{\bm{\theta}}\log P(A, X; \bm{\theta})=\max_{\bm{\theta}, Q(\cdot)} \mathbb{E}_{Q}[ \log P(A,X, \bm{c}; \bm{\theta})-\log Q(\bm{c}) ].
\end{align}
Hence, to compute the maximum likelihood value, we can equivalently solve the optimization problem on the right hand side of \eqref{vi:g}. Note that iteratively optimizing over $\bm{\theta}$ and $Q(\cdot)$ leads to the EM algorithm \citep{dempster1977maximum}. However, the calculation of $P(\bm{c}\mid A, X; \bm{\theta})$ at each iteration of EM is computationally intensive. Instead of optimizing over the full distribution space of $Q(\cdot)$, variational methods aim to solve an approximate optimization problem, by searching over a subset of all possible $Q(\cdot)$. In particular, we consider the mean-field variational approach \citep{jordan1999introduction}, 
\begin{align}\label{vi:mf}
\max_{\bm{\theta}, Q \in \mathcal{Q}}  \mathbb{E}_{Q}[ \log P(A,X, \bm{c}; \bm{\theta})-\log Q(\bm{c}) ],
\end{align}
where $\mathcal{Q}=\{Q: Q(\bm{c})=\prod_{i=1}^n q_{ic_i}, \sum_k q_{ik}=1, 1\leq i \leq n\}$. The subset $\mathcal{Q}$ contains all the distributions under which the elements of $\bm{c}$ are mutually independent. The independence structure turns out to make the computation in \eqref{vi:mf} manageable. We postpone the detailed calculations to Section \ref{algo:pp}, and focus on the asymptotic analysis in this section.  Denote the maximizer in \eqref{vi:mf} by $(\check{\bm{\beta}}, \check{B})$ and 
\begin{align}\label{vi:mf2}
\check{\bm{c}}= \argmax_{\bm{c} \in \{1,\dots, K\}^n} \underset{ i <j}{\prod} \check{B}_{c_ic_j}^{A_{ij}}(1-\check{B}_{c_ic_j})^{1-A_{ij}}\cdot \underset{ i}{\prod} \frac{\exp({\check{\bm{\beta}}^T_{c_i}\bm{x}_i})}{\sum_{k=1}^K\exp({\check{\bm{\beta}}^T_{k}\bm{x}_i})}.
\end{align}

\begin{theorem}\label{thm:mfvi}
Suppose the conditions in Theorem \ref{thm:mle} hold. Then as $n \rightarrow \infty$
\begin{align*}
P(\check{\bm{c}} = \bm{c}) \rightarrow 1, \quad \sqrt{n}(\check{\bm{\beta}}-\bm{\beta}) \overset{d}{\rightarrow} N({\bf 0}, I^{-1}(\bm{\beta})),
\end{align*}
where $I(\bm{\beta})$ is the Fisher information for the multi-logistic regression problem of regressing $\bm{c}$ on $X$. 
\end{theorem}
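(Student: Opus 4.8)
The plan is to exploit the variational identity \eqref{vi:g} to sandwich the mean-field objective between two more tractable quantities, and then argue that this sandwich collapses so tightly that the variational estimator inherits the conclusions of Theorem \ref{thm:mle}. Write the marginal objective $\ell(\bm{\theta}) = \log P(A,X;\bm{\theta}) = \log \sum_{\bm{c}'} P(A,X,\bm{c}';\bm{\theta})$, the mean-field objective $\check{\ell}(\bm{\theta}) = \max_{Q\in\mathcal{Q}} \mathbb{E}_Q[\log P(A,X,\bm{c};\bm{\theta}) - \log Q(\bm{c})]$, and the \emph{oracle} complete-data objective $\ell_0(\bm{\theta}) = \log P(A,X,\bm{c};\bm{\theta})$ evaluated at the true labels $\bm{c}$. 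Since the point mass $q_{ic_i}=1$ lies in $\mathcal{Q}$ (for which the entropy term vanishes) and the variational bound never exceeds the marginal likelihood, I obtain the deterministic sandwich $\ell_0(\bm{\theta}) \le \check{\ell}(\bm{\theta}) \le \ell(\bm{\theta})$. The whole proof then reduces to showing that $\ell(\bm{\theta}) - \ell_0(\bm{\theta})$ is negligible on the parameter region of interest, at progressively finer scales for the two conclusions.

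First, for parameter consistency, I would bound the gap at the coarse $n$-scale. Writing $\ell(\bm{\theta}) - \ell_0(\bm{\theta}) = \log\big(1 + \sum_{\bm{c}'\neq \bm{c}} P(A,X,\bm{c}';\bm{\theta})/P(A,X,\bm{c};\bm{\theta})\big)$, I would group competing labelings by their Hamming distance to $\bm{c}$ and control the likelihood ratio induced by each flipped node. Each mislabeled node pays an edge-information penalty of order $n\rho_n$, whereas there are at most $K$ choices per node; the hypothesis $n\rho_n/\log n \to \infty$ ensures the per-node penalty dominates the $\log K$ entropy, so $\sup_{\bm{\theta}} [\ell(\bm{\theta})-\ell_0(\bm{\theta})]/n \to 0$ in probability on the relevant compact region. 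Consequently $\check{\bm{\theta}} = (\check{\bm{\beta}},\check{B})$ maximizes a uniformly small perturbation of $\ell_0/n$, whose population maximizer is the truth, giving $\check{\bm{\theta}} \to \bm{\theta}$ — the same consistency enjoyed by the maximum likelihood estimator of Theorem \ref{thm:mle}. Much of the concentration machinery here can be imported directly from the proof of Theorem \ref{thm:mle}.

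Second, community detection consistency follows by feeding the consistent $\check{\bm{\theta}}$ into the posterior-mode step \eqref{vi:mf2}, which is structurally identical to the second stage of the two-step procedure analyzed for Theorem \ref{thm:mle}. Reusing that argument, the true labeling dominates every competitor in the plug-in posterior with probability tending to one — again because $n\rho_n/\log n \to \infty$ makes the separation exponentially sharp — so $P(\check{\bm{c}} = \bm{c}) \to 1$.

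The hard part will be the $\sqrt{n}$-asymptotics of $\check{\bm{\beta}}$, where the coarse sandwich no longer suffices: I must control not merely the \emph{value} of the mean-field optimum but the optimal variational distribution $\check{Q}$ itself, since errors in $\check{Q}$ feed directly into the stationarity condition for $\bm{\beta}$ in \eqref{vi:mf}. The plan is to use strong recovery to show that $\check{Q}$ places mass $1 - o_P(n^{-1/2})$ on the true labeling, so that the $\bm{\beta}$-score reduces, up to an $o_P(\sqrt{n})$ remainder, to the oracle multi-logistic score whose $k$-th block is $\sum_i \big(\mathbbm{1}\{c_i=k\} - \frac{\exp(\bm{\beta}_k^T\bm{x}_i)}{\sum_{l=1}^K\exp(\bm{\beta}_l^T\bm{x}_i)}\big)\bm{x}_i$ evaluated at the true labels. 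Conditions \ref{condition::c-x} and \ref{condition::x-tail} then supply the positive-definite information matrix and the moment control needed for a standard M-estimation central limit theorem for this oracle regression, with limiting covariance $I^{-1}(\bm{\beta})$; transferring through the negligible score perturbation yields the stated normality. Quantifying the posterior concentration sharply enough to push the variational gap below the $\sqrt{n}$ threshold — rather than merely below $n$ — is the delicate step, and it is precisely the obstacle resolved in the variational analysis of \cite{bickel2013asymptotic}, whose techniques I expect to adapt here.
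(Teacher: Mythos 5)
Your proposal follows essentially the same route as the paper: the paper's proof simply verifies that Lemma 3 and Theorem 3 of \cite{bickel2013asymptotic} carry over to NSBM to obtain $\sqrt{n}(\check{\bm{\beta}}-\bm{\beta})\overset{d}{\rightarrow}N(\bm{0},I^{-1}(\bm{\beta}))$ together with the rate $\sqrt{n^2\rho_n}\log(\check{B}_{ab}/B^0_{ab})=O_p(1)$, and then repeats the strong-consistency argument for $\hat{\bm{c}}$ from Theorem \ref{thm:mle} verbatim for $\check{\bm{c}}$ --- precisely the sandwich-plus-posterior-concentration strategy and the ``reuse the Theorem \ref{thm:mle} machinery'' step you describe. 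The only caution is that for the community-detection step you should carry along the relative rate on $\check{B}$ (not mere consistency, which is vacuous since $B=\rho_n\bar{B}\to 0$), as that rate is what the Theorem \ref{thm:mle} argument actually consumes.
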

As we can see, under the same conditions as the maximum likelihood method, the variational approach can deliver equally good estimators, at least in the asymptotic sense. In other words, the approximation made by the variational method does not degrade the asymptotic performance. This should be attributed to the condition $\frac{n\rho_n}{\log n}\rightarrow \infty$, which guarantees the network has sufficient edge information for doing approximate inference.

\subsection{Consistency of Maximum Profile Likelihood Method.}\label{cmplm}

The two methods presented in Sections \ref{cmle} and \ref{cvim} are implementations of the two-step procedure we discussed in Section \ref{model:ci}: first estimating parameters based on the likelihood function and then doing posterior inference using the estimated distribution. In this section, we introduce a one-step method that outputs the parameter and community assignment estimates simultaneously.  The method solves the following problem,
\begin{align}\label{thm:eqeq}
(\tilde{\bm{\beta}}, \tilde{B},\tilde{\bm{c}})=\argmax_{\substack{\bm{\beta}_K=\bm{0}, ~\bm{\beta} \in \mathbb{R}^{Kp}  \\ \bm{c} \in \{1,\dots, K\}^n\\ B \in [0,1]^{K\times K}, B^T = B}  }  \underset{ i <j}{\prod} B_{c_ic_j}^{A_{ij}}(1-B_{c_ic_j})^{1-A_{ij}}\cdot \underset{ i}{\prod} \frac{\exp(\bm{\beta}^T_{c_i}\bm{x}_i)}{\sum_{k=1}^K\exp({\bm{\beta}^T_{k}\bm{x}_i})}.
\end{align}
In the above formulation, we treat the latent variables $\bm{c}$ as parameters and obtain the estimators as the maximizer of the joint likelihood function. This enables us to avoid the cumbersome marginalization encountered in the maximum likelihood method. This approach is known as maximum profile likelihood \citep{bickel2009nonparametric, zhao2012consistency}. \cite{bickel2009nonparametric} showed strong consistency under stochastic block model, and \cite{zhao2012consistency} generalized the results to degree-corrected block models. Following similar ideas, we will investigate this method in the node-coupled stochastic block model. For theoretical convenience, we consider a slightly different formulation:
\begin{align}\label{pmle:eq}
(\tilde{\bm{\beta}}, \tilde{B},\tilde{\bm{c}})=\argmax_{\substack{\bm{\beta}_K=\bm{0}, ~\bm{\beta} \in \mathbb{R}^{Kp}  \\ \bm{c} \in \{1,\dots, K\}^n\\ B \in \mathbb{R}^{K\times K}, B^T = B}  }  \underset{ i <j}{\prod} e^{-B_{c_ic_j}}B_{c_ic_j}^{A_{ij}} \cdot \underset{ i}{\prod} \frac{\exp(\bm{\beta}^T_{c_i}\bm{x}_i)}{\sum_{k=1}^K\exp({\bm{\beta}^T_{k}\bm{x}_i})},
\end{align}
where the Bernoulli distribution in \eqref{thm:eqeq} is replaced by Poisson distribution. In our asymptotic setting $\rho_n \rightarrow 0$, the difference becomes negligible.

\begin{theorem}\label{thm:pmle}
Assume the data $(A, X)$ follows NSBM and  Conditions \ref{condition::idenfiability} and \ref{condition::c-x} hold.
\begin{itemize}
\item[(i)]  If $n\rho_n \rightarrow \infty$ and $\mathbb{E}\|\bm{x}\|_2^{\alpha} < \infty ~(\alpha >1)$, then there exists a constant $\gamma >0$ such that, as $n \rightarrow \infty$
\begin{align*}
P\Big(\frac{1}{n}\sum_{i=1}^n \mathbbm{1}(\tilde{c}_i \neq c_i) \leq \gamma (n\rho_n)^{-1/2} \Big) \rightarrow 1,\quad \| \tilde{\bm{\beta}}-\bm{\beta} \|_2=O_p((n\rho_n)^{\frac{1-\alpha}{2\alpha}}) . 
\end{align*}
\item[(ii)] Assume Condition \ref{condition::x-tail} is satisfied.  If $\frac{n \rho_n}{\log n} \rightarrow \infty$, then as $n \rightarrow \infty$
\begin{align*}
P(\tilde{\bm{c}} = \bm{c}) \rightarrow 1, \quad \sqrt{n}(\tilde{\bm{\beta}}-\bm{\beta}) \overset{d}{\rightarrow} N({\bf 0}, I^{-1}(\bm{\beta})),
\end{align*}
where $I(\bm{\beta})$ is the Fisher information for the multi-logistic regression problem of regressing $\bm{c}$ on $X$. \end{itemize}
\end{theorem}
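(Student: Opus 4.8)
The plan is to adapt the profile-likelihood framework of \cite{bickel2009nonparametric} and \cite{zhao2012consistency}, while carefully controlling the extra multi-logistic term that couples the labels to the covariates. Write the logarithm of the objective in \eqref{pmle:eq} as $\ell(\bm{c}, B, \bm{\beta}) = \ell_A(\bm{c}, B) + \ell_X(\bm{c}, \bm{\beta})$, where $\ell_A(\bm{c},B) = \sum_{i<j}(A_{ij}\log B_{c_ic_j} - B_{c_ic_j})$ is the Poisson block term and $\ell_X(\bm{c},\bm{\beta}) = \sum_i [\bm{\beta}_{c_i}^T\bm{x}_i - \log\sum_{k}\exp(\bm{\beta}_k^T\bm{x}_i)]$ is the multi-logistic term. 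First I would profile out the continuous parameters at a fixed label vector $\bm{c}$: for $\ell_A$ the maximizer is the explicit Poisson estimate $\hat{B}_{ab}(\bm{c}) = O_{ab}(\bm{c})/n_{ab}(\bm{c})$, with $O_{ab}$ and $n_{ab}$ counting cross-block edges and node pairs, so its profile depends on $\bm{c}$ only through block-count statistics; for $\ell_X$, concavity in $\bm{\beta}$ yields a unique $\hat{\bm{\beta}}(\bm{c})$ whose profile depends on $\bm{c}$ through the empirical joint law of $(c_i,\bm{x}_i)$. The problem then reduces to maximizing a function of $\bm{c}$ that factors through the confusion matrix $R(\bm{c})=(R_{ka})$, $R_{ka}=n^{-1}\sum_i \mathbbm{1}(c_i=k,\tilde{c}_i=a)$, together with covariate summaries.

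Next I would identify a deterministic population objective and argue its strict maximality. After rescaling by $n^2\rho_n$, the block profile concentrates, by Bernstein-type bounds applied uniformly over assignments, around a functional $G(R)$ of the confusion matrix; since $n\rho_n\rightarrow\infty$ forces $n^2\rho_n\gg n$, the logistic profile is of smaller order and enters only as a lower-order perturbation, yet it still pins down $\bm{\beta}$. Condition \ref{condition::idenfiability} (no two identical columns of $\bar{B}$) ensures $G$ is uniquely maximized, up to a label permutation, at the true confusion matrix $R=\diag(\pi)$, while Condition \ref{condition::c-x} ($\mathbb{E}(\bm{x}\bm{x}^T)\succ 0$) ensures the logistic profile is uniquely maximized at the true $\bm{\beta}$. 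Coupling the uniform concentration to the strict separation of the population maximizer gives weak consistency, and quantifying the separation gap — of order $n^2\rho_n$ times the misclassification fraction — against the stochastic fluctuation of $\ell_A$ produces the $(n\rho_n)^{-1/2}$ rate for the misclassification fraction in part (i).

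For the coefficient rate in part (i), I would track how mislabeled nodes contaminate the logistic score. With a misclassification fraction $\delta_n=(n\rho_n)^{-1/2}$, the perturbation of the score is controlled by the sum of $\|\bm{x}_i\|_2$ over the roughly $n\delta_n$ mislabeled nodes; a H\"older bound against $\mathbb{E}\|\bm{x}\|_2^{\alpha}<\infty$ turns this into a bound of order $\delta_n^{(\alpha-1)/\alpha}=(n\rho_n)^{(1-\alpha)/(2\alpha)}$ after inverting the well-conditioned information, which is exactly the stated $\bm{\beta}$ rate and explains why only a polynomial moment is needed there. For part (ii), I would upgrade weak to strong consistency by a union bound over all label vectors differing from the truth on at least one node: the sub-exponential tail of Condition \ref{condition::x-tail} together with $n\rho_n/\log n\rightarrow\infty$ yields deviation bounds strong enough to beat the $K^n$ entropy, so with probability tending to one no relabeling beats the true one, i.e. $\tilde{\bm{c}}=\bm{c}$. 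Conditioning on $\{\tilde{\bm{c}}=\bm{c}\}$, the estimator $\tilde{\bm{\beta}}$ coincides with the oracle maximum multi-logistic-likelihood estimator computed with the true labels, and a standard score expansion — using Condition \ref{condition::c-x} for invertibility of $I(\bm{\beta})$ and Condition \ref{condition::x-tail} for the needed moments — delivers $\sqrt{n}(\tilde{\bm{\beta}}-\bm{\beta})\overset{d}{\rightarrow}N(\bm{0},I^{-1}(\bm{\beta}))$.

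The hard part will be the uniform-over-assignments control in both regimes. In part (i) the genuine difficulty is handling the heavy-tailed covariate contribution to $\ell_X$ via truncation while keeping the $(n\rho_n)^{-1/2}$ community-detection rate sharp; in part (ii) it is obtaining deviation bounds for the coupled Poisson-block plus logistic objective that are exponentially strong enough to survive the union bound over all $K^n$ partitions. The block-term combinatorics can largely be imported from \cite{zhao2012consistency}, so the genuinely new work is showing that adjoining the node-level logistic term neither destroys the strict separation of the population maximizer nor inflates the fluctuations beyond what the entropy can absorb.
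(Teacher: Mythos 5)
Your proposal follows essentially the same route as the paper's proof: profile out $(B,\bm{\beta})$ at fixed labels, show the rescaled block profile concentrates uniformly around a population functional of the confusion matrix that is strictly maximized at the truth (this is the paper's Lemma~\ref{strong:curv} plus claims $(\mathcal{A})$--$(\mathcal{B})$, with the logistic term entering only at order $(n\rho_n)^{-1}$), obtain the $\bm{\beta}$ rate by a H\"older bound on the covariate contamination from the $O(n\delta_n)$ mislabeled nodes, upgrade to strong consistency by importing the block-term entropy bounds of \cite{zhao2012consistency} and using the sub-exponential tail to keep the logistic perturbation below the separation constant, and finish the normality claim by reducing to the oracle multi-logistic MLE on the event $\{\tilde{\bm{c}}=\bm{c}\}$. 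The only cosmetic difference is that the paper controls the logistic term in the strong-consistency step via a deterministic Lipschitz bound involving $\max_i\|\bm{x}_i\|_2=o_p(n\rho_n)$ rather than a union bound over partitions, but this does not change the argument in substance.
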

We see that part (ii) in Theorem \ref{thm:pmle} is identical to Theorems \ref{thm:mle} and \ref{thm:mfvi}. Hence the maximum profile likelihood method is equivalently good as the previous two, in certain sense. The conclusions in part (i) shed lights on how the network edges and nodal covariates affect the consistency results. Under the scaling $n\rho_n \rightarrow \infty$, $\tilde{\bm{c}}$ is only weak consistent. And the higher moment  $\|\bm{x}\|_2$ has, the faster convergence rate $\check{\bm{\beta}}$ can achieve. Suppose all moments of $\|\bm{x}\|_2$ exist, then we would have $\sqrt{n\rho_n}\|\check{\bm{\beta}}-\bm{\beta}\|_2=O_p(1)$. Since $\rho_n \rightarrow 0$, this convergence rate is slower than and may be arbitrarily close to the one in part (ii) when $\frac{n\rho_n}{\log n} \rightarrow \infty$.

\section{Practical Algorithms.}\label{algo:pp}

In Section \ref{infer:nsbm}, we have studied three likelihood based community detection methods and shown their superb asymptotic performances. In this section, we design and analyze specialized algorithms, for computing the variational estimators defined by \eqref{vi:mf}, \eqref{vi:mf2} and the maximum profile likelihood estimators in \eqref{pmle:eq}. As discussed in Section \ref{cvim}, the maximum likelihood estimators are computationally infeasible, hence omitted here. The key challenge lies on the fact that the likelihood based functions in \eqref{vi:mf}, \eqref{vi:mf2} and \eqref{pmle:eq} are all non-convex.  Multiple local optima may exist and the global solution is often impossible to accurately allocate. To address this issue, we first obtain a ``well behaved" preliminary estimator via convex optimization, and then feed it as an initialization into ``coordinate" ascent type iterative schemes. The idea is that the carefully chosen initialization may help the followed-up iterations to escape ``bad" local optima and arrive ``closer" (better approximation) to the ideal global solution. As we shall see in the numerical studies, the results with  a ``well behaved" initial estimator  are significantly better than  those with a random initialization. These two steps will be  discussed in detail  in Sections \ref{ivco:two} and \ref{sds:one}, respectively.

\subsection{Initialization via Convex Optimization.}\label{ivco:two}

The convex optimization we consider in this section is semidefinite programming (SDP). Different formulations of SDP have been shown to yield good community detection performances in \cite{chen2012clustering, amini2014semidefinite, cai2015robust, montanari2015semidefinite, guedon2015community}, among others. One illuminating interpretation of SDP is to think of it as a convex relaxation of the maximum likelihood method. For example, starting from a specialized stochastic block model, one can derive SDP by approximating the corresponding likelihood function. See \cite{chen2012clustering, amini2014semidefinite, cai2015robust} for the detailed arguments. However, under NSBM, because of the nodal covariates term, it is not straightforward to generalize the convex relaxation arguments. We hence resort to a different understanding of SDP elaborated in \cite{guedon2015community}. The key idea is to construct SDP based on the observations directly, with the goal of having the true community assignment $\bm{c}$ to be the solution of a ``population" version of the SDP under construction. Then with a few conditions, we would like to show that the solution of the SDP is ``close" to the solution of its ``population" version, i.e., the true community assignment. In particular, we consider the following semidefinite programming problem,
\begin{align}\label{sdp:formula}
\hat{Z}=\argmax_{Z}& \langle A +\gamma_n XX^T, Z \rangle \\
\mbox{subject to }& Z \succeq 0, Z \in \mathbb{R}^{n \times n}  \nonumber \\
&  0\leq Z_{ij}\leq 1, 1\leq i, j \leq n \nonumber \\
&  \sum_{ij}Z_{ij}=\lambda_n,  \nonumber 
\end{align}
where $\gamma_n,\lambda_n>0$ are two tuning parameters; $\langle \cdot , \cdot \rangle$ denotes the inner product of two matrices. We then obtain the communities by running K-means on $\hat{Z}$ (treating each row of $\hat{Z}$ as a data point in $\mathbb{R}^n$). We show that the approach can produce a consistent community assignment estimate as presented in the following theorem. 

\begin{theorem}\label{thm:sdp}
Assume part (a) in NSBM holds and $(c_1, \bm{x}_1), \dots, (c_n, \bm{x}_n) \overset{iid}{\sim} (c,  \bm{x})$. Let $\{\bar{\bm{c}}_i\}_{i=1}^n$ be the community estimates from running K-means on $\hat{Z}$ defined in \eqref{sdp:formula}. If $\min_a \bar{B}_{aa} > \max_{a\neq b} \bar{B}_{ab}, n\rho_n \rightarrow \infty$ and $\|\bm{x}\|_2$ is sub-Gaussian, then by choosing $\gamma_n =o(\rho_n)$ and $\lambda_n=\sum_{k=1}^K(\sum_{i=1}^n\mathbbm{1}(c_i=k))^2$, we have
\begin{align*}
\frac{1}{n}\sum_{i=1}^n \mathbbm{1}(\bar{c}_i \neq c_i) \overset{P}{\rightarrow} 0.
\end{align*}
\end{theorem}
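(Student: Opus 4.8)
The plan is to follow the reference-matrix strategy of \cite{guedon2015community}: show that the ``oracle'' clustering matrix $Z^*$, defined by $Z^*_{ij}=\mathbbm{1}(c_i=c_j)$, is an approximate maximizer of the SDP objective, and then turn Frobenius closeness of $\hat Z$ to $Z^*$ into a vanishing misclassification rate for K-means. First I would verify that $Z^*$ is feasible: it is positive semidefinite, has entries in $\{0,1\}$, and by the oracle choice $\lambda_n=\sum_k(\sum_i\mathbbm{1}(c_i=k))^2$ it satisfies $\sum_{ij}Z^*_{ij}=\lambda_n$ exactly. Writing $\mathcal{A}=\mathbb{E}(A\mid\bm{c})$, so that $\mathcal{A}_{ij}=\rho_n\bar{B}_{c_ic_j}$ for $i\neq j$ and $\mathcal{A}_{ii}=0$, optimality of $\hat Z$ together with feasibility of $Z^*$ gives the basic inequality
\[
\langle\mathcal{A},Z^*-\hat Z\rangle\le\langle A-\mathcal{A},\hat Z-Z^*\rangle+\gamma_n\langle XX^T,\hat Z-Z^*\rangle .
\]

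Next I would establish a margin (curvature) lower bound for the left-hand side. Splitting the inner product into within-block and between-block pairs and using the constraint $\sum_{ij}Z_{ij}=\lambda_n$, the assortativity gap $\bar\delta:=\min_a\bar{B}_{aa}-\max_{a\neq b}\bar{B}_{ab}>0$ yields
\[
\langle\mathcal{A},Z^*-\hat Z\rangle\ge\rho_n\bar\delta\,T_{\mathrm{out}}(\hat Z)-C\rho_n n,\qquad T_{\mathrm{out}}(Z):=\sum_{c_i\neq c_j}Z_{ij},
\]
where the $O(\rho_n n)$ term absorbs diagonal corrections; it is lower order because the community sizes are $\Theta(n)$ by the law of large numbers, so $\lambda_n=\Theta(n^2)$.

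The core of the argument is bounding the two terms on the right. For the edge term I would invoke Grothendieck's inequality: since the feasible set is contained in $\{Z\succeq0:0\le Z_{ij}\le1\}$, one has $\langle A-\mathcal{A},\hat Z-Z^*\rangle\le 2K_G\|A-\mathcal{A}\|_{\infty\to1}$ for a universal constant $K_G$, with $\|\cdot\|_{\infty\to1}$ the cut-type norm $\max_{x,y\in\{\pm1\}^n}x^T(A-\mathcal{A})y$. A concentration argument (variance $\lesssim n^2\rho_n$ for each sign pattern, with the $2^{2n}$ patterns contributing an extra factor $n$) then gives $\|A-\mathcal{A}\|_{\infty\to1}=O_p(n^{3/2}\rho_n^{1/2})$; this is exactly the step where the weak assumption $n\rho_n\rightarrow\infty$ (rather than $n\rho_n/\log n\rightarrow\infty$) suffices. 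For the covariate term, $Z_{ij}\le1$ and $|(XX^T)_{ij}|\le\|\bm{x}_i\|_2\|\bm{x}_j\|_2$ give $\gamma_n|\langle XX^T,\hat Z-Z^*\rangle|\le\gamma_n(\sum_i\|\bm{x}_i\|_2)^2$, and sub-Gaussianity of $\|\bm{x}\|_2$ together with the law of large numbers makes this $O_p(\gamma_n n^2)=o_p(\rho_n n^2)$ since $\gamma_n=o(\rho_n)$. Combining with the margin bound and dividing by $\rho_n n^2$ yields $T_{\mathrm{out}}(\hat Z)/n^2\le C(n\rho_n)^{-1/2}+o_p(1)\rightarrow0$ in probability. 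Because $Z^*_{ij}=0$ off-block and $Z^*_{ij}=1$ within-block, the constraint $\sum_{ij}\hat Z_{ij}=\lambda_n$ converts small between-block mass into a matching bound on the within-block deficit, so that $n^{-2}\|\hat Z-Z^*\|_1\overset{P}{\rightarrow}0$ and hence $n^{-2}\|\hat Z-Z^*\|_F^2\overset{P}{\rightarrow}0$.

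Finally I would run the standard K-means perturbation argument. The rows of $Z^*$ are community-indicator vectors, so rows from distinct communities are separated by $\|\cdot\|_2^2\ge 2\min_k n_k=\Theta(n)$ while rows from the same community coincide. A $(1+\epsilon)$-optimal K-means solution therefore mislabels at most $O(\|\hat Z-Z^*\|_F^2/\min_k n_k)$ nodes, so the misclassification fraction is $O_p(\|\hat Z-Z^*\|_F^2/n^2)=o_p(1)$, which is the claim. The main obstacle is the sparse-regime concentration of $\|A-\mathcal{A}\|_{\infty\to1}$; everything else---the margin bound, the covariate perturbation, and the K-means conversion---is either elementary or a routine adaptation of existing arguments, and the covariate contribution is deliberately dominated through the choice $\gamma_n=o(\rho_n)$.
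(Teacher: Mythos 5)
Your proposal is correct and follows essentially the same route as the paper's proof: the Gu\'edon--Vershynin reference-matrix argument with the oracle $Z^*=\bar Z(\bm{c})$, a margin bound driven by the assortativity gap and the constraint $\sum_{ij}Z_{ij}=\lambda_n$, Grothendieck's inequality plus Bernstein concentration over sign vectors for $\|A-\mathbb{E}(A\mid\bm{c})\|_{\infty\to 1}$, and the standard K-means perturbation step. The one local difference is the covariate term: the paper centers $\gamma_n XX^T$ at $\gamma_n\mathbb{E}(X\mid\bm{c})\mathbb{E}(X^T\mid\bm{c})$, folds the centered population piece into the margin constants $U,L$, and controls the fluctuation via sub-Gaussian concentration, whereas you bound the whole term crudely by $\gamma_n(\sum_i\|\bm{x}_i\|_2)^2=O_p(\gamma_n n^2)=o_p(\rho_n n^2)$ using only the law of large numbers --- a valid simplification (given $\gamma_n=o(\rho_n)$) that in fact shows sub-Gaussianity is not needed for this step.
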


The proof of Theorem \ref{thm:sdp} will provide a clear picture on why \eqref{sdp:formula} is constructed in that way. But to avoid digression, we defer the proof to the Appendix. As seen from Theorem \ref{thm:sdp}, though the SDP works under more assumptions than previously discussed likelihood based methods, the conditions are not very stringent. The crucial assumption $\min_a \bar{B}_{aa} > \max_{a\neq b} \bar{B}_{ab}$ requires denser edge connections within communities than between them, which is satisfied by most real networks. Furthermore, the two tuning parameters $(\gamma_n,\lambda_n)$ coupled with \eqref{sdp:formula} need to be chosen appropriately. The tuning $\gamma_n$ trades off the information from two different sources: network edge and nodal covariates. The way we incorporate nodal covariates has the same spirit as \cite{binkiewicz2014covariate} does in spectral clustering. From the simulation studies in the next section, we shall see that a flexible choice of $\gamma_n$ can lead to satisfactory results. The choice of  parameter $\lambda_n$ in Theorem \ref{thm:sdp} depends on the unknown truth in a seemingly restrictive way. However, we will demonstrate through simulations that the community detection results are quite robust to the choice of $\lambda_n$. 

The convex optimization problem \eqref{sdp:formula} can be readily solved by standard semidefinite programming solvers such as SDPT3 \citep{tutuncu2003solving}. However, those solvers are based on interior-point methods, and are computationally expensive when the network size $n$ is more than a few hundred. To overcome this limit, we apply the alternating direction method of multipliers (ADMM) to develop a more scalable algorithm for solving \eqref{sdp:formula}. We start by a brief description of the generic ADMM algorithm with the details available in the excellent tutorial by \cite{boyd2011distributed}. In general, ADMM solves problems in the form 
\begin{align}\label{admm:form}
&\mbox{minimize}~~~~ f(\bm{y})+h(\bm{z}) \\
&\mbox{subject~to}~~~B\bm{y}+D\bm{z}=\bm{w}, \nonumber 
\end{align}
where $\bm{y} \in \mathbb{R}^m, \bm{z} \in \mathbb{R}^n, B \in \mathbb{R}^{q \times m}, D \in \mathbb{R}^{q \times n}, \bw \in \mathbb{R}^q;$ and $f(\bm{y}), h(\bm{z})$ are two convex functions. The algorithm takes the following iterations at step $t$. 
\begin{align}
\bm{y}^{t+1}&=\argmin_{\bm{y}} \Big(  f(\bm{y})+(\xi/2)\|B\bm{y}+D\bm{z}^t-\bm{w}+\bm{u}^t\|_2^2   \Big), \label{admm:iter1}\\
\bm{z}^{t+1}&= \argmin_{\bm{z}}\Big(h(\bm{z})+(\xi/2)\|B\bm{y}^{t+1}+D\bm{z}-\bm{w}+\bm{u}^t\|_2^2     \Big), \label{admm:iter2}\\
\bm{u}^{t+1}&=\bm{u}^t+B\bm{y}^{t+1}+D\bm{z}^{t+1}-\bm{w}, \label{admm:iter3}
\end{align}
with $\xi>0$ being a step size constant.

To use this framework, we reformulate \eqref{sdp:formula} as:
\begin{align*}
\mbox{minimize}~~~&l(Z\succeq 0)+l(0\leq Y_{ij}\leq 1, 1\leq i, j \leq n)+l\big(\sum_{ij}W_{ij}=\lambda_n\big)\\
&- \langle A +\gamma_n XX^T, Z \rangle \\
\mbox{subject to}~~~&Y=Z, Y=W,
\end{align*}
where $Z,Y,W \in \mathbb{R}^{n\times n}; l(Z\succeq 0)$ equals 0 if $Z\succeq 0$ and $+\infty$ otherwise; similar definitions hold for other $l(\cdot)$. If we set $\bm{y}=({\mbox{vec}}(Y), {\mbox{vec}}(Y))^T \in \mathbb{R}^{2n^2}, \bm{z}=({\mbox{vec}}(W),{\mbox{vec}}(Z))^T \in \mathbb{R}^{2n^2}, B=-D={I}_{2n^2}\in \mathbb{R}^{2n^2\times 2n^2}, \bm{w}=\bm{0}\in \mathbb{R}^{2n^2}$, where {vec}($\cdot$) denotes the vectorized version of a matrix, then the problem above becomes an instance of \eqref{admm:form}. The corresponding iterations have the following expressions:
\begin{align*}
Y^{t+1}&=\argmin_{0\leq Y_{ij}\leq 1} \Big( \|Y-W^t+U^t\|_F^2+ \|Y-Z^t+V^t\|_F^2    \Big), \\
W^{t+1}&= \argmin_{\sum_{ij}W_{ij}=\lambda_n} \|Y^{t+1}-W+U^t\|_F^2, \\
Z^{t+1}&=\argmin_{Z\succeq 0 } \Big(- \langle A +\gamma_n XX^T, Z \rangle +(\xi/2)\|Y^{t+1}-Z+V^t\|_F^2   \Big),\\
U^{t+1}&=U^t+Y^{t+1}-W^{t+1}, V^{t+1}=V^t+Y^{t+1}-Z^{t+1},
\end{align*}
where $\|\cdot\|_F$ denotes the Frobenius norm. It is not hard to see that each iteration above has a closed form update with the details summarized in Algorithm 1.\footnote{In Step (c), $P\Lambda P^T$ denotes the spectral decomposition; $\Lambda_+$ represents the truncated (keep positive elements) version of $\Lambda$.}

\begin{algorithm}
\caption{Solving \eqref{sdp:formula} via ADMM}
\begin{algorithmic}
\State Input: initialize $Z^0=A+\gamma_n XX^T, W^0=Y^0=U^0=V^0=0$, number of iterations $\mathcal{T}$, step size $\xi$. 
\State For $t=0,\dots, \mathcal{T}-1$
\begin{itemize}
\item[](a) $Y^{t+1} =\min \{\max \{0,  \frac{1}{2}(W^t+Z^t-U^t-V^t)\},1  \}$.
\item[](b) $W^{t+1}=Y^{t+1}+U^t + n^{-2}[{\lambda_n -\sum_{ij}(Y_{ij}^{t+1}+U_{ij}^t)}]\bm{1}\bm{1}^T$.
\item[](c) $Z^{t+1}=P\Lambda_+P^T$, where $Y^{t+1}+V^t+\xi^{-1}\cdot (A+\gamma_n XX^T)=P\Lambda P^T$.
\item[](d) $U^{t+1}=U^t+Y^{t+1}-W^{t+1}, V^{t+1}=V^t+Y^{t+1}-Z^{t+1}$.
\end{itemize}
\State Output $Z^{\mathcal{T}}$. 
\end{algorithmic}
\end{algorithm}

\subsection{Coordinate Ascent Scheme.}\label{sds:one}

As we may see, the problem formulations in \eqref{vi:mf} and \eqref{pmle:eq} are not suitable for gradient or Hessian based iterative algorithms, because they either involve discrete variables or have non-trivial constraints. The variables involved in those optimization problems can be divided into community assignment related and others. Naturally, we will adopt the iterative scheme that alternates between these two types of variables. 

\subsubsection{Computing Variational Estimates.}

To compute the variational estimates in \eqref{vi:mf}, we follow the EM style iterative fashion by maximizing the objective function in \eqref{vi:mf} with respect to $\bm{\theta}$ and $Q \in \mathcal{Q} $ alternatively. Specifically, we are solving

\begin{align}
\bm{\beta}^{t+1}=&\argmax_{\bm{\beta}\in \mathbb{R}^{pK}, \bm{\beta}_K=\bm{0}} \sum_{i} \Big[ \big(\sum_k q^t_{ik}\bm{\beta}_k\big)^T\bm{x}_i-\log\big(\sum_{k}e^{\bm{\beta}_k^T\bm{x}_i}\big) \Big], \label{em:two}
\end{align}
\begin{align}
B^{t+1}=&\argmax_{B\in \mathbb{R}^{K \times K}, B^T = B} \sum_{ab} \Big [\log B_{ab}\cdot \sum_{i<j} A_{ij}q^t_{ia}q^t_{jb}+\log(1-B_{ab}) \cdot \sum_{i<j}(1-A_{ij})q^t_{ia}q^t_{jb}  \Big],  \label{em:one}
\end{align}
\begin{align}\label{em:three} 
\{q^{t+1}_{ik}\}= \argmax_{\{q_{ik}\}}&\sum_{ab} \Big [\log B^{t+1}_{ab}\cdot \sum_{i<j} A_{ij}q_{ia}q_{jb}+\log(1-B^{t+1}_{ab}) \cdot \sum_{i<j}(1\\&-A_{ij})q_{ia}q_{jb}  \Big]
+\sum_{i}\sum_k q_{ik}(\bm{\beta}^{t+1}_k)^T\bm{x}_i-\sum_i \sum_k q_{ik}\log q_{ik}.  \nonumber 
\end{align}
Note that the objective function in \eqref{em:two} takes a similar form as the log-likelihood function of multi-logistic regression model. We hence use Newton-Raphson algorithm in the same way as we fit multi-logistic regression model, to compute the update in \eqref{em:two}. This corresponds to Step (a) of Algorithm 2, in which we have used the name ``FitMultiLogistic" there to denote the full step with a bit abuse of notation. In addition, the update in \eqref{em:one} has an explicit solution, which corresponds to Step (b) in Algorithm 2.  Regarding the update for $\{q_{ik}\}_{ik}$ in \eqref{em:three}, unfortunately, the optimization is non-convex and does not have analytical solutions.  We then implement an inner blockwise coordinate ascent loop to solve it. In particular, we update $\{q_{ik}\}_{k=1}^K$ one at a time:
\begin{align*}
\{q_{ik}\}_k=\argmax_{\{q_{ik}\}_k} &\sum_{k} q_{ik}  \cdot \Big[ \sum_{b} \sum_{j \neq i} \Big(A_{ij}q_{jb} \cdot \log B_{kb}+(1-A_{ij})q_{jb} \cdot \log(1\\&-B_{kb})\Big) \Big]+  
 \sum_k q_{ik} \bm{\beta}^T_k\bm{x}_i - \sum_k q_{ik}\log q_{ik}.
\end{align*}
It is straightforward to show that the update above has closed forms:
\begin{align*}
q_{ik}=\frac{e^{a_k}}{\sum_{k=1}^K e^{a_k}}, \quad a_k= \bm{\beta}_k^T\bm{x}_i+ \sum_{b} \sum_{j \neq i} q_{jb}\cdot \Big(A_{ij} \log B_{kb}+(1-A_{ij}) \log(1-B_{kb})\Big).
\end{align*}
This yields Step (c) for Algorithm 2. After computing $\{q^{\mathcal{T}}_{ik}\}, \bm{\beta}^{\mathcal{T}}, B^{\mathcal{T}}$ via Algorithm 2, we  calculate the community assignment estimate $\check{\bm{c}}$ based on \eqref{vi:mf2}. This could be done by coordinate ascent iterations, like Step (c) in Algorithm 3 (to be introduced in Section \ref{subsec:compute-MPLE}. Alternatively, we can use the following approximated posterior distribution $\{q^{\mathcal{T}}_{ik}\}$:  
\begin{align}
\check{c}_i=\argmax_{1\leq k \leq K}q^{\mathcal{T}}_{ik}, \quad  1 \leq i \leq n. \label{approx:c}
\end{align}
In numerical studies, we adopt the approach in \eqref{approx:c}, which is computationally more efficient. 
\begin{algorithm}
\caption{Solving \eqref{vi:mf} via iterating between $(\bm{\beta}, B)$ and $Q$.}
\begin{algorithmic}
\State Input: initialize $\{q^0_{ik}\}$, number of iterations $\mathcal{T}$ 
\State For $t=0,\dots, \mathcal{T}-1$
\begin{itemize}
\item[](a) $\bm{\beta}^{t+1}=$ FitMultiLogistic$(X, \{q^t_{ik}\})$
\item[](b) $B^{t+1}_{ab}=\frac{\sum_{i< j} A_{ij} q^t_{ia}q^t_{jb}}{\sum_{i< j} q^t_{ia}q^t_{jb}}$
\item[](c) Update $\{q_{ik}^{t+1}\}$ via Repeating 
\begin{itemize}
\item[] For $i=1,\dots, n$
\begin{itemize}
\item[] $\log q_{ik} \propto (\bm{\beta}_k^{t+1})^T\bm{x}_i +\sum_b \sum_{j\neq i} q_{jb}\cdot \big(A_{ij}\log B^{t+1}_{kb}+ (1-A_{ij})\log (1-B^{t+1}_{kb}) \big)$ 
\end{itemize}
\end{itemize}
\end{itemize}
\State Output $\{q_{ik}^{\mathcal{T}}\}, \bm{\beta}^{\mathcal{T}}, B^{\mathcal{T}}$.
\end{algorithmic}
\end{algorithm}

\subsubsection{Computing Maximum Profile Likelihood Estimates.\label{subsec:compute-MPLE}}
Similarly to the variational estimates, we maximize the likelihood function in \eqref{pmle:eq} with respect to $(\bm{\beta},B)$ and $\bm{c}$ iteratively. In other words, we  solve
\begin{align}
\bm{\beta}^{t+1}=&\argmax_{\bm{\beta}\in \mathbb{R}^{pK}, \bm{\beta}_K=\bm{0}} \sum_{i} \Big[ \bm{\beta}^T_{c^t_i}\bm{x}_i-\log\big(\sum_{k}e^{\bm{\beta}_k^T\bm{x}_i}\big) \Big], \label{pmle:two}
\end{align}
\begin{align}
B^{t+1}_{ab}=&\argmax_{B_{ab}}~ \log B_{ab} \cdot \sum_{i<j} A_{ij}\mathbbm{1}(c^t_i=a, c^t_j=b)-B_{ab} \cdot \sum_{i<j} \mathbbm{1}(c^t_i=a, c^t_j=b),   \label{pmle:one}
\end{align}
\begin{align}
 \bm{c}^{t+1}=&\argmax_{\bm{c} \in \{1,\dots, K\}^n} \sum_{ab} \Big [\log B^{t+1}_{ab}\cdot \sum_{i<j} A_{ij}\mathbbm{1}(c_i=a, c_j=b) \label{pmle:three}\\
&-B^{t+1}_{ab} \cdot \sum_{i<j}\mathbbm{1}(c_i=a, c_j=b)  \Big] +\sum_{i}(\bm{\beta}^{t+1}_{c_i})^T\bm{x}_i. \nonumber 
\end{align}
Here, solving \eqref{pmle:two} is equivalent to computing the maximum likelihood estimate of multi-logistic regression. This is carried out in Step (a) of Algorithm 3. In addition, it is straightforward to see that Step (b) in Algorithm 3 is the solution to \eqref{pmle:one}.  For computing $\bm{c}^{t+1}$ in \eqref{pmle:three}, we update its element one by one, as shown by Step (c) in Algorithm 3.

\begin{algorithm}
\caption{Solving \eqref{pmle:eq} via iterating between $(\bm{\beta}, B)$ and $\bm{c}$.}
\begin{algorithmic}
\State Input: initialize $\{c^0_i\}$, number of iterations $\mathcal{T}$
\State For $t=0, \dots, \mathcal{T}-1$
\begin{itemize}
\item[](a) $\bm{\beta}^{t+1}=$ FitMultiLogistic$(X, \bm{c}^{t})$
\item[](b) $B^{t+1}_{ab}=\frac{\sum_{i < j} A_{ij} \mathbbm{1}(c^t_i=a, c^t_j=b)}{\sum_{i < j} \mathbbm{1}(c^t_i=a, c^t_j=b)}$
\item[](c) Update $\{c^{t+1}_i \}$ via Repeating
\begin{itemize}
\item[] For $i=1, \dots, n$
\begin{itemize}
\item[] $c_i=\argmax_{1 \leq k \leq K} (\bm{\beta}_{k}^{t+1})^T\bm{x}_i+\sum_b\sum_{j\neq i}\mathbbm{1}(c_j=b)\cdot (A_{ij}\log B^{t+1}_{kb}-B^{t+1}_{kb})$ 
\end{itemize}
\end{itemize}
\end{itemize}
\State Output $\{c^{\mathcal{T}}_i\}, \bm{\beta}^{\mathcal{T}}$.
\end{algorithmic}
\end{algorithm}

\subsubsection{Variational Estimates vs. Maximum Profile Likelihood Estimates.}

So far we have studied the theoretical properties of variational and maximum profile likelihood estimates, and developed algorithms to compute them. The results in Sections \ref{cvim} and \ref{cmplm} demonstrate that they have the same asymptotic performance under $\frac{n\rho_n}{\log n} \rightarrow \infty$. We now compare the corresponding  algorithms. By taking a close look at Algorithms 2 and 3, we observe that the three steps in the two algorithms share a lot of similarities. Algorithm 2 is essentially a ``soft" version of Algorithm 3 in the following sense: instead of using the community assignment $c_i$ in Algorithm 3, the steps in Algorithm 2 involve the probability of belonging to every possible community. This might remind us of the comparison between the EM algorithm and K-means under Gaussian mixture models. As we will see in Section \ref{num:key}, variational and maximum profile likelihood methods usually lead to  similar numerical results.

\section{Numerical Experiments.}\label{num:key}

In this section, we conduct a detailed experimental study of the SDP defined in \eqref{sdp:formula}, variational and maximum profile likelihood methods on both simulated and real datasets. We use two quantitative measures for evaluating their community detection performance. 
\begin{itemize}
\item[]\emph{Normalized Mutual Information} \citep{ana2003robust}:
\begin{align*}
\mbox{NMI} = \frac{-2 \sum_{i}  \sum_j n_{ij}\log\Big(\frac{n_{ij}\cdot n}{n_{i\cdot}n_{\cdot j}} \Big)}{\sum_i n_{i\cdot}\log\Big(\frac{n_{i\cdot}}{n}\Big)+\sum_j n_{\cdot j} \log\Big(\frac{n_{\cdot j}}{n}\Big)}.
\end{align*}
\item[]\emph{Adjusted Rand Index} \citep{hubert1985comparing}:
\begin{align*}
\mbox{ARI}=\frac{\sum_{ij} {n_{ij} \choose 2}-\frac{\sum_i {n_{i\cdot}\choose 2} \sum_j {n_{\cdot j}\choose 2}}{{n \choose 2}} }{\frac{1}{2}\sum_i {n_{i\cdot}\choose 2} + \frac{1}{2} \sum_j {n_{\cdot j}\choose 2} - \frac{\sum_i {n_{i\cdot}\choose 2} \sum_j {n_{\cdot j}\choose 2}}{{n \choose 2}} }.
\end{align*}
\end{itemize}
In the above expressions, $n_{i\cdot}$ denotes the true number of nodes in community $i$, $n_{\cdot j}$ represents the number of nodes in the estimated community $j$ and $n_{ij}$ is the number of nodes belonging to community $i$ but estimated to be in community $j$. Both NMI and ARI  are bounded by $1$, with the value of $1$ indicating perfect recovery while $0$ implying the estimation is no better than random guess. See \cite{steinhaeuser2010identifying} for a detailed discussion.

\subsection{Simulation Studies.}

We set $K=2, \rho_n=\frac{3[\log(n)]^{1.5}}{4n}, P(c=1)=P(c=2)=0.5, \bar{B}=\left (
\begin{array}{cc}
1.6 & 0.4 \\
0.4& 1.6
\end{array} \right )$.  We consider the following two different scenarios.
\begin{itemize}
\item[(A).] $p=4, \bm{x} \mid c=1 \sim N(\bmu, {I}_4), \bm{x} \mid c=2 \sim N(-\bmu, {I}_4), \bmu=(0, 0.4, 0.6, 0.8)^T$, where ${I}_4 \in \mathbb{R}^{4\times 4}$ is the identity matrix.
\item[(B).] $p=4, (x_1,x_2) \mid c=1 \sim N(\bmu, \Sigma), (x_1,x_2) \mid c=2 \sim N(-\bmu, \Sigma), \bmu=(0.5, 0.5)^T, \Sigma_{11}=\Sigma_{22}=1, \Sigma_{12}=0.3, x_3 \mid c=1 \sim  \mbox{Bernoulli}(0.6), x_3 \mid c=2 \sim  \mbox{Bernoulli}(0.4), x_4 \mid c=1 \sim \mbox{Uniform}(-0.2, 0.5), x_4 \mid c=2 \sim \mbox{Uniform}(-0.5, 0.2)$; and $(x_1,x_2), x_3$ and $x_4$ are mutually independent. 
\end{itemize}

Note that in Scenario (A), NSBM is the correct model and the first nodal variable is independent of the community assignment; In Scenario (B), NSBM is no longer correct. Under both correct and misspecified models, we would like to: (i) investigate the impacts of the two tuning parameters $(\gamma_n, \lambda_n)$ in the SDP \eqref{sdp:formula}; (ii) examine the effectiveness of the SDP as initialization; (iii) check the performances of variational and maximum profile likelihood methods for utilizing nodal information.

\subsubsection{Tuning Parameters in SDP.}

For both simulation settings, we solve SDP defined in \eqref{sdp:formula} with different tuning parameters via Algorithm 1, with the  number of iterations $\mathcal{T}=100$ and the step size $\xi=1$. We then calculate the NMI of its community detection estimates. Since the ARI gives similar results, we do not show them here for simplicity. The full procedure is repeated 500 times.  

Figure \ref{fig:one} demonstrates the joint impact of the tuning parameters on the SDP performance under Scenario (A). First of all, the comparison of NMI between $\gamma_n=0$ and $\gamma_n>0$ indicates the effectiveness of SDP \eqref{sdp:formula} for leveraging nodal information. We can also see that neither small or large values of $\gamma_n$ lead to optimal performances, verifying the point we discussed in Section \ref{ivco:two} that $\gamma_n$ plays the role of balancing the edge and nodal information. An appropriate choice, as suggested by the four plots, is $\gamma_n = \frac{[\log(n)]^{0.5}}{ n}$, which is consistent with the result of Theorem \ref{thm:sdp}. Regarding the parameter $\lambda_n$, we know from Theorem \ref{thm:sdp} that $\lambda_n=\frac{n^2}{2}$ is the desired choice. Interestingly, Figure \ref{fig:one} shows that a wide range of $\lambda_n$ can give competitive results, as long as the corresponding $\gamma_n$ is properly chosen. For Scenario (B), similar phenomena can be observed in Figure \ref{fig:two}. Note that since the nodal covariates are not as informative as in Scenario (A), the optimal $\gamma_n \approx \frac{0.8[\log(n)]^{0.5}}{ n}$ tends to give more weights to the adjacency matrix.  The results in these two different settings  confirm the implication of Theorem \ref{thm:sdp}, that SDP \eqref{sdp:formula} can work beyond NSBM. 

\begin{figure}[htb]
\centering
\setlength\tabcolsep{1.5pt}
\begin{tabular}{cc}
\includegraphics[width=5.6cm, height=4.8cm]{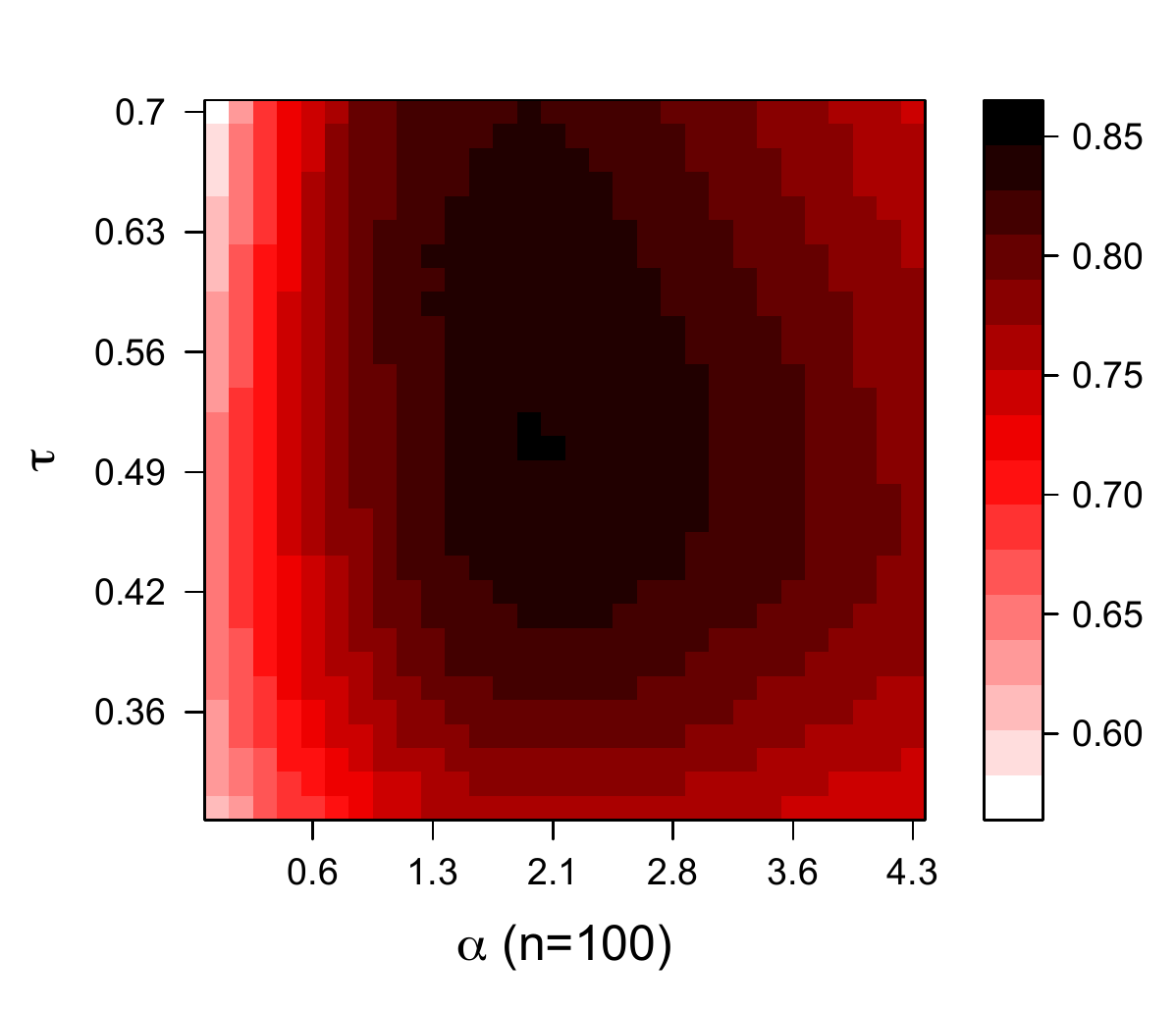} &
\includegraphics[width=5.6cm, height=4.8cm]{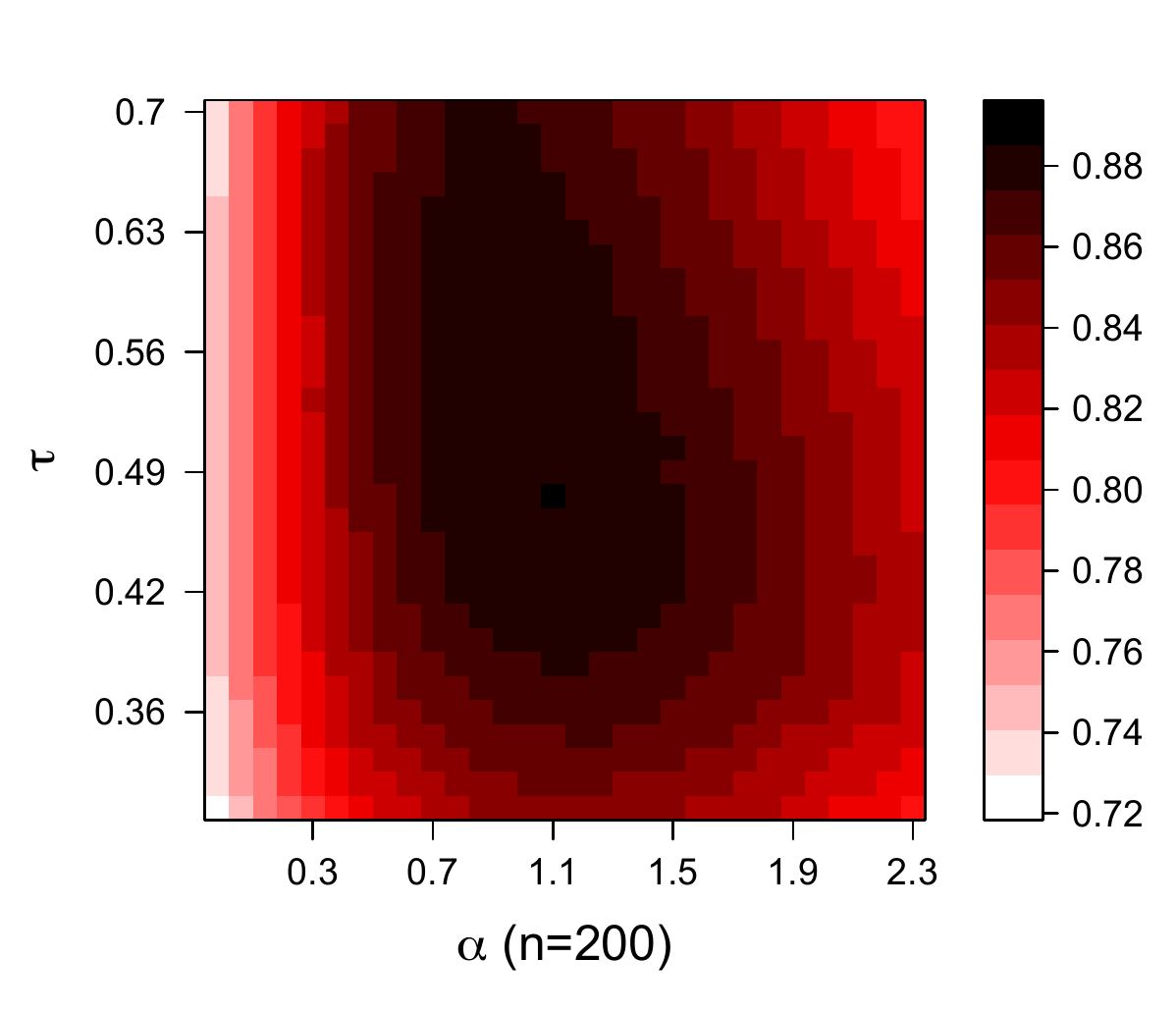} \\ [-0.5cm]
\includegraphics[width=5.6cm, height=4.8cm]{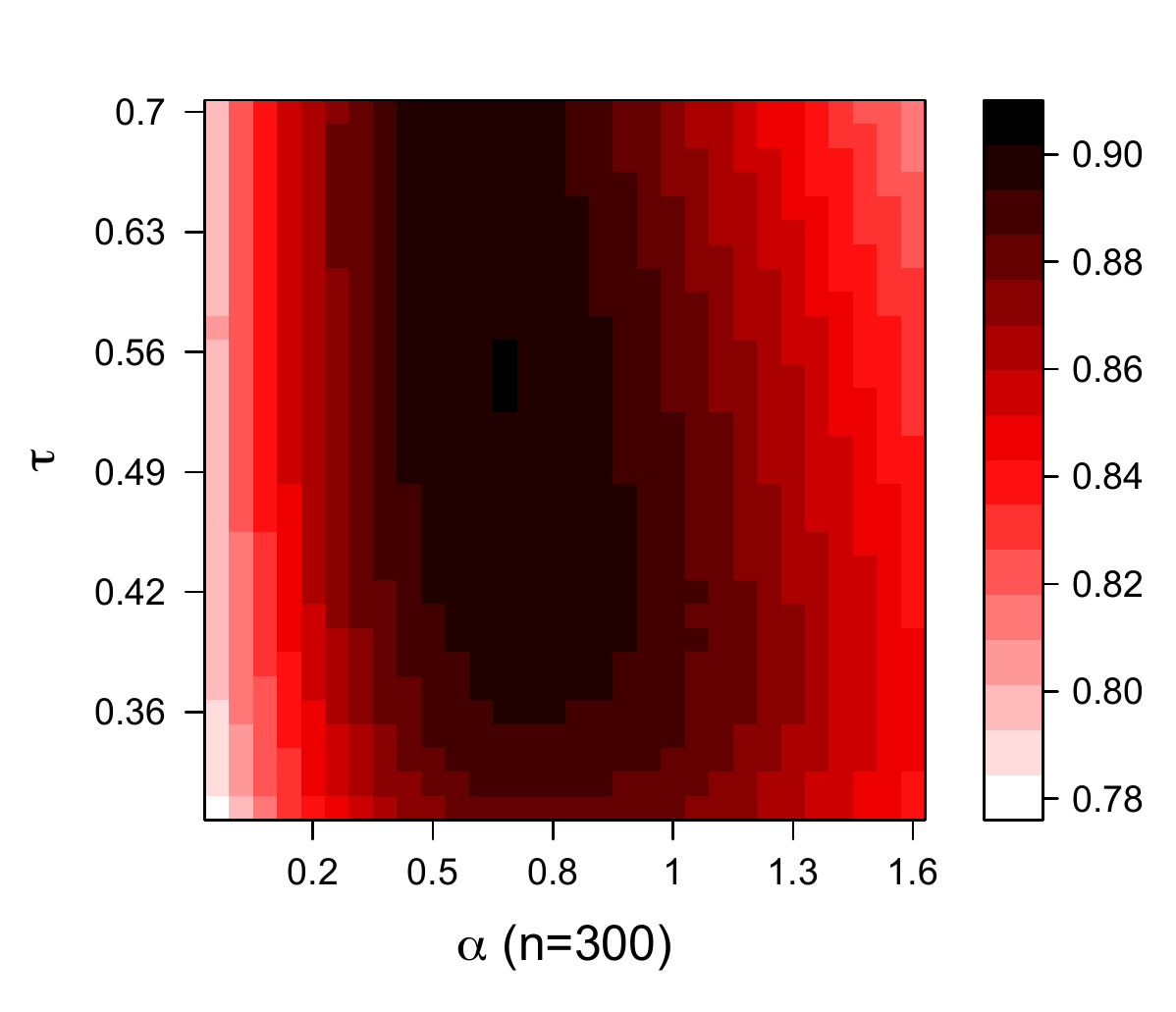} &
\includegraphics[width=5.6cm, height=4.8cm]{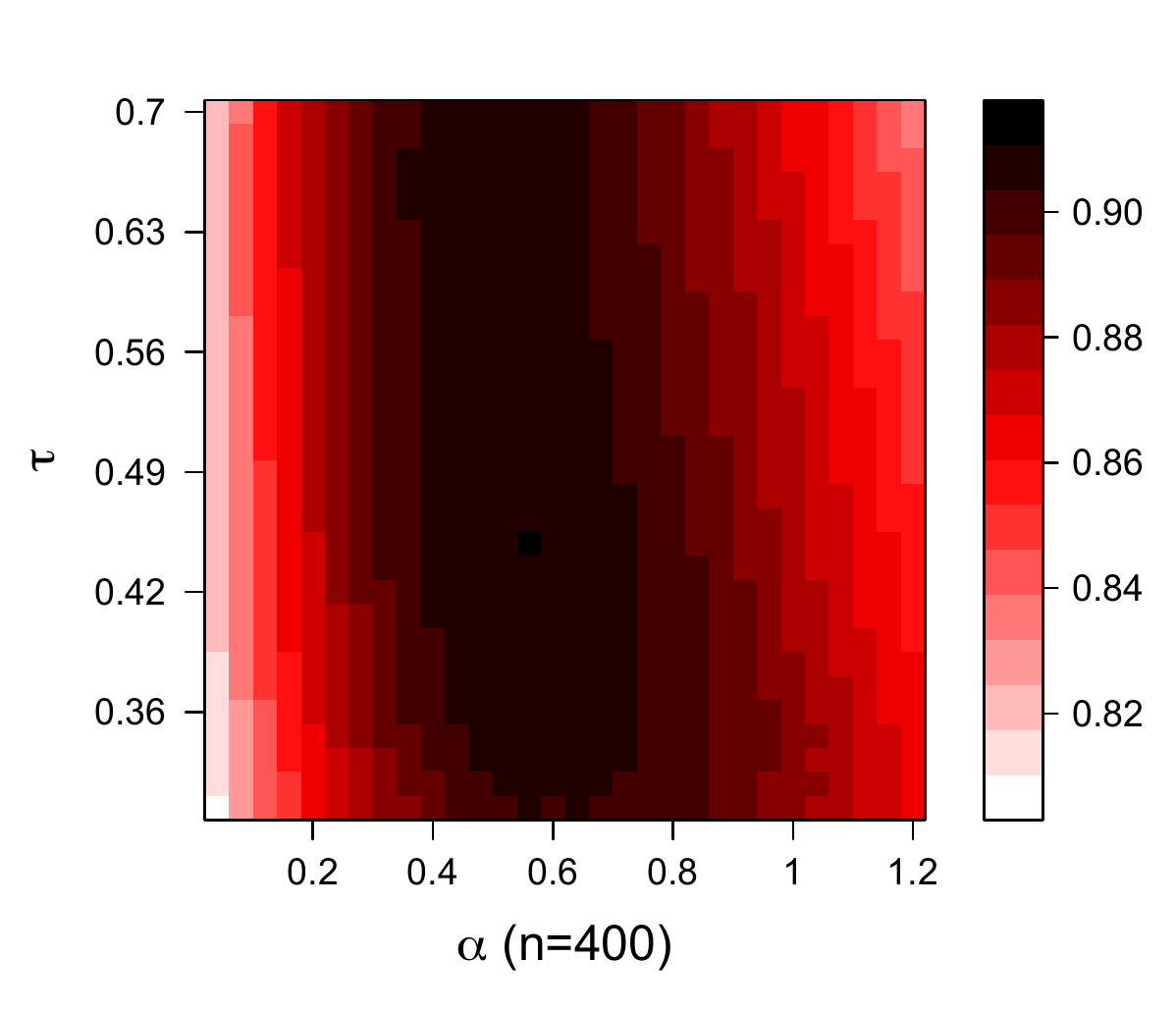} 
\end{tabular}
\vspace{-0.3cm}
\caption{The community detection performance of SDP (measured by NMI), under Scenario (A), with different tuning parameters $(\lambda_n, \gamma_n)$; NMI is averaged over 500 repetitions; We have used the scaled version of the tuning parameters: $\tau= \frac{\lambda_n}{n^2}, \alpha=100\gamma_n$.} \label{fig:one}
\end{figure}

\begin{figure}[htb]
\centering
\setlength\tabcolsep{1.5pt}
\begin{tabular}{cc}
\includegraphics[width=5.6cm, height=4.8cm]{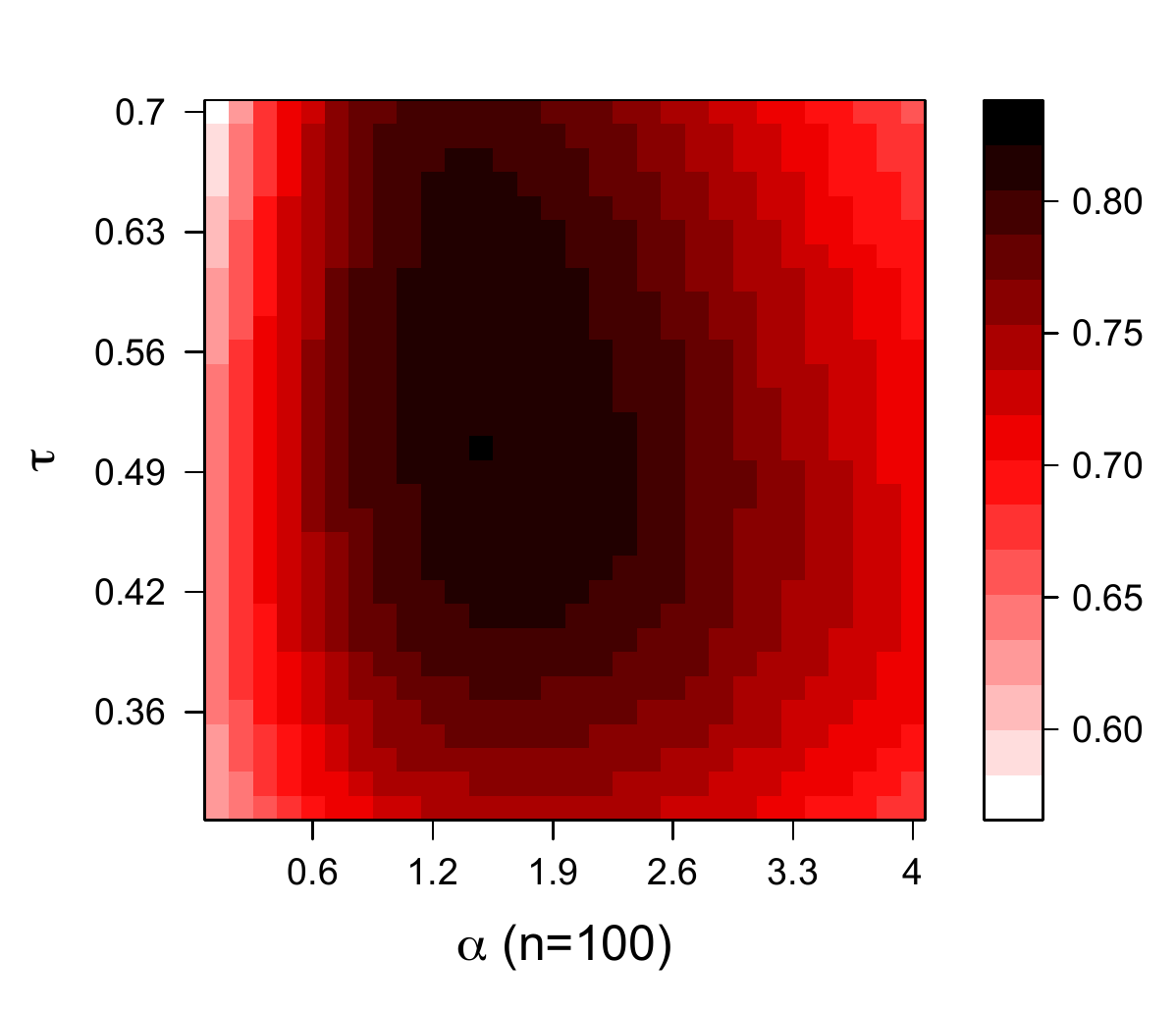} &
\includegraphics[width=5.6cm, height=4.8cm]{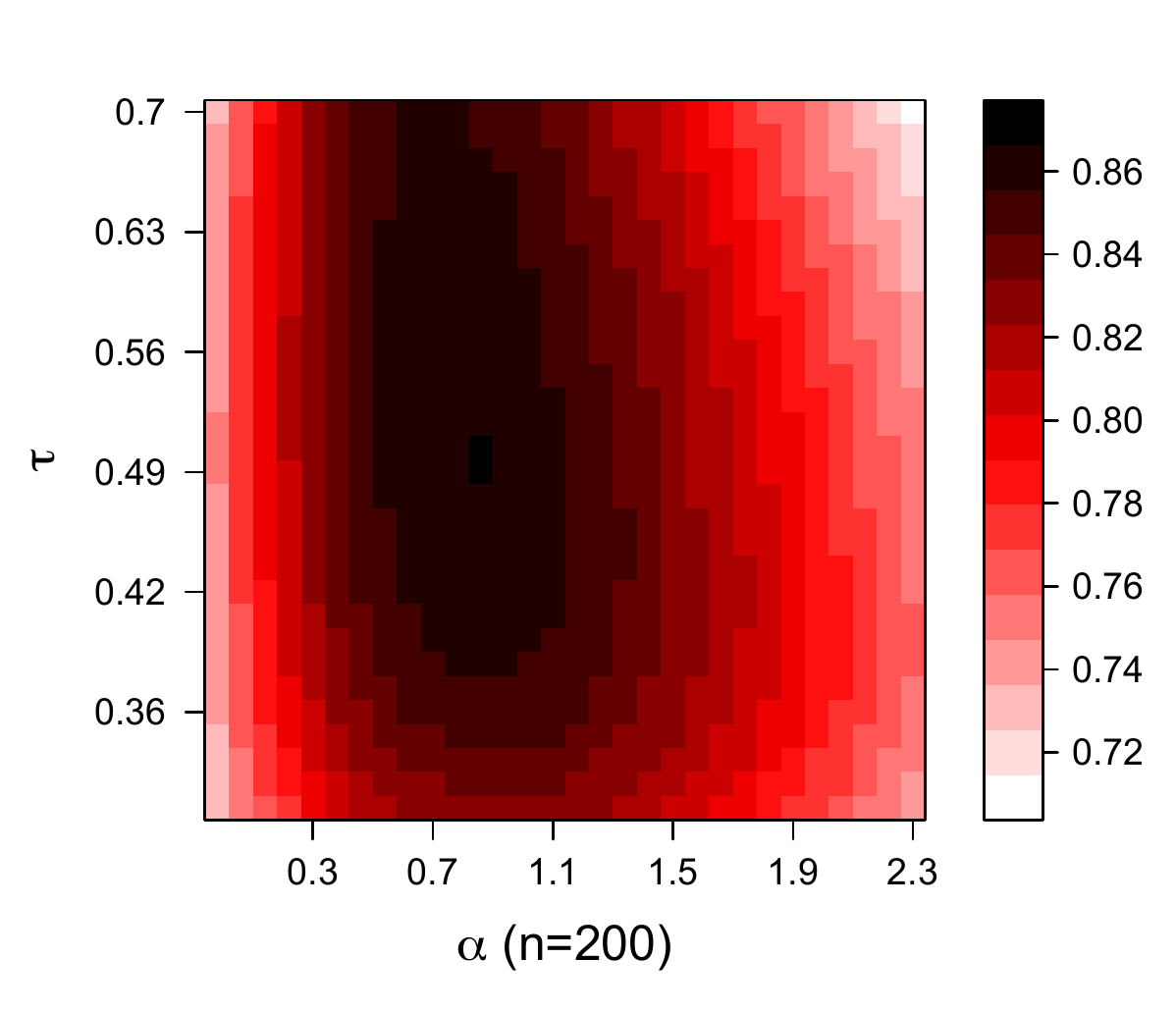} \\ [-0.5cm]
\includegraphics[width=5.6cm, height=4.8cm]{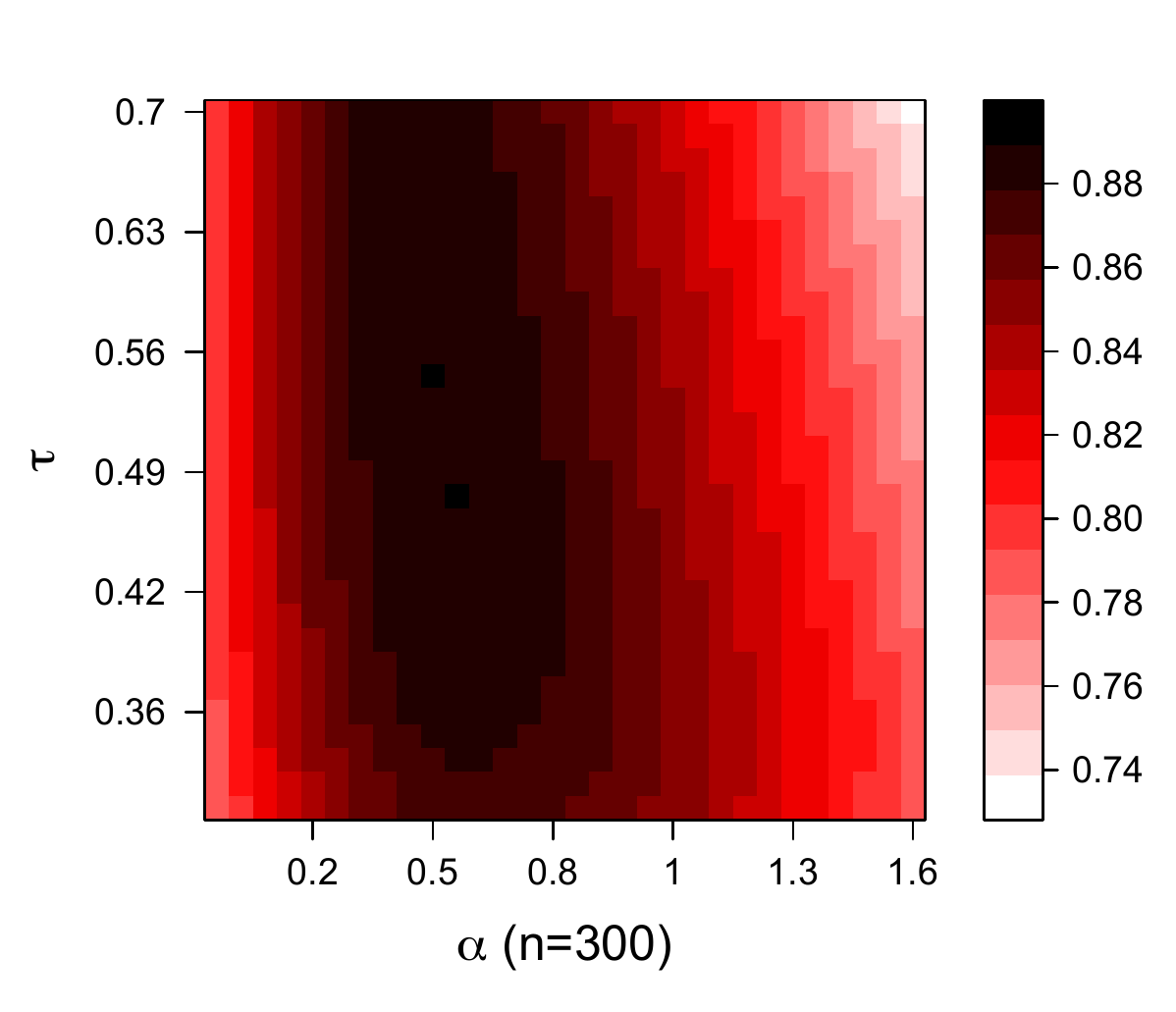} &
\includegraphics[width=5.6cm, height=4.8cm]{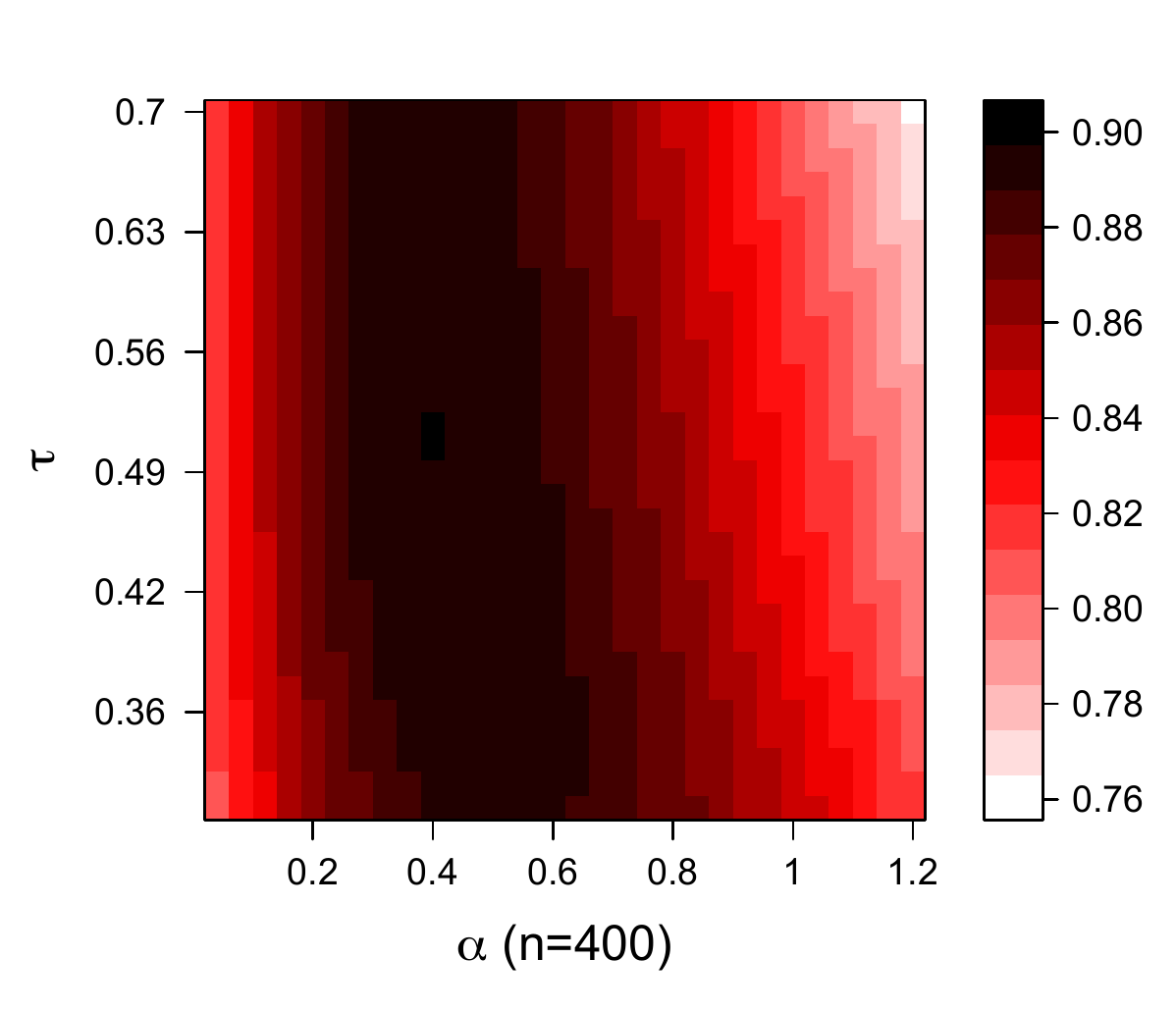} 
\end{tabular}
\vspace{-0.3cm}
\caption{The community detection performance of SDP (measured by NMI), under Scenario (B), with different tuning parameters $(\lambda_n, \gamma_n)$; NMI is averaged over 500 repetitions; We have used the scaled version of the tuning parameters: $\tau= \frac{\lambda_n}{n^2}, \alpha=100\gamma_n$.} \label{fig:two}
\end{figure}

\subsubsection{Community Detection Performance via Variational and Maximum Profile Likelihood Methods.}

We implement variational and maximum profile likelihood methods via Algorithms 2 and 3 respectively, taking the outputs from Algorithm 1 as initialization (called VEM-C and MPL-C respectively). We do not predefine the number of iterations $\mathcal{T}$ in both algorithms, and instead keep iterating until convergence. To investigate the impact of SDP as an initialization, we have additionally implemented both methods with random initialization (called VEM-B and MPL-B respectively): run Algorithms 2 and 3 with random initialization independently multiple times and choose the outputs that give the largest objective function value (e.g., the profile likelihood function). We have also applied both methods for the simulated datasets with nodal attributes removed (called VEM-A and MPL-A respectively). This will be used as a comparison to check the effect of the two methods in incorporating nodal information. We set $\lambda_n=\frac{1}{2}n^2, \gamma_n=\frac{[\log(n)]^{0.5}}{ n}$ for all the implementations of SDP under Scenario (A); and $\lambda_n=\frac{1}{2}n^2, \gamma_n=\frac{0.8 [\log(n)]^{0.5}}{ n}$ under Scenario (B).

Figure \ref{fig:three} shows the community detection results of both methods under Scenario (A). By comparing the four curves in each plot, we can make a list of interesting observations: (1) SDP is a good initialization (MPL-C vs. MPL-B, VEM-C vs. VEM-B); (2) SDP itself already gives reasonable outputs, but the follow-up iterations further improve the performance (SDP vs.. MPL-C, SDP vs. VEM-C); (3) the nodal covariates are helpful for detecting communities, and the two methods have made effective use of it (MPL-A vs. MPL-C, VEM-A vs. VEM-C); (4) the two methods have similar performances when initialized with SDP (MPL-A vs. VEM-A, MPL-C vs. VEM-C). Moreover, we would like to point out the different behavior of the two methods with random initialization. The comparison between the two purple curves (MPL-B vs. VEM-B) implies that compared to the variational method, maximum profile likelihood method has the potential of exploring the parameter space more efficiently, especially when the sample size is large. One possible explanation is that the update of the ``soft" community labels (the distribution $\{q_{ik}\}$) in the variational algorithm may cause it to move very slowly in the parameter space and hence it may take many steps to change a label assignment. Furthermore, note that we can use the asymptotic normality property of the estimators for $\bm{\beta}$ in Theorems \ref{thm:mfvi} and \ref{thm:pmle} to perform variable selection. The results of the Wald test regarding each component of $\bm{\beta}$ are presented in Figure \ref{fig:add}. We see that our methods are able to identify relevant (the last three) and irrelevant (the first) nodal variables.
Regarding Scenario (B), similar observations on the community detection performance can be made from Figure \ref{fig:four}. We thus omit the details. As a final remark, the performances in Scenario (B) indicate that both methods can work to a certain extent of model misspecification.

\begin{figure}[htb]
\centering
\setlength\tabcolsep{-15pt}
\begin{tabular}{cc}
\includegraphics[scale=0.7]{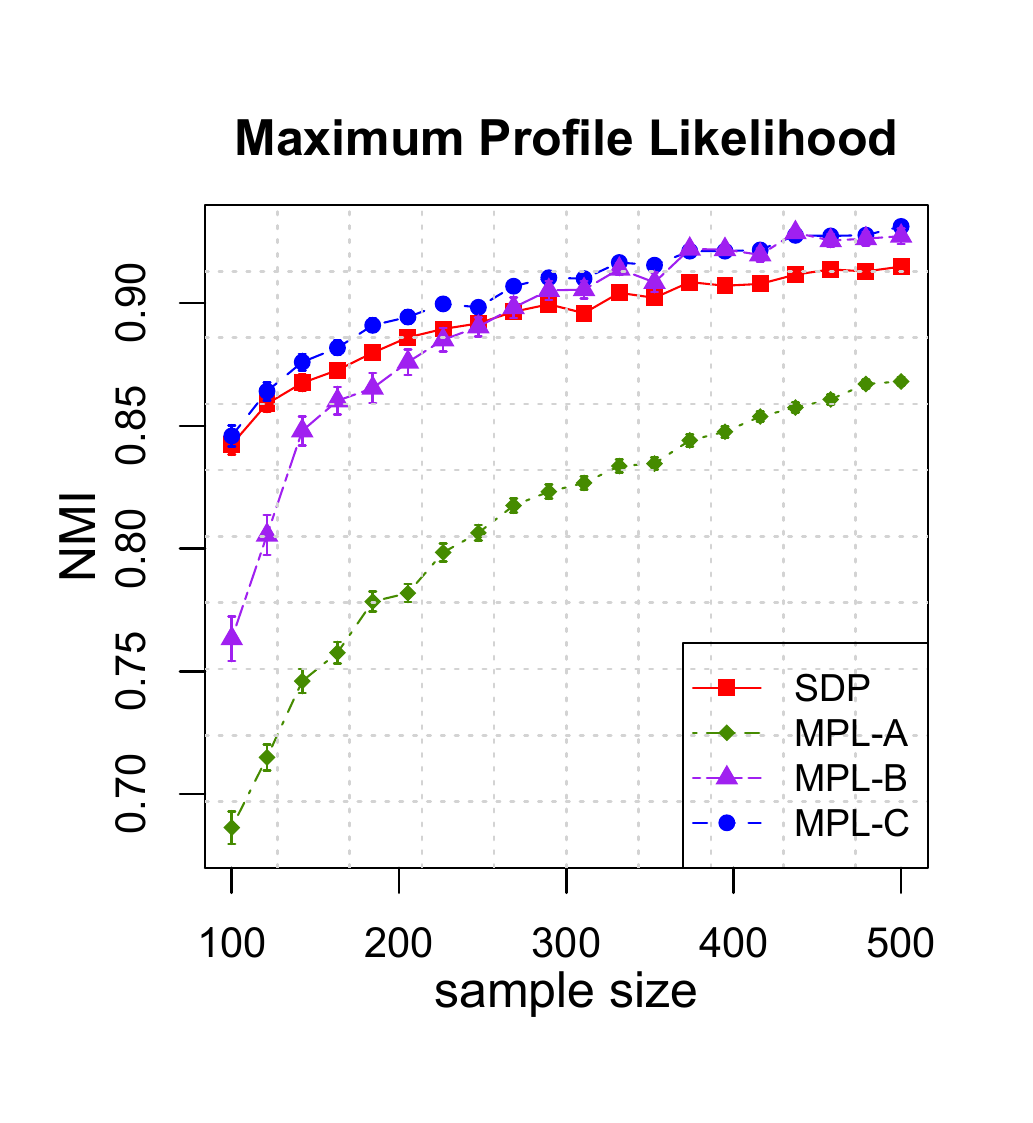} &
\includegraphics[scale=0.7]{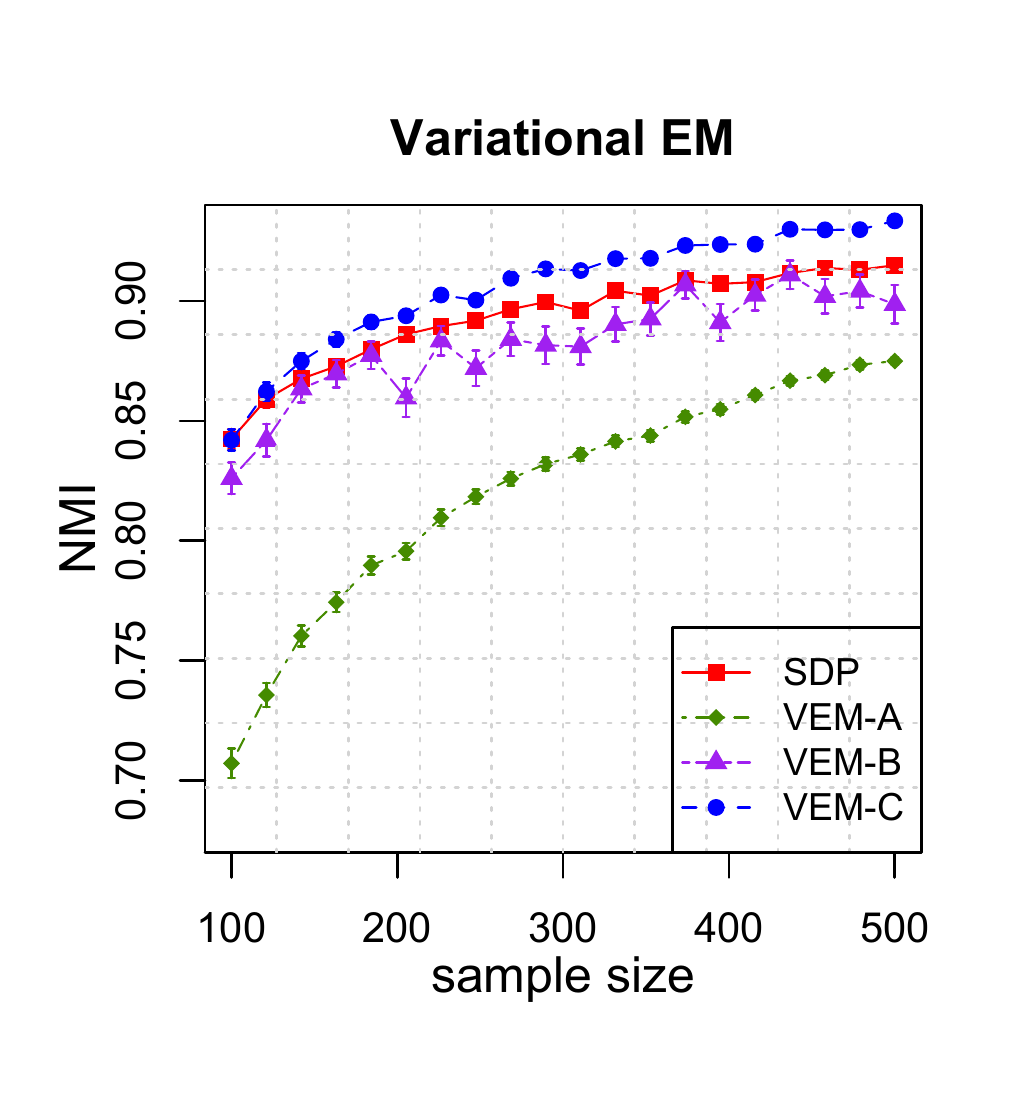} 
\end{tabular}
\vspace{-0.5cm}
\caption{The community detection performances under Scenario (A); The average NMI is calculated over 500 repetitions along with its standard error bar; MPL-A, MPL-B and MPL-C represent the maximum profile likelihood methods with no nodal covariates being used, random initialization, and initialization from SDP, respectively; Similar notations are used for variational method. We have used 15 independent random initializations for maximum profile likelihood method across all the sample size; For variational method, the number of random initializations used starts from 15 for $n=100$ and consecutively increases by 1 for the subsequent sample sizes. } \label{fig:three}
\end{figure}

\begin{figure}[htb]
\centering
\setlength\tabcolsep{-13pt}
\begin{tabular}{cc}
\includegraphics[width=7cm, height=7cm]{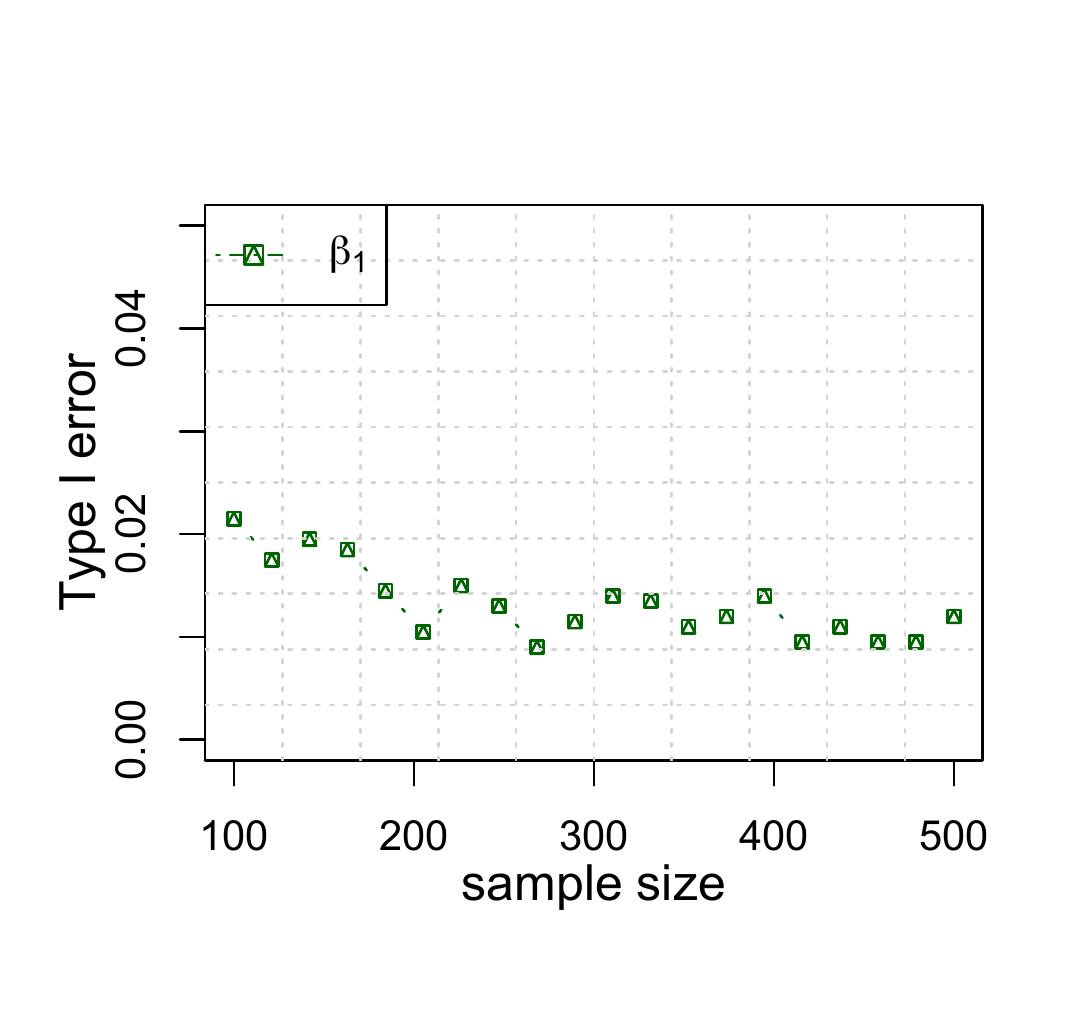} &
\includegraphics[width=7cm, height=7cm]{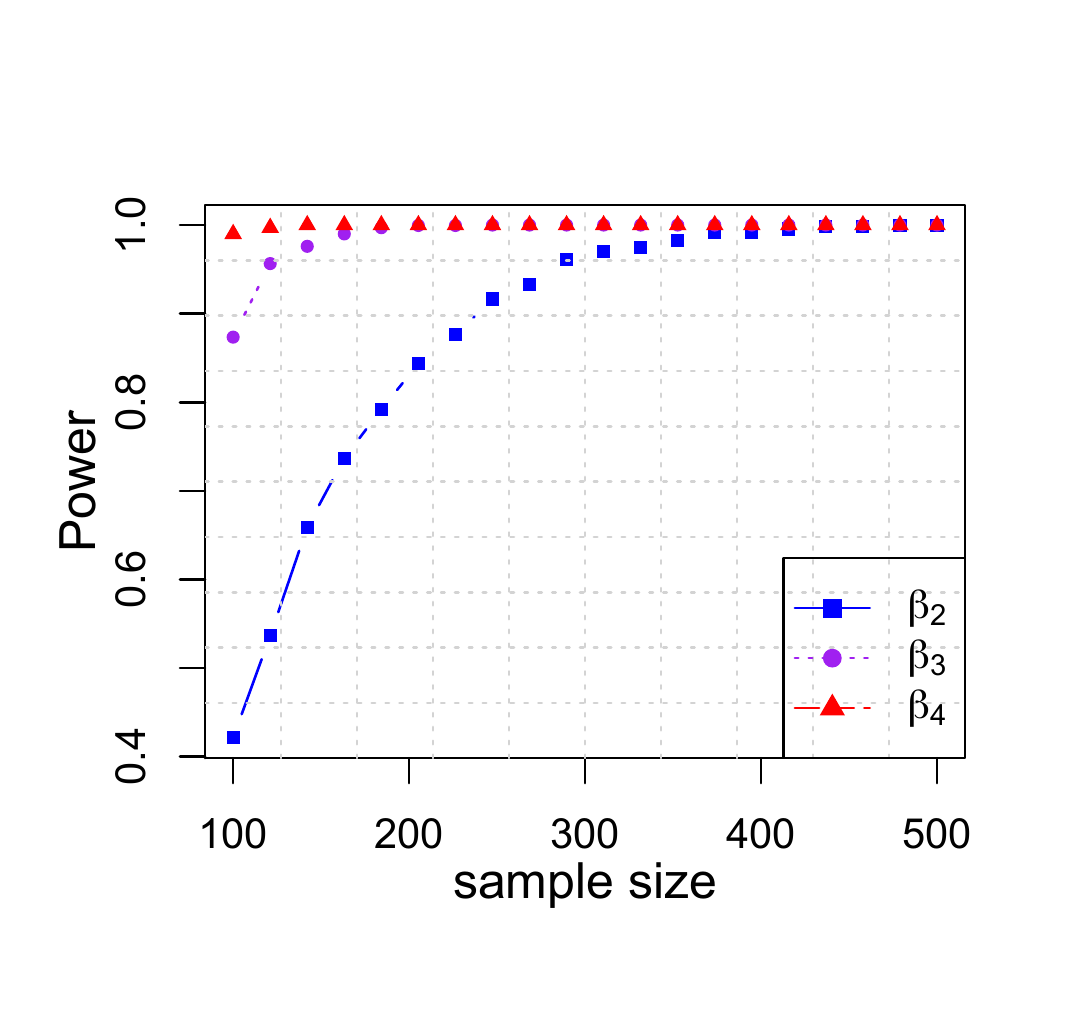} 
\end{tabular}
\vspace{-0.7cm}
\caption{Wald test for each component of $\beta$.  The calculations are averaged over 2000 repetitions. The significance level is set to 0.01. Since both variational and maximum profile likelihood methods give similar results, we only present the result of variational method  for simplicity. } \label{fig:add}
\end{figure}

\begin{figure}[htb]
\centering
\setlength\tabcolsep{0.pt}
\begin{tabular}{cc}
\includegraphics[scale=0.7]{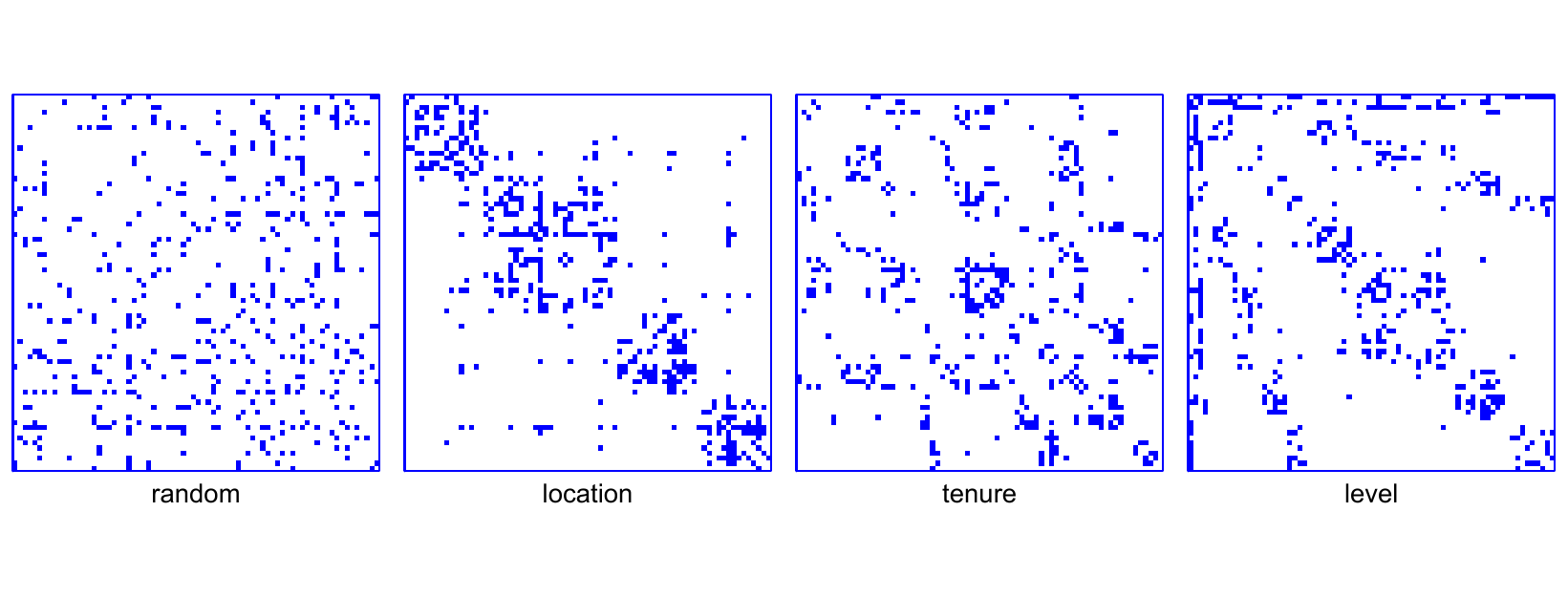} 
\end{tabular}
\vspace{-0.7cm}
\caption{From left to right are the re-ordered adjacency matrices based on random permutation, location, tenure, and organizational level.  } \label{fig:five}
\end{figure}

\begin{figure}[htb]
\centering
\setlength\tabcolsep{-15pt}
\begin{tabular}{cc}
\includegraphics[scale = 0.7]{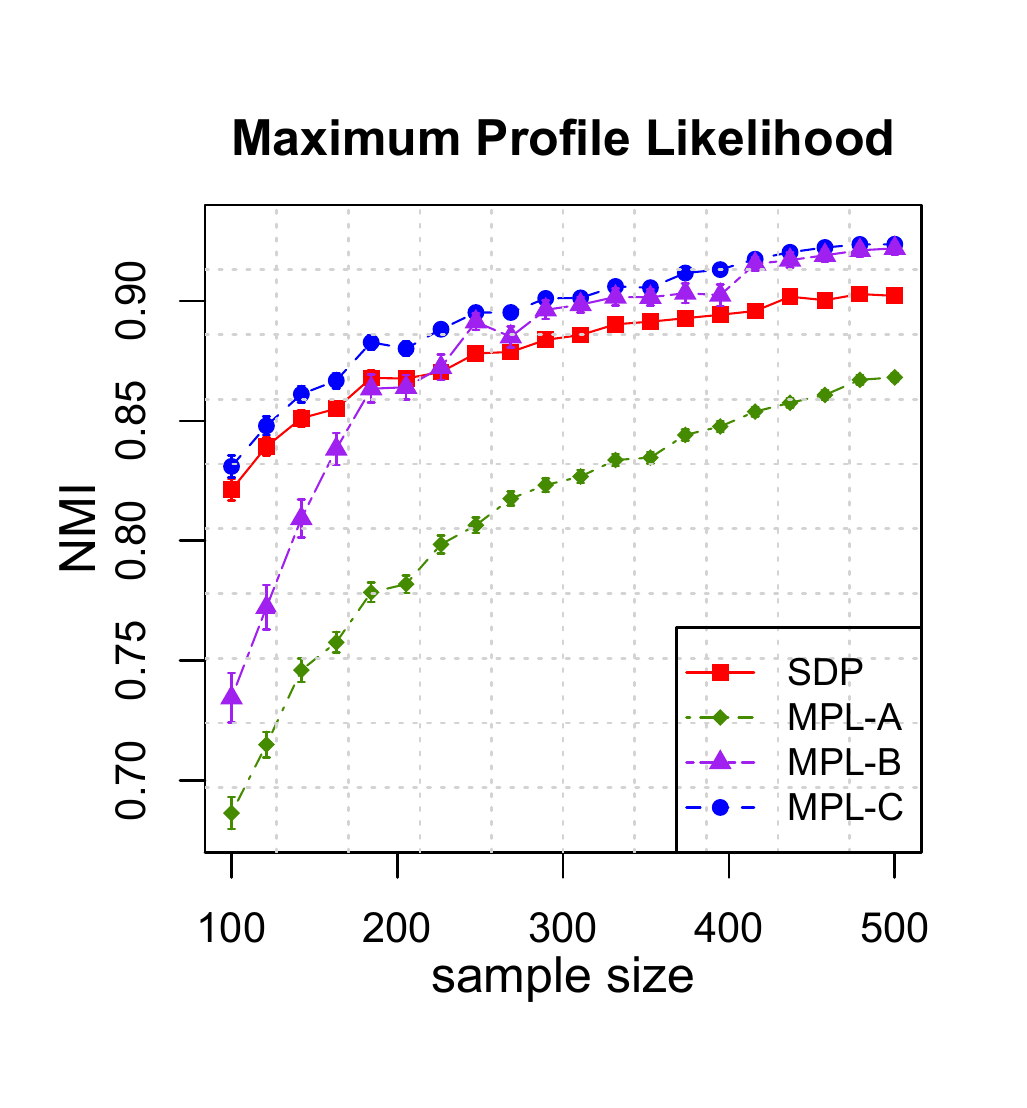} &
\includegraphics[scale = 0.7]{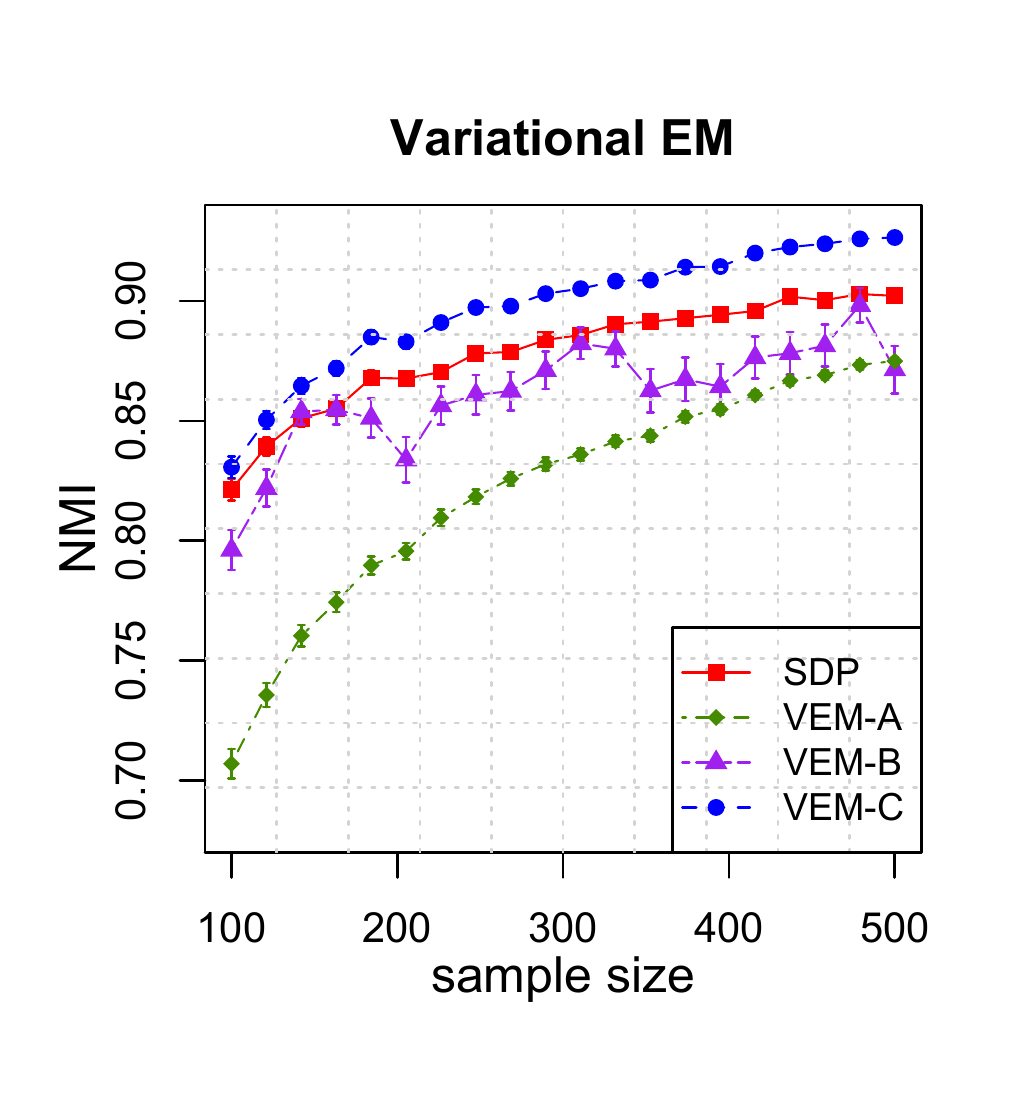} 
\end{tabular}
\vspace{-0.5cm}
\caption{The community detection performances under Scenario (B); All the relevant descriptions are the same as in Figure \ref{fig:three}.} \label{fig:four}
\end{figure}

\subsection{Real Data Analysis.}

The dataset is about a research team consisting of 77 employees in a manufacturing company \citep{cross2004hidden}. A weight $w_{ij}$ is assigned to the edge from employee $i$ to employee $j$ based on the extent to which employee $i$ provides employee $j$ with information $j$ uses to accomplish $j$'s work. There are seven choices for the weights: 0 (I do not know this person/I have never met this person); 1 (Very infrequently); 2 (Infrequently); 3 (Somewhat infrequently); 4 (Somewhat frequently); 5 (Frequently); 6 (Very frequently). In addition to the edge information, the dataset also contains several attributes of each employee: location (1: Paris, 2: Frankfurt, 3: Warsaw, 4: Geneva); tenure (1: 1-12 months, 2: 13-36 months, 3: 37-60 months, 4: 61+ months); the organizational level (1: Global Dept Manager, 2: Local Dept Manager, 3: Project Leader, 4: Researcher). Since the network is a weighted and directed network, we first convert it to a binary network such that there exists an edge from $i$ to $j$ if and only if $w_{ij}>3$. This corresponds to whether the information is provided frequently or not. We then further convert it into an undirected network in the way that the edge between $i$ and $j$ exists if and only if both directed edges from $i$ to $j$ and $j$ to $i$ are present. Finally, we remove three isolated nodes from the network. To explore the intro-organizational community structure, we re-order the adjacency matrix based on random permutation and the attributes. As can be seen from Figure \ref{fig:five}, the attribute ``location" is a very informative indicator of the network's community structure. This should not come as a big surprise, since the same office location usually promotes communication and collaboration between team members. We now use the ``location" as the ground truth for the community assignment and examine the performances of SDP \eqref{sdp:formula}, maximum profile likelihood and variational methods based on the rest of the data. For SDP \eqref{sdp:formula}, we first use spectral clustering on adjacency matrix \citep{lei2014consistency} to estimate the size of the communities and plug the estimates in the formula of $\lambda_n$ in Theorem \ref{thm:sdp} to determine $\lambda_n$. Regarding $\gamma_n$, motivated from the simulation results, we choose $\gamma_n=\frac{\hat{\rho}_n}{\log n}$, where $\hat{\rho}_n=\frac{2\times \mbox{number of edges}}{n^2}$. The maximum profile likelihood and variational methods are initialized by the output from SDP. We can see from Table \ref{tab:one} that by incorporating the nodal information, community detection accuracy has been improved. It is interesting to observe that SDP performs as well as the two likelihood based methods, when nodal covariates are available. Note that we can calculate the mutual information between the ``ground truth" variable ``location" and the other two to see how much community information they contain. Given that both mutual information (0.11 \& 0.03) are pretty small, the magnitude of improvement in Table \ref{tab:one} is reasonable.

\begin{table}[!htb]
\begin{center}
\begin{tabular}{c|ccc|ccc}
\hline
&SDP & MPL & VEM &SDP&MPL&VEM \\
\hline
edge&0.881&0.894&0.894&0.882&0.883&0.883  \\
\hline
edge + nodal &0.920&0.920 &0.920&0.921&0.921&0.921 \\
\hline
\end{tabular}
\end{center}
\caption{The community detection results of SDP, maximum profile likelihood, and variational methods. MPL and VEM denote maximum profile likelihood and variational methods, respectively. NMI is computed on the left part of the table, and ARI on the right. The row indexed by ``edge" shows the results based on the network without nodal information, while the other one ``edge+nodal" contains the results of making use of the two attributes available. } \label{tab:one}
\end{table}

\section{Discussion.}

In this paper, we present a systematic study of the community detection with nodal information problem. We propose a flexible network modeling framework, and analyze three likelihood based methods under a specialized model. Both asymptotic and algorithmic aspects have been thoroughly discussed. The superiority of variational and maximum profile likelihood methods are verified through a variety of numerical experiments. Finally, we would like to highlight several potential extensions and open problems for future work.
\begin{itemize}
\item[1.] The modeling of both the network and nodal covariates can be readily extended to more general families, such as degree-corrected stochastic block model and non-parametric regression, respectively. The corresponding asymptotic results might be derived accordingly.
\item[2.] In the setting with high dimensional covariates, penalized likelihood methods are more appealing for both community detection and variable selection. Theoretical analysis of community detection and variable selection consistency will be necessary.
\item[3.] For very sparse networks, considering $n\rho_n=O(1)$ seems to be a more realistic asymptotic framework. Under such asymptotic setting, community detection consistency is impossible. The effect of nodal covariates becomes more critical. It is of great interest to characterize the impact of the nodal information on community detection. 
\item[4.] In this work, we assume the number of communities $K$ is known. How to select $K$ is an important problem in community detection. Some recent efforts  towards this direction include \cite{saldana2015many, le2015estimating, wang2015likelihood, lei2016goodness}.
\end{itemize}

\appendix
\section{Appendix}

\noindent \emph{\textbf{Notations and Preliminaries}}. Before the proofs, we first introduce some necessary notations. Let $\pi_a=P(c=a), \hat{\pi}_a=\frac{1}{n}\sum_{i=1}^n\mathbbm{1}(c_i=a), 1\leq a \leq K$. Given a community assignment $\bm{e} \in \{1,\dots, K\}^n$, define $O(\bm{e}),V(\bm{e}),T(\bm{e}), \hat{T}(\bm{e}) \in \mathbb{R}^{K \times K}$ and $\bm{f}^0(\bm{e}),   \hat{\bm{f}}(\bm{e}) \in \mathbb{R}^K$ with their elements being
\begin{align*}
&O_{ab}(\bm{e})=\sum_{ij}A_{ij}\mathbbm{1}(e_i=a, e_j=b), \quad V_{ab}(\bm{e})=\frac{\sum_i \mathbbm{1}(e_i=a, c_i=b)}{\sum_i \mathbbm{1}(c_i=b)}, \\
&T_{kl}(\bm{e})=\sum_{ab}\pi_a\pi_b\bar{B}_{ab}V_{ka}(\bm{e})V_{lb}(\bm{e}), \quad \hat{T}_{kl}(\bm{e})=\sum_{ab}\hat{\pi}_a\hat{\pi}_b\bar{B}_{ab}V_{ka}(\bm{e})V_{lb}(\bm{e}), \\
&f^0_k(\bm{e})=\sum_a V_{ka}(\bm{e})\pi_a, \quad \hat{f}_k(\bm{e})=\sum_a V_{ka}(\bm{e})\hat{\pi}_a = \frac{n_k(\bm{e})}{n}.
\end{align*}
Here, $O_{ab}(\bm{e})$ represents the number of edges between communities $a$ and $b$ under assignment $\bm{e}$ (twice the number when $a=b$), $V_{ab}(\bm{e})$ represents the proportion of the nodes in community $b$ under $\bm{c}$ that are mislabeled as community $a$ under $\bm{e}$, $T_{kl}(\bm{e})$ reflects the connection probability (up to a scaling factor $\rho_n$) between a node in community $k$ and another node in community $l$ under $\bm{e}$ with $\hat{T}_{kl}(\bm{e})$ being its empirical version, and $\hat{f}_k(\bm{e})$ represents the proportion of community $k$ under $\bm{e}$ with $f^0_k(\bm{e})$ as its ``population" version. Also denote $n_a(\bm{e})=\sum_{i}\mathbbm{1}(e_i=a), 1\leq a \leq K; F(T, \bm{f} )=\sum_{ab}\Big( T_{ab}\log  \frac{T_{ab}}{f_af_b}-T_{ab} \Big); \mathcal{H}(T,\bm{f})= \sum_{ab} \Big(T_{ab}\log \bar{B}_{ab}-f_af_b\bar{B}_{ab}\Big); \mathcal{V}=\{V \in \mathbb{R}^{K \times K}: \sum_{k}V_{ka}=1,   V_{ka} \geq 0, 1 \leq k, a \leq K\}; \mu_n=n^2\rho_n$. Throughout the proofs, we use $B^0$ and $\bm{\beta}_0$ to represent the true parameters in NSBM while leaving $B$ and $\bm{\beta}$ as the generic parameter. We will $C_1$, $C_2$, $\dots$, to represent positive generic constants whose values may vary across different lines. We will frequently use the notation $\|V(\bm{e})-V(\bm{c})\|_1=\sum_{ab}|V_{ab}(\bm{e})-V_{ab}(\bm{c})|$. Note that $\frac{1}{n}\sum_i \mathbbm{1}(e_i \neq c_i) \leq \frac{1}{2}\|V(\bm{e})-V(\bm{c})\|_1$ (see the derivation on Page 22 of \cite{zhao2012consistency}).  Moreover, we cite two concentration inequality results that will be used in the proof. The first one is Lemma A.1 from \cite{zhao2012consistency}:
\begin{align}\label{concen:one}
P\Big(\max_{\bm{e}}\max_{ab} \Big |\frac{O_{ab}(\bm{e})}{\mu_n}-\hat{T}_{ab}(\bm{e}) \Big | \geq \epsilon \Big) \leq 2K^{n+2}\exp({-\frac{\epsilon^2\mu_n}{8\max_{ab}\bar{B}_{ab}}}), 
\end{align}
for $\epsilon < 3\max_{ab}\bar{B}_{ab}$. The second one is (1.4) in \cite{bickel2015correction}: $\forall~\gamma_0 >0,$
\begin{align}\label{concen:two}
P\Big(\max_{0<\|V(\bm{e})-V(\bm{c})\|_1\leq \delta_n}[\max_{ab}|X_{ab}(\bm{e})-X_{ab}(\bm{c})|-\gamma_0\cdot \|V(\bm{e})-V(\bm{c})\|_1 ] \leq 0\Big ) \rightarrow 1, 
\end{align}
where $X_{ab}(\bm{e})=\frac{O_{ab}(\bm{e})}{\mu_n}-\hat{T}_{ab}(\bm{e})$ and $\delta_n \rightarrow 0, n\rho_n/\log n \rightarrow \infty$.
Finally, we should be aware that the expressions involving community assignment $\bm{e}$ in the proofs are to be interpreted,  up to permutations of community labels in $\{1,2,\dots, K\}$ whenever necessary.

\vspace{0.3cm}
\begin{lemma}\label{strong:curv}
There exist positive constants $c_0, c_1, c_2, c_3>0$, such that
\begin{align*}
&F(T(\textbf{c}), \textbf{f}^0(\textbf{c}))-F(T(\textbf{e}), \textbf{f}^0(\textbf{e})) \geq c_0 \cdot \|V(\textbf{e})-V(\textbf{c})\|_1, \mbox{~~~if~} \|V(\textbf{e})-V(\textbf{c})\|_1 \leq c_1,\\
&\mathcal{H}(T(\textbf{c}), \textbf{f}^0(\textbf{c}))-\mathcal{H}(T(\textbf{e}), \textbf{f}^0(\textbf{e})) \geq c_2 \cdot \|V(\textbf{e})-V(\textbf{c})\|_1, \mbox{~~~if~} \|V(\textbf{e})-V(\textbf{c})\|_1 \leq c_3.
\end{align*}
\end{lemma}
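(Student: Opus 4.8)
The plan is to treat both $F(T(\cdot),\bm{f}^0(\cdot))$ and $\mathcal{H}(T(\cdot),\bm{f}^0(\cdot))$ as smooth functions of the confusion matrix $V\in\mathcal{V}$ alone, since $T_{kl}(V)=\sum_{ab}\pi_a\pi_b\bar{B}_{ab}V_{ka}V_{lb}$ and $f^0_k(V)=\sum_a V_{ka}\pi_a$ express $T$ and $\bm{f}^0$ through $V$. A direct check gives $V(\bm{c})=I$, which is a \emph{vertex} of the polytope $\mathcal{V}$ (a product of simplices): the off-diagonal entries of any $V\in\mathcal{V}$ are nonnegative while $I$ has them equal to zero. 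This is the crucial structural fact. At an interior maximizer the first-order term would vanish and one could hope only for a quadratic lower bound, whereas at a vertex the one-sided directional derivative into $\mathcal{V}$ is generically nonzero, which is exactly what produces the \emph{linear} bound in $\|V(\bm{e})-V(\bm{c})\|_1$. Accordingly I would reduce the lemma to two ingredients: (i) a uniform strictly negative bound on the directional derivative of each functional at $I$ along every feasible direction, and (ii) a uniform bound on the second derivatives on a small neighborhood of $I$ inside $\mathcal{V}$, to absorb the Taylor remainder.

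First I would compute the gradient of $\phi(V):=F(T(V),\bm{f}^0(V))$ at $V=I$ by the chain rule, using $\partial F/\partial T_{ab}=\log(T_{ab}/f_af_b)$ and $\partial F/\partial f_c=-2f_c^{-1}\sum_b T_{cb}$, evaluated at $T_{kl}(I)=\pi_k\pi_l\bar{B}_{kl}$ and $f^0_k(I)=\pi_k$. Writing a feasible perturbation as $\Delta=V-I$, the membership constraints force $\sum_k\Delta_{ka}=0$ for each $a$, $\Delta_{ka}\ge 0$ for $k\neq a$, hence $\Delta_{aa}=-\sum_{k\neq a}\Delta_{ka}$ and $\|\Delta\|_1=2\sum_{k\neq a}\Delta_{ka}$. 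Exploiting $\sum_k\Delta_{ka}=0$ to subtract the diagonal coefficient, the directional derivative collapses to
\[
\frac{d}{dt}\Big|_{0^+}\phi(I+t\Delta)=2\sum_a\pi_a\sum_{k\neq a}\Delta_{ka}\sum_d\pi_d\Big[\bar{B}_{ad}\log\tfrac{\bar{B}_{kd}}{\bar{B}_{ad}}-(\bar{B}_{kd}-\bar{B}_{ad})\Big].
\]
The bracketed quantity has the form $g(\bar{B}_{ad},\bar{B}_{kd})$ with $g(x,y)=x\log(y/x)-(y-x)$, and the elementary inequality $\log t\le t-1$ gives $g(x,y)\le 0$, with equality iff $x=y$.

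Next I would invoke Condition \ref{condition::idenfiability}: for $k\neq a$ the columns $\bar{B}_{\cdot a}$ and $\bar{B}_{\cdot k}$ differ, so $\bar{B}_{kd}\neq\bar{B}_{ad}$ for some $d$, and since all $\pi_d>0$ the inner sum is strictly negative; finiteness of the pairs $(a,k)$ then yields a uniform $\delta>0$ with $\sum_d\pi_d\,g(\bar{B}_{ad},\bar{B}_{kd})\le-\delta$. Combined with $\Delta_{ka}\ge 0$ and $\pi_{\min}=\min_a\pi_a>0$, the directional derivative is bounded above by $-\delta\pi_{\min}\|\Delta\|_1$. Because $\pi_a>0$ and $\bar{B}_{ab}>0$, the quantities $T_{ab}(V)$ and $f^0_a(V)$ remain bounded away from $0$ for $\|V-I\|_1$ small, so the Hessian of $\phi$ is bounded, say by $M$, on such a neighborhood; Taylor's theorem gives $\phi(I)-\phi(V)\ge\delta\pi_{\min}\|V-I\|_1-\tfrac{M}{2}\|V-I\|_1^2$, and choosing $c_1$ with $\tfrac{M}{2}c_1\le\tfrac12\delta\pi_{\min}$ yields the first inequality with $c_0=\tfrac12\delta\pi_{\min}$.

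For $\mathcal{H}$ the same scheme applies: one has $\partial\mathcal{H}/\partial T_{ab}=\log\bar{B}_{ab}=\partial F/\partial T_{ab}\big|_I$, and a short computation shows the gradient of $\psi(V):=\mathcal{H}(T(V),\bm{f}^0(V))$ at $I$ coincides with that of $\phi$, so the identical directional-derivative bound holds; moreover $\psi$ is a quadratic polynomial in $V$, whose Hessian is globally bounded, making the remainder step immediate and delivering $c_2,c_3$. I expect the main obstacle to be, conceptually, the step that converts Condition \ref{condition::idenfiability} into a \emph{uniform} strict-negativity constant $\delta>0$ through the sign analysis of $g$: this is precisely where distinctness of the columns of $\bar{B}$ is genuinely used, and where the linear (rather than merely quadratic) nature of the bound is earned; the accompanying chain-rule gradient computation is routine but bookkeeping-heavy.
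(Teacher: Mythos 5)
Your proposal is correct and follows essentially the same route as the paper's proof: both reduce the lemma to showing that the one-sided directional derivative of $F$ (resp.\ $\mathcal{H}$) at the vertex $V(\bm{c})=I$ of $\mathcal{V}$ is bounded above by a negative constant times $\|V-I\|_1$ --- your bracketed quantity $\bar{B}_{ad}\log(\bar{B}_{kd}/\bar{B}_{ad})-(\bar{B}_{kd}-\bar{B}_{ad})$ is exactly the expression appearing in \eqref{final:zero}, made uniformly strictly negative via Condition~\ref{condition::idenfiability} --- and then absorb the second-order remainder on a small neighborhood of $I$ in $\mathcal{V}$. The only cosmetic differences are that the paper bounds $g'(\epsilon)-g'(0)$ along segments and concludes with the mean value theorem on a rescaled direction where you invoke a bounded Hessian and Taylor's theorem, and that the paper obtains strict negativity from a second-order expansion of $x\mapsto \bar{B}_{ab}\log x - x$ about $x=\bar{B}_{ab}$ where you use $\log t\le t-1$.
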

\begin{proof}
We only show the proof for the first inequality since the second one can be derived in a similar way. Note that $F(T(\textbf{e}), \textbf{f}^0(\textbf{e}))$ can be considered as a function of $V(\textbf{e})$. We give it another notation $H(V(\textbf{e}))$, where $H(\cdot)$ is defined on the convex set $\mathcal{V}$. Further define $g(\epsilon; V(\textbf{e}))=H((1-\epsilon)V(\textbf{c})+\epsilon V(\textbf{e}))$, for $0\leq \epsilon \leq 1$. Since $V(\bm{e}), V(\bm{c})\in \mathcal{V}$, $g(\epsilon; V(\textbf{e}))$ is well defined. We first show that $\exists~ \tilde{c}_0, \tilde{c}_1>0$, s.t.
\begin{align}\label{key:point}
g'(\epsilon; V(\textbf{e})) \leq -\tilde{c}_0\cdot \|V(\textbf{e})-V(\textbf{c})\|_1, \mbox{~for any~} 0\leq \epsilon \leq \tilde{c}_1, V(\textbf{e}) \in \mathcal{V} 
\end{align}
where $g'(\epsilon;V(\textbf{e}))$ is the derivative with respect to $\epsilon$; the constants $\tilde{c}_0, \tilde{c}_1$ do not depend on $V(\textbf{e})$. To prove \eqref{key:point}, denote $\tilde{V}_{ka}=(1-\epsilon)V_{ka}(\textbf{c})+\epsilon V_{ka}(\textbf{e}), r_{kl}=\frac{\sum_{ab}\pi_a\pi_b\tilde{V}_{ka}\tilde{V}_{lb}\bar{B}_{ab}}{\sum_{ab}\pi_a\pi_b\tilde{V}_{ka}\tilde{V}_{lb}}$. Then a straightforward calculation gives us,
\begin{align}\label{g:general}
&g'(\epsilon; V(\textbf{e}))=2\epsilon \sum_{kl} \sum_{ab} \pi_a\pi_b \cdot [V_{ka}(\textbf{e})-V_{ka}(\textbf{c})]\cdot [V_{lb}(\textbf{e})-V_{lb}(\textbf{c})] \cdot (\bar{B}_{ab}\log r_{kl}-r_{kl})\\
&+\sum_{kl} \sum_{ab} \pi_a\pi_b([V_{ka}(\textbf{e})-V_{ka}(\textbf{c})]V_{lb}(\textbf{c})+[V_{lb}(\textbf{e})-V_{lb}(\textbf{c})]V_{ka}(\textbf{c}))\cdot (\bar{B}_{ab}\log r_{kl}-r_{kl}). \nonumber
\end{align}
Hence when $\epsilon=0$, the above equation can be simplified as 
\begin{align}\label{gzero}
g'(0; V(\textbf{e}))&=\sum_{ka}\sum_{lb} \pi_a\pi_b \cdot [V_{ka}(\textbf{e})-V_{ka}(\textbf{c})]\cdot V_{lb}(\textbf{c})\cdot (\bar{B}_{ab} \log\bar{B}_{kl}-\bar{B}_{kl}) \\
&\quad +\sum_{lb}\sum_{ka} \pi_a\pi_b \cdot [V_{lb}(\textbf{e})-V_{lb}(\textbf{c})]\cdot V_{ka}(\textbf{c})\cdot (\bar{B}_{ab} \log\bar{B}_{kl}-\bar{B}_{kl}) \nonumber \\
&=\sum_{ka} \sum_b \pi_a\pi_b \cdot (V_{ka}(\textbf{e})-V_{ka}(\textbf{c}))\cdot (\bar{B}_{ab}\log \bar{B}_{kb}-\bar{B}_{kb}) \nonumber \\
&\quad + \sum_{lb} \sum_a \pi_a\pi_b \cdot (V_{lb}(\textbf{e})-V_{lb}(\textbf{c}))\cdot (\bar{B}_{ab}\log \bar{B}_{al}-\bar{B}_{al}) \nonumber \\
&=2\sum_{abl} \pi_a\pi_b(V_{lb}(\textbf{e})-V_{lb}(\textbf{c}))\cdot (\bar{B}_{ab}\log \bar{B}_{al}-\bar{B}_{al}). \nonumber
\end{align}
Note that for any $V(\textbf{e})\in \mathcal{V}$, it holds that $V_{bb}(\textbf{e})-1=-\sum_{l \neq b} V_{lb}(\textbf{e})$. We can then continue the calculation from \eqref{gzero}:
\begin{align}
&~g'(0; V(\textbf{e}))\label{final:zero} \\
=&~2\sum_{ab}\Big [\pi_a\pi_b\cdot (V_{bb}(\textbf{e})-1)\cdot (\bar{B}_{ab}\log\bar{B}_{ab}-\bar{B}_{ab})+\sum_{l \neq b} \pi_a\pi_b \cdot V_{lb}(\textbf{e})\cdot (\bar{B}_{ab}\log \bar{B}_{al}-\bar{B}_{al}) \Big]  \nonumber \\
=&~2\sum_{ab}\sum_{l\neq b} \pi_a\pi_b \cdot V_{lb}(\textbf{e})\cdot[\bar{B}_{ab}\log \bar{B}_{al}-\bar{B}_{al}- \bar{B}_{ab}\log\bar{B}_{ab}+\bar{B}_{ab}]  \nonumber \\
\overset{(a)}{=}&~2\sum_{ab}\sum_{l\neq b} \pi_a\pi_b \cdot V_{lb}(\textbf{e})\cdot \frac{-\bar{B}_{ab}(\bar{B}_{al}-\bar{B}_{ab})^2}{2\tilde{B}^2_{ab}} \overset{(b)}{\leq} -\tilde{c}_2 \cdot \sum_{ab}\sum_{l\neq b}V_{lb}(\textbf{e})\cdot (\bar{B}_{al}-\bar{B}_{ab})^2\nonumber \\
=&~ -\tilde{c}_2 \cdot \sum_{b}\sum_{l\neq b}[ V_{lb}(\textbf{e})\cdot \sum_a (\bar{B}_{al}-\bar{B}_{ab})^2] \overset{(c)}{\leq} -\tilde{c}_3 \cdot \sum_{b}\sum_{l\neq b} V_{lb}(\textbf{e})=-\frac{\tilde{c}_3}{2}\|V(\textbf{e})-V(\textbf{c})\|_1, \nonumber
\end{align}
where $\tilde{B}_{ab}$ is a number between $\bar{B}_{ab}$ and $\bar{B}_{al}$. To obtain (a), we have used the second order Taylor expansion of the function $\bar{B}_{ab}\log x- x$ around its maxima $x=\bar{B}_{ab}$; (b) is simply due to $\min_{a}\pi_a>0, \min_{ab}\bar{B}_{ab}>0$; (c) holds since $\bar{B}$ has no two identical columns. In order to obtain \eqref{key:point}, we need to evaluate $g'(\epsilon; V(\textbf{e}))$ for small $\epsilon$. From \eqref{g:general} and \eqref{gzero}, it is straightforward to see
\begin{align}
&~g'(\epsilon; V(\textbf{e}))-g'(0; V(\textbf{e}))\label{gsum}\\
=&~\underbrace{2\epsilon \cdot \sum_{klab} \pi_a\pi_b \cdot [V_{ka}(\textbf{e})-V_{ka}(\textbf{c})]\cdot [V_{lb}(\textbf{e})-V_{lb}(\textbf{c})]\cdot (\bar{B}_{ab}\log  r_{kl}-r_{kl} )}_{\triangleq G_1} \nonumber \\
&+ \underbrace{\sum_{klab}  \pi_a\pi_b\bar{B}_{ab}([V_{ka}(\textbf{e})-V_{ka}(\textbf{c})]V_{lb}(\textbf{c})+[V_{lb}(\textbf{e})-V_{lb}(\textbf{c})]V_{ka}(\textbf{c}))\cdot \log\frac{r_{kl}}{\bar{B}_{kl}}}_{\triangleq G_2}\nonumber \\
&- \underbrace{\sum_{klab}  \pi_a\pi_b([V_{ka}(\textbf{e})-V_{ka}(\textbf{c})]V_{lb}(\textbf{c})+[V_{lb}(\textbf{e})-V_{lb}(\textbf{c})]V_{ka}(\textbf{c}))\cdot (r_{kl}-\bar{B}_{kl} )}_{\triangleq G_3}. \nonumber 
\end{align}
We now bound $G_1, G_2$ and $G_3$ in the above equation. For $G_1$, note that $| \log r_{kl} | \leq  \max_{ab} |\log \bar{B}_{ab}|$ and $|r_{kl}| \leq \max_{ab} \bar{B}_{ab}$. Therefore, $\exists~\tilde{c}_4>0$ such that
\begin{align}\label{g1:bound}
|G_1|\leq \tilde{c}_4 \epsilon \cdot  \sum_{ka}\sum_{bl} |V_{ka}(\textbf{e})-V_{ka}(\textbf{c})| \cdot |V_{lb}(\textbf{e})-V_{lb}(\textbf{c})| \leq 2K^2 \tilde{c}_4 \epsilon\cdot  \|V(\textbf{e})-V(\textbf{c})\|_1.
\end{align}
Regarding $G_2$, since $\tilde{V}_{ka}-V_{ka}(\textbf{c})=O(\epsilon)$, we have
\begin{align*}
\log\frac{r_{kl}}{\bar{B}_{kl}} =\log\frac{\pi_k\pi_l\bar{B}_{kl}+O(\epsilon)}{\pi_k\pi_l\bar{B}_{kl}+O(\epsilon)}=O(\epsilon).
\end{align*}
So we can bound $G_2$:
\begin{align}\label{g2:bound}
|G_2| \leq O(\epsilon) \cdot \sum_{klab} ( |V_{ka}(\textbf{e})-V_{ka}(\textbf{c})| + |V_{lb}(\textbf{e})-V_{lb}(\textbf{c})| )=O(\epsilon)\cdot \|V(\textbf{e})-V(\textbf{c})\|_1.
\end{align}
Similar arguments can give us $|G_3|= O(\epsilon) \cdot \|V(\textbf{e})-V(\textbf{c})\|_1$. This fact combined with \eqref{final:zero}, \eqref{gsum}, \eqref{g1:bound} and \eqref{g2:bound} completes the proof of \eqref{key:point}. We now consider any $V(\textbf{e})$ such that $\|V(\textbf{e})-V(\textbf{c})\|_1 \leq \tilde{c}_1$. Define ${V}^*(\textbf{e})=V(\textbf{c})+\frac{V(\textbf{e})-V(\textbf{c})}{\|V(\textbf{e})-V(\textbf{c})\|_1}$. It is then straightforward to confirm that $V^*(\bm{e})\in \mathcal{V}$. Hence,
\begin{align*}
&F(T(\textbf{c}),\textbf{f}^0(\textbf{c}))-F(T(\textbf{e}),\textbf{f}^0(\textbf{e}))=H(V(\textbf{c}))-H(V(\textbf{e})) \\
&=g(0; V^*(\textbf{e}))-g(\|V(\textbf{e})-V(\textbf{c})\|_1;V^*(\textbf{e})) \\
&\overset{(d)}{=} -g'(\tilde{\epsilon};V^*(\textbf{e})) \cdot \|V(\textbf{e})-V(\textbf{c})\|_1 \overset{(e)}{\geq} \tilde{c}_0 \cdot \|V^*(\textbf{e})-V(\textbf{c})\|_1  \cdot \|V(\textbf{e})-V(\textbf{c})\|_1 \\
&=\tilde{c}_0 \cdot \|V(\textbf{e})-V(\textbf{c})\|_1,
\end{align*}
where (d) is simply by mean value theorem; $\tilde{\epsilon}$ is between $0$ and $\|V(\textbf{e}-V(\textbf{c}))\|_1$; (e) holds because of \eqref{key:point}. This finishes the proof.
\end{proof}

\vspace{0.5cm}

\noindent \emph{\textbf{Proof of Theorem \ref{thm:mle}}}. It is not hard to check that similar proofs as the ones of Lemma 1 and Theorem 2 in \cite{bickel2013asymptotic} work under NSBM\footnote{The tail condition on $\bm{x}$ is used to show that Theorem 1 in \cite{bickel2013asymptotic} holds under NSBM.}. For simplicity, we do not repeat the derivations here. As a result, we can obtain that as $n\rightarrow \infty$,
\begin{align}\label{param:error}
\sqrt{n}(\hat{\bm{\beta}}-\bm{\beta}_0)\rightarrow N(0, I^{-1}(\bm{\beta}_0)), \quad  \sqrt{n^2\rho_n} \log \frac{\hat{B}_{ab}}{B^0_{ab}}=O_p(1), ~~1\leq a, b \leq K.
\end{align}
Based on \eqref{param:error}, we would like to show the strong consistency of $\hat{\bm{c}}$. Define 
\begin{align*}
\mathcal{N}(\bm{e};\{B_{ab}\})&=\sum_{ab}\Big[ \frac{O_{ab}(\bm{e})}{\mu_n}\log \frac{B_{ab}}{\rho_n}+\frac{n_a(\bm{e})n_b(\bm{e})-O_{ab}(\bm{e})}{\mu_n}\log(1-B_{ab}) \Big], \\
\mathcal{C}(\bm{e}; \bm{\beta})&=  \frac{1}{\mu_n} \sum_{i}\Big[ \bm{\beta}^T_{e_i}\bm{x}_i- \log \Big( \sum_{k=1}^K\exp({\bm{\beta}^T_{k}\bm{x}_i}) \Big) \Big].
\end{align*}
We then easily see that
\begin{align*}
\hat{\bm{c}} =\argmax_{\bm{e} \in \{1,\dots, K\}^n} \mathcal{N}(\bm{e}; \{\hat{B}_{ab}\})+\mathcal{C}(\bm{e}; \hat{\bm{\beta}}).
\end{align*}
The subsequent proof is aligned with the ideas of proving strong consistency in \cite{zhao2012consistency}. We first prove that $\hat{\bm{c}}$ is weak consistent. Note that
\begin{align}
\max_{\bm{e}}|\mathcal{N}(\textbf{e}; \{\hat{B}_{ab}\})-\mathcal{H}(T(\bm{e}), \bm{f}^0(\bm{e})) | &\leq \max_{\bm{e} }|\mathcal{N}(\textbf{e}; \{\hat{B}_{ab}\})-\mathcal{N}(\textbf{e}; \{B^0_{ab}\})| \label{chain} \\
&+\max_{\bm{e}} |\mathcal{N}(\bm{e};\{B^0_{ab}\})-\mathcal{H}(\hat{T}(\bm{e}), \hat{\bm{f}}(\bm{e}))|\nonumber \\
&+\max_{\bm{e}}|\mathcal{H}(\hat{T}(\bm{e}), \hat{\bm{f}}(\bm{e}))-\mathcal{H}(T(\bm{e}), \bm{f}^0(\bm{e}))|. \nonumber  
\end{align}
We aim to bound the three terms on the right hand side of the above inequality. For the first one, we have
\begin{align*}
\max_{\bm{e} }|\mathcal{N}(\textbf{e}; \{\hat{B}_{ab}\})-\mathcal{N}(\textbf{e}; \{B^0_{ab}\})| \leq &\sum_{ab} \Bigg[ \max_{\bm{e}} \frac{O_{ab}(\bm{e})}{\mu_n}\cdot \Big |\log \frac{\hat{B}_{ab}(1-B^0_{ab})}{B^0_{ab}(1-\hat{B}_{ab})}\Big|\\
&+  \frac{1}{\rho_n} \Big | \log\frac{1-\hat{B}_{ab}}{1-B^0_{ab}} \Big| \Bigg].
\end{align*}
According to \eqref{param:error}, if we can show $ \max_{\bm{e}} \frac{O_{ab}(\bm{e})}{\mu_n}=O_p(1)$, the above inequality will imply $\exists~a_n \rightarrow 0$ s.t.
\begin{align}\label{big:0}
P(\max_{\bm{e} }|\mathcal{N}(\textbf{e}; \{\hat{B}_{ab}\})-\mathcal{N}(\textbf{e}; \{B^0_{ab}\})| \leq a_n) \rightarrow 1.
\end{align}
For that purpose, we first apply \eqref{concen:one} by choosing $\epsilon_n=(n\rho_n)^{-1/3}$ to have 
\begin{align}\label{big:one}
P\Big(\max_{\bm{e}}\max_{kl}\Big |\frac{O_{kl}(\bm{e})}{\mu_n}-\hat{T}_{kl}(\bm{e})\Big | \leq \epsilon_n \Big ) \rightarrow 1.
\end{align}    
Also notice that $\max_{\bm{e}}|T_{kl}(\bm{e})| \leq \sum_{ab}\bar{B}_{ab}$, and 
\begin{align}\label{big:two}
\max_{\bm{e}}|T_{kl}(\bm{e})-\hat{T}_{kl}(\bm{e})| \leq \sum_{ab} \bar{B}_{ab}( |\hat{\pi}_a-\pi_a|+ |\hat{\pi}_b-\pi_b|) \overset{P}{\rightarrow}0.
\end{align}
Combining \eqref{big:one} and \eqref{big:two} yields the result $ \max_{\bm{e}} \frac{O_{ab}(\bm{e})}{\mu_n}=O_p(1)$. Regarding the second term on the right hand side of \eqref{chain}, it is straightforward to see that
\begin{align*}
&\max_{\bm{e}} |\mathcal{N}(\bm{e};\{B^0_{ab}\})-\mathcal{H}(\hat{T}(\bm{e}), \hat{\bm{f}}(\bm{e}))| \leq \max_{\bm{e}}\max_{kl}\Big |\frac{O_{kl}(\bm{e})}{\mu_n}-\hat{T}_{kl}(\bm{e})\Big |\cdot \sum_{ab} |\log \bar{B}_{ab}| + \\
&\max_{\bm{e}}\max_{kl}\frac{O_{kl}(\bm{e})}{\mu_n}\cdot \sum_{ab} |\log (1-\rho_n \bar{B}_{ab})|+\sum_{ab} \Big |\frac{\log(1-\rho_n \bar{B}_{ab})}{\rho_n}+\bar{B}_{ab}  \Big |.
\end{align*}
The fact that $ \max_{\bm{e}} \max_{kl}\frac{O_{kl}(\bm{e})}{\mu_n}=O_p(1)$ and $\rho_n \rightarrow 0$, combined with \eqref{big:one} enables us to conclude from the last inequality: $\exists~b_n \rightarrow 0$ such that
\begin{align}\label{big:00}
P(\max_{\bm{e}} |\mathcal{N}(\bm{e};\{B^0_{ab}\})-\mathcal{H}(\hat{T}(\bm{e}), \hat{\bm{f}}(\bm{e}))| \leq b_n) \rightarrow 1.
\end{align}
For the third term on the right hand side of \eqref{chain}, it is easily seen that
\begin{align*}
&\max_{\bm{e}}|\mathcal{H}(\hat{T}(\bm{e}), \hat{\bm{f}}(\bm{e}))-\mathcal{H}(T(\bm{e}), \bm{f}^0(\bm{e}))| \leq \sum_{ab} \max_{\bm{e}}|T_{ab}(\bm{e})-\hat{T}_{ab}(\bm{e})| \cdot  |\log \bar{B}_{ab}|+\\
&\sum_{ab} \max_{e} (|f^0_a(\bm{e})-\hat{f}_a(\bm{e})|+|f^0_b(\bm{e})-\hat{f}_b(\bm{e})|)\cdot \bar{B}_{ab}=O(\max_a |\hat{\pi}_a-\pi_a|).
\end{align*}
Hence there exists $c_n \rightarrow 0$ such that
\begin{align}\label{big:000}
P(\max_{\bm{e}}|\mathcal{H}(\hat{T}(\bm{e}), \hat{\bm{f}}(\bm{e}))-\mathcal{H}(T(\bm{e}), \bm{f}^0(\bm{e}))| \leq c_n) \rightarrow 1.
\end{align}
Putting \eqref{chain}, \eqref{big:0}, \eqref{big:00} and \eqref{big:000} together, we obtain that $\exists~ \tilde{\epsilon}_n \rightarrow 0$ so that
\begin{align}\label{key:piece1}
P(\max_{\bm{e}}|\mathcal{N}(\textbf{e}; \{\hat{B}_{ab}\})-\mathcal{H}(T(\bm{e}), \bm{f}^0(\bm{e})) | \leq \tilde{\epsilon}_n/2)\rightarrow 1. 
\end{align}
We now turn to bounding $\max_{\bm{e}} |\mathcal{C}(\bm{e}; \hat{\bm{\beta}})|$. We first decompose it as
\begin{align*}
\mathcal{C}(\bm{e}; \hat{\bm{\beta}})=\frac{n}{\mu_n}\Big[\frac{\mu_n}{n}\mathcal{C}(\bm{c};\hat{\bm{\beta}})+\frac{1}{n} \sum_i (\hat{\bm{\beta}}^T_{e_i}-\hat{\bm{\beta}}^T_{c_i})\bm{x}_i \Big].
\end{align*}
According to \eqref{param:error} and \eqref{uni:one}, it is not hard to get $\frac{\mu_n}{n}\mathcal{C}(\bm{c};\hat{\bm{\beta}})=O_p(1)$. Also a direct Cauchy-Schwartz inequality leads to
\begin{align*}
\max_{\bm{e}}\Big |\frac{1}{n} \sum_i (\hat{\bm{\beta}}^T_{e_i}-\hat{\bm{\beta}}^T_{c_i})\bm{x}_i\Big | \leq 2\|\hat{\bm{\beta}}\|_2 \cdot \frac{1}{n}\sum_i \|\bm{x}_i\|_2=O_p(1).
\end{align*}
Therefore, we have (choosing $\tilde{\epsilon}_n$ large enough)
\begin{align}\label{key:piece2}
P(\max_{\bm{e}} |\mathcal{C}(\bm{e};\hat{\bm{\beta}})| \leq \tilde{\epsilon}_n/2) \rightarrow 1.
\end{align}
The two high probability results \eqref{key:piece1} and \eqref{key:piece2} together give us
\begin{align*}
P(\max_{\bm{e}}|\mathcal{N}(\textbf{e}; \{\hat{B}_{ab}\})+\mathcal{C}(\bm{e};\hat{\bm{\beta}})-\mathcal{H}(T(\bm{e}), \bm{f}^0(\bm{e})) | \leq \tilde{\epsilon}_n)\rightarrow 1. 
\end{align*}
Based on this result, Lemma \ref{strong:curv} implies that $\exists~ \delta_n =O(\tilde{\epsilon}_n)$ s.t
\begin{align*}
P\Big(\max_{\|V(\bm{e})-V(\bm{c})\|_1> \delta_n} [ \mathcal{N}(\textbf{e}; \{\hat{B}_{ab}\})+\mathcal{C}(\bm{e};\hat{\bm{\beta}})] < \mathcal{N}(\textbf{c}; \{\hat{B}_{ab}\})+\mathcal{C}(\bm{c};\hat{\bm{\beta}}) \Big) \rightarrow 1,
\end{align*}
which leads to the weak consistency of $\hat{\bm{c}}$. To obtain strong consistency, it suffices to show
\begin{align*}
P\Big(\max_{0<\|V(\bm{e})-V(\bm{c})\|_1\leq \delta_n} [\mathcal{N}(\textbf{e}; \{\hat{B}_{ab}\})+\mathcal{C}(\bm{e};\hat{\bm{\beta}})] < \mathcal{N}(\textbf{c}; \{\hat{B}_{ab}\})+\mathcal{C}(\bm{c};\hat{\bm{\beta}}) \Big) \rightarrow 1.
\end{align*}
Denote $\mathscr{E}=\{\bm{e}: 0<\|V(\bm{e})-V(\bm{c})\|_1\leq \delta_n\}$. We first bound $\mathcal{N}(\textbf{e}; \{\hat{B}_{ab}\})-\mathcal{N}(\textbf{c}; \{\hat{B}_{ab}\})$.
\begin{align*}
&\mathcal{N}(\textbf{e}; \{\hat{B}_{ab}\})-\mathcal{N}(\textbf{c}; \{\hat{B}_{ab}\})\\
&=\underbrace{ \mathcal{N}(\textbf{e}; \{\hat{B}_{ab}\})-\mathcal{N}(\textbf{e}; \{B^0_{ab}\})+\mathcal{N}(\textbf{c}; \{B^0_{ab}\})-\mathcal{N}(\textbf{c}; \{\hat{B}_{ab}\})}_{\triangleq N_1(\bm{e})} \\
&+\underbrace{ \mathcal{H}(T(\textbf{e}), \textbf{f}^0(\textbf{e}))-\mathcal{H}(T(\textbf{c}), \textbf{f}^0(\textbf{c}))}_{\triangleq N_2(\bm{e})}\\
&+\underbrace{\mathcal{N}(\textbf{e}; \{B^0_{ab}\})- \mathcal{H}(T(\textbf{e}), \textbf{f}^0(\textbf{e}))+ \mathcal{H}(T(\textbf{c}), \textbf{f}^0(\textbf{c}))-\mathcal{N}(\textbf{c}; \{B^0_{ab}\})}_{\triangleq N_3(\bm{e})}.
\end{align*}
For $N_1(\bm{e})$, we have
\begin{align}
\max_{e \in \mathscr{E}}|N_1(\bm{e})| &\leq \max_{e \in \mathscr{E}} \max_{ab} \frac{|O_{ab}(\bm{e})-O_{ab}(\bm{c})|}{\mu_n} \cdot  \sum_{ab}\Big| \log \frac{\hat{B}_{ab}(1-B^0_{ab})}{B^0_{ab}(1-\hat{B}_{ab})}\Big|+  \label{n1:2}\\
&\max_{e \in \mathscr{E}} \max_{ab} \frac{|n_a(\bm{e})n_b(\bm{e})-n_a(\bm{c})n_b(\bm{c})|}{n^2}\cdot \sum_{ab} \Big| \log \frac{1-\hat{B}_{ab}}{1-B^0_{ab}}  \Big|\frac{1}{\rho_n} . \nonumber
\end{align}
From \eqref{param:error} we see that $\log \frac{\hat{B}_{ab}(1-B^0_{ab})}{B^0_{ab}(1-\hat{B}_{ab})}=o_p(1),  \frac{1}{\rho_n}\cdot \log \frac{1-\hat{B}_{ab}}{1-B^0_{ab}} =o_p(1)$. It is also straightforward to confirm,
\begin{align}\label{n1:3}
\max_{ab} \frac{|n_a(\bm{e})n_b(\bm{e})-n_a(\bm{c})n_b(\bm{c})|}{n^2}\leq \frac{2}{n} \sum_i \mathbbm{1}(e_i \neq c_i)\leq \|V(\bm{e})-V(\bm{c})\|_1.
\end{align}
Moreover, note that there exists $C_1>0$ such that
\begin{eqnarray}
&&\max_{\bm{e}\in \mathscr{E}}\max_{kl}|\hat{T}_{kl}(\bm{e})-\hat{T}_{kl}(\bm{c})|\label{tkl:bound2}\\
&\leq& \max_{\bm{e}\in \mathscr{E}}\max_{kl} \max_{ab}\bar{B}_{ab}\cdot \sum_{ka} \sum_{lb}(|V_{ka}(\bm{e})-V_{ka}(\bm{c})|+|V_{lb}(\bm{e})-V_{lb}(\bm{c})|) \nonumber \\
&\leq& C_1 \cdot \max_{\bm{e}\in \mathscr{E}} \|V(\bm{e})-V(\bm{c})\|_1.  \nonumber
\end{eqnarray}
We can then derive from \eqref{concen:two} and \eqref{tkl:bound2} that
\begin{align}\label{n1:1}
P\Big(\max_{\bm{e} \in \mathscr{E}} \Big[\max_{ab} \Big |\frac{O_{ab}(\bm{e})}{\mu_n}-\frac{O_{ab}(\bm{c})}{\mu_n} \Big | -C_2 \cdot \|V(\bm{e})-V(\bm{c})\|_1   \Big] \leq 0\Big) \rightarrow 1,
\end{align}
where $C_2>0$ is a constant. 
Combining \eqref{n1:2}, \eqref{n1:3} and  \eqref{n1:1} yields
\begin{align}\label{n1:fb}
P(\max_{e \in \mathscr{E}}|N_1(\bm{e})| \leq o(1)\cdot \max_{e \in \mathscr{E}}\|V(\bm{e})-V(\bm{c})\|_1 ) \rightarrow 1.
\end{align}
Regarding $N_2(\bm{e})$, Lemma \ref{strong:curv} shows that $\exists~C_3 >0$ s.t.
\begin{align}\label{n2:fb}
N_2(\bm{e}) \leq -C_3 \cdot \|V(\bm{e})-V(\bm{c})\|_1, \mbox{~~~~for~} \bm{e} \in \mathscr{E}.
\end{align}
 For the last term, we can express $N_3(\bm{e})$ as
 \begin{align}
 N_3(\bm{e})&=\sum_{ab} \Big(\frac{O_{ab}(\bm{e})}{\mu_n}-\frac{O_{ab}(\bm{c})}{\mu_n}-\hat{T}_{ab}(\bm{e})+\hat{T}_{ab}(\bm{c}) \Big)\cdot \log \bar{B}_{ab}   \label{n3:00} \\
 &~~+ \sum_{ab} \frac{O_{ab}(\bm{c})-O_{ab}(\bm{e})}{\mu_n}\cdot \log (1-\bar{B}_{ab}\rho_n) \nonumber \\
 &~~+\sum_{ab} \frac{n_a(\bm{e})n_b(\bm{e})-n_a(\bm{c})n_b(\bm{c})}{n^2}\cdot \big [\bar{B}_{ab}+ \frac{1}{\rho_n} \log (1-\bar{B}_{ab}\rho_n) \big ]\nonumber \\
 &~~+ \sum_{ab} (\hat{T}_{ab}(\bm{e})
-T_{ab}(\bm{e})-\hat{T}_{ab}(\bm{c})  +T_{ab}(\bm{c}))\cdot \log \bar{B}_{ab} \nonumber \\
 &~~+ \sum_{ab}(f^0_a(\bm{e})f^0_b(\bm{e})-\hat{f}_a(\bm{e})\hat{f}_b(\bm{e})-f^0_a(\bm{c})f^0_b(\bm{c})+\hat{f}_a(\bm{c})\hat{f}_b(\bm{c})) \cdot \bar{B}_{ab}. \nonumber 
 \end{align}
With a few steps of calculations, it is not hard to obtain
\begin{align*}
\max_{ab}|\hat{T}_{ab}(\bm{e})-T_{ab}(\bm{e})-\hat{T}_{ab}(\bm{c})  +T_{ab}(\bm{c})| \leq \|V(\bm{e})-V(\bm{c})\|_1 \cdot O(\max_a|\hat{\pi}_a-\pi_a|), 
\end{align*}
and
\begin{align*}
&~\max_{ab} |f^0_a(\bm{e})f^0_b(\bm{e})-\hat{f}_a(\bm{e})\hat{f}_b(\bm{e})-f^0_a(\bm{c})f^0_b(\bm{c})+\hat{f}_a(\bm{c})\hat{f}_b(\bm{c})|\\
\leq &~ \|V(\bm{e})-V(\bm{c})\|_1 \cdot O(\max_a|\hat{\pi}_a-\pi_a|).
\end{align*}
 We can then use the above results, together with \eqref{concen:two},  \eqref{n1:3} and \eqref{n1:1} to bound the terms on the right hand side of \eqref{n3:00}. As a result, we are able to show that for any $C_4 >0$,
 \begin{align}\label{n3:fb}
 P(\max_{\bm{e}\in \mathscr{E}}~ [|N_3(\bm{e})|-C_4 \cdot \|V(\bm{e})-V(\bm{c})\|_1]  \leq 0) \rightarrow 1.
 \end{align} 
 The bounds on $N_1(\bm{e}), N_2(\bm{e}), N_3(\bm{e})$ in \eqref{n1:fb}, \eqref{n2:fb} and \eqref{n3:fb} enables us to conclude that $\exists~C_5 >0$,
 \begin{align}\label{nn:bound}
 P(\max_{\bm{e} \in \mathscr{E}}~ [\mathcal{N}(\textbf{e}; \{\hat{B}_{ab}\})-\mathcal{N}(\textbf{c}; \{\hat{B}_{ab}\})+C_5 \cdot \|V(\bm{e})-V(\bm{c})\|_1] \leq 0 ) \rightarrow 1.
 \end{align}
 As a next step, we bound $\mathcal{C}(\bm{e}; \hat{\bm{\beta}}) - \mathcal{C}(\bm{c}; \hat{\bm{\beta}})$. The result \eqref{inter:three} implies that $\frac{n\max_i \|\bm{x}_i\|_2}{\mu_n}=o_p(1)$. This leads to
 \begin{align*}
 \mathcal{C}(\bm{e}; \hat{\bm{\beta}})-\mathcal{C}(\bm{c}; \hat{\bm{\beta}})&=\frac{1}{\mu_n} \sum_{i=1}^n\sum_{k=1}^K \hat{\bm{\beta}}_k^T\bm{x}_i(\mathbbm{1}(e_i=k)-\mathbbm{1}(c_i=k)) \\
 & \leq \frac{2\|\hat{\bm{\beta}}\|_2}{\mu_n} \sum_{i=1}^n \|\bm{x}_i\|_2 \cdot \mathbbm{1}(e_i \neq c_i) \\
 &\leq \frac{2\|\hat{\bm{\beta}}\|_2n\max_i \|\bm{x}_i\|_2}{\mu_n}\cdot \frac{\sum_{i=1}^n\mathbbm{1}(e_i\neq c_i)}{n}\\
 &\overset{(a)}{=} o_p(1) \cdot \|V(\bm{e})-V(\bm{c})\|_1,
 \end{align*}
 where (a) holds because $\|\hat{\bm{\beta}}\|_2=O_p(1), $ and $\frac{1}{n}\sum_{i=1}^n\mathbbm{1}(e_i\neq c_i)\leq \frac{1}{2}\|V(\bm{e})-V(\bm{c})\|_1$. Hence we know
 \begin{align}\label{cc:bound}
 P\Big(\max_{\bm{e} \in \mathscr{E}}~[ \mathcal{C}(\bm{e}; \hat{\bm{\beta}})-\mathcal{C}(\bm{c}; \hat{\bm{\beta}})-\frac{C_5}{2}\cdot \|V(\bm{e})-V(\bm{c})\|_1]   \leq 0 \Big) \rightarrow 1.
 \end{align}
 Finally, from the results of \eqref{nn:bound} and \eqref{cc:bound}, we are able to show that
 \begin{align*}
 P(\max_{\bm{e} \in \mathscr{E}} ~[\mathcal{N}(\textbf{e}; \{\hat{B}_{ab}\})+ \mathcal{C}(\bm{e}; \hat{\bm{\beta}}) ] < \mathcal{N}(\textbf{c}; \{\hat{B}_{ab}\})+ \mathcal{C}(\bm{c}; \hat{\bm{\beta}})) \rightarrow 1.
 \end{align*}
 This completes the proof of strong consistency for $\hat{\bm{c}}$.
 
$\hfill \Box$

\vspace{1cm}

\noindent \emph{\textbf{Proof of Theorem \ref{thm:mfvi}}}. It is not difficult to verify that the same results of Lemma 3 and Theorem 3 in \cite{bickel2013asymptotic} hold under NSBM. We can thus conclude,
\begin{align*}
\sqrt{n}(\check{\bm{\beta}}-\bm{\beta}_0)\rightarrow N({\bf 0}, I^{-1}(\bm{\beta}_0)), \quad  \sqrt{n^2\rho_n} \log \frac{\check{B}_{ab}}{B^0_{ab}}=O_p(1), 1\leq a, b \leq K.
\end{align*}
The proof for the strong consistency of $\check{\bm{c}}$ follows the same lines of arguments as for $\hat{\bm{c}}$ in Theorem \ref{thm:mle}. We hence do not repeat the steps here.

$\hfill \Box$

\vspace{0.5cm}

\noindent \emph{\textbf{Proof of Theorem \ref{thm:pmle}}}. We first focus on analyzing $\tilde{\bm{c}}$. Define
\begin{align*}
\mathcal{N}(\bm{e})&=\frac{1}{\mu_n}\sum_{ab} \Big[ O_{ab}(\bm{e}) \log \frac{O_{ab}(\bm{e})}{n_a(\bm{e})n_b(\bm{e})\rho_n}- O_{ab}(\bm{e}) \Big], \\
\mathcal{C}(\bm{e})&=\frac{1}{\mu_n}\max_{\substack{\bm{\beta} \in \mathbb{R}^{Kp}, \\ \bm{\beta}_K=\bm{0}}} \sum_{i}\Big[ \bm{\beta}^T_{e_i}\bm{x}_i- \log \Big( \sum_{k=1}^K\exp({\bm{\beta}^T_{k}\bm{x}_i}) \Big) \Big].
\end{align*}

According to \eqref{pmle:eq}, $\tilde{\bm{c}}$ can be equivalently expressed as
\begin{align*}
\tilde{\bm{c}}=\argmax_{\bm{e} \in \{1,\dots, K\}^n} [\mathcal{N}(\bm{e})+ \mathcal{C}(\bm{e})].
\end{align*} 

Under the conditions in part (i), we claim that the following results hold: $\exists~ C_1,C_2>0$ such that
\begin{itemize}
\item[$(\mathcal{A})$] $P(\max_{\bm{e}} |\mathcal{N}(\bm{e})+\mathcal{C}(\bm{e})-F(T(\bm{e}), \bm{f}^0(\bm{e}))| < C_1(n\rho_n)^{-1//2}) \rightarrow 1$, ~as~$n \rightarrow \infty$.
\item[$(\mathcal{B})$] for larger $n$, 
$$\min_{\{\bm{e}: \|V(\bm{e})-V(\bm{c})\|_1\geq \frac{3C_1}{C_2}(n\rho_n)^{-1/2} \}} [F(T(\bm{c}), \bm{f}^0(\bm{c}))-F(T(\bm{e}), \bm{f}^0(\bm{e}))] \geq 3C_1 (n\rho_n)^{-1/2}.$$
\end{itemize}
Suppose for now the two results above are correct. Choosing $\delta_n =\frac{3 C_1}{C_2}(n\rho_n)^{-1/2}$, we can obtain that
\begin{align*}
&~ \max_{\{\bm{e}: \|V(\bm{e})-V(\bm{c})\|_1\geq \delta_n \}} [\mathcal{N}(\bm{e})+\mathcal{C}(\bm{e})-\mathcal{N}(\bm{c})-\mathcal{C}(\bm{c})] \\
\leq&~   \max_{\{\bm{e}: \|V(\bm{e})-V(\bm{c})\|_1\geq \delta_n \}} [\mathcal{N}(\bm{e})+\mathcal{C}(\bm{e})-F(T(\bm{e}),\bm{f}^0(\bm{e}))]\\
&~+\max_{\{\bm{e}: \|V(\bm{e})-V(\bm{c})\|_1\geq \delta_n \}} [F(T(\bm{e}),\bm{f}^0(\bm{e}))-F(T(\bm{c}), \bm{f}^0(\bm{c}))]\\
&~+ [F(T(\bm{c}),\bm{f}^0(\bm{c}))-\mathcal{N}(\bm{c})-\mathcal{C}(\bm{c})] \\
\leq&~ 2 \max_{\bm{e}} |\mathcal{N}(\bm{e})+\mathcal{C}(\bm{e})-F(T(\bm{e}),\bm{f}^0(\bm{e}))|-3 C_1 (n\rho_n)^{-1/2}, 
\end{align*}
for large enough $n$. Hence as $n\rightarrow \infty$,
\begin{align*}
&~P\Big(\max_{\{\bm{e}: \|V(\bm{e})-V(\bm{c})\|_1\geq \delta_n \}} [\mathcal{N}(\bm{e})+\mathcal{C}(\bm{e})] < \mathcal{N}(\bm{c})+\mathcal{C}(\bm{c}) \Big) \\
\geq &~ P(\max_{\bm{e}} |\mathcal{N}(\bm{e})+\mathcal{C}(\bm{e})-F(T(\bm{e}), \bm{f}^0(\bm{e}))| < C_1(n\rho_n)^{-1/2})  \rightarrow 1.
\end{align*}
This would lead to the first result in part (i). Regarding the proof of $(\mathcal{A})$, we follow similar arguments in the proof of Theorem 4.1 of \cite{zhao2012consistency}. To save space, we do not detail all the calculations. The key steps are to show that $\exists ~C_3, C_4, C_5 >0,$ such that as $n\rightarrow \infty$
\begin{align}
&P(\max_{\bm{e}} \max_{ab} |O_{ab}(\bm{e})/\mu_n-\hat{T}_{ab}(\bm{e}) | < C_3 (n\rho_n)^{-1/2}) \rightarrow 1, \label{one:one}\\
& P ( \max_{a} |\hat{\pi}_a-\pi_a | <C_4 (n\rho_n)^{-1/2}) \rightarrow 1, \label{one:two} \\
& P(\max_{\bm{e}} |\mathcal{C}(\bm{e})| \leq C_5 (n\rho_n)^{-1/2}) \rightarrow 1. \label{one:three}
\end{align}
The result \eqref{one:one} can be obtained from \eqref{concen:one} by choosing $C_3$ sufficiently large. The condition $\rho_n \rightarrow 0$ combined with a direct application of Hoeffding's inequality gives \eqref{one:two}. And it is straightforward to verify that for any $\bm{e}\in \{1,2,\dots, K\}^n$, 
\begin{align*}
0 \geq \mathcal{C}(\bm{e}) \geq \frac{1}{\mu_n}\sum_{i}\Big[ \bm{0}^T_{e_i}\bm{x}_i- \log \Big( \sum_{k=1}^K\exp({\bm{0}^T_{k}\bm{x}_i}) \Big) \Big]=\frac{-n\log K}{\mu_n}.
\end{align*}
This together with the condition $n \rho_n \rightarrow \infty$ yields \eqref{one:three}. To prove $(\mathcal{B})$, we first use Lemma 1:
\begin{align}
F(T(\bm{c}), \bm{f}^0(\bm{c}))-F(T(\bm{e}), \bm{f}^0(\bm{e})) \geq C_6 \cdot \|V(\bm{e})-V(\bm{c})\|_1, \mbox{~if~}\|V(\bm{e})-V(\bm{c})\|_1\leq C_7, \label{result2:three}
\end{align}
where $C_6, C_7$ are two positive constants. Furthermore, since $V(\bm{c})$ is the unique maximizer of $F(T(\bm{e}), f^0(\bm{e}))$ as a function of $V(\bm{e}) \in \mathcal{V}$ (see the proof of Theorem 3.4 in \cite{zhao2012consistency}), we have
\begin{align}
\min_{\{\bm{e}: \|V(\bm{e})-V(\bm{c})\|_1 > C_7 \}}[F(T(\bm{c}), \bm{f}^0(\bm{c}))-F(T(\bm{e}), \bm{f}^0(\bm{e}))]  >3C_1(n\rho_n)^{-1/2}, \label{result2:four}
\end{align}
for large $n$. On the other hand, \eqref{result2:three} implies that
\begin{align}\label{result2:five}
\min_{\{\bm{e}: ~C_7 \geq \|V(\bm{e})-V(\bm{c})\|_1\geq \frac{3 C_1}{C_6}(n\rho_n)^{-1/2} \}} [F(T(\bm{c}), \bm{f}^0(\bm{c}))-F(T(\bm{e}), \bm{f}^0(\bm{e}))] \geq 3 C_1 (n\rho_n)^{-1/2}.
\end{align}
Finally \eqref{result2:four} and \eqref{result2:five} together finishes the proof of $(\mathcal{B})$ with the choice $C_2=C_6$.

The next step is to prove the strong consistency of $\tilde{\bm{c}}$ in part (ii). To derive strong consistency, we also need to show
\begin{align}\label{strcon:ref}
P\Big(\max_{\{\bm{e}: 0 < \|V(\bm{e})-V(\bm{c})\|_1< \delta_n \}} [\mathcal{N}(\bm{e})+\mathcal{C}(\bm{e})] < \mathcal{N}(\bm{c})+\mathcal{C}(\bm{c}) \Big) \rightarrow 1,
\end{align}
which requires a refined analysis. Denote $\mathcal{E}=\{\bm{e}: 0 < \|V(\bm{e})-V(\bm{c})\|_1< \delta_n\}$. We make use of the existing result: $\exists~C_7 >0$ such that as $n\rightarrow \infty$,
\begin{align*}
P\big(\max_{\bm{e} \in \mathcal{E}} [\mathcal{N}(\bm{e})-\mathcal{N}(\bm{c})+C_7\cdot \|V(\bm{e})-V(\bm{c})\|_1] <0 \big) \rightarrow 1.
\end{align*}
This is obtained by combining (A.13) in \cite{zhao2012consistency} and (1.1) in \cite{bickel2015correction}. If we can show 
\begin{align}\label{logistic:ref}
P\Big(\max_{\bm{e} \in \mathcal{E}} \Big[ \mathcal{C}(\bm{e})-\mathcal{C}(\bm{c}) - C_7\cdot \|V(\bm{e})-V(\bm{c})\|_1 \Big] \leq 0 \Big) \rightarrow 1,
\end{align}
the result \eqref{strcon:ref} will be proved by simply putting together the last two high probability arguments. To derive \eqref{logistic:ref}, we introduce several notations. Denote
\begin{align*}
&\mathcal{R}(\bm{\beta})=\mathbb{E} \Bigg[\sum_{k=1}^K\bm{\beta}_k^T\bm{x}\mathbbm{1}(c=k)-\log \Big(\sum_{k=1}^K \exp({\bm{\beta}_k^T\bm{x}}) \Big) \Bigg], \\
&\mathcal{R}_n(\bm{\beta}, \bm{e})=\frac{1}{n} \sum_{i=1}^n\Bigg[\sum_{k=1}^K\bm{\beta}_k^T\bm{x}_i\mathbbm{1}(e_i=k)-\log\Big(\sum_{k=1}^K\exp({\bm{\beta}_k^T\bm{x}_i}) \Big)\Bigg].
\end{align*}
Observe that $\mathcal{R}_n(\bm{\beta}, \bm{c})$ is the sample version of $\mathcal{R}(\bm{\beta})$. We further define an M-estimator:
\begin{align}\label{m:estimator}
\hat{\bm{\beta}}(\bm{e})=\argmax_{\bm{\beta}} \mathcal{R}_n(\bm{\beta}, \bm{e}).
\end{align}
According to the Convexity Lemma (Lemma 7.75) in \cite{liese2007statistical}, since $-\mathcal{R}_n(\bm{\beta}, \bm{c})$ is a convex stochastic process and $\mathcal{R}_n(\bm{\beta}, \bm{c}) \overset{P}{\rightarrow} \mathcal{R}(\bm{\beta})$, we can obtain 
\begin{align}\label{uni:one}
\sup_{\|\bm{\beta}-\bm{\beta}_0\|_2 \leq \epsilon} | \mathcal{R}_n(\bm{\beta},\bm{c})-\mathcal{R}(\bm{\beta})| \overset{P}{\rightarrow} 0,
\end{align}
where $\epsilon $ is an arbitrary positive constant. Also note that
\begin{align}
&\sup_{\|\bm{\beta}-\bm{\beta}_0\|_2 \leq \epsilon} \sup_{\bm{e} \in \mathcal{E}} |\mathcal{R}_n(\bm{\beta},\bm{e})-\mathcal{R}(\bm{\beta})| \leq \sup_{\|\bm{\beta}-\bm{\beta}_0\|_2 \leq \epsilon} | \mathcal{R}_n(\bm{\beta},\bm{c})-\mathcal{R}(\bm{\beta})|\label{uni:two} \\
&~+\sup_{\|\bm{\beta}-\bm{\beta}_0\|_2 \leq \epsilon} \sup_{\bm{e} \in \mathcal{E} } \Big |\frac{1}{n} \sum_{i=1}^n \sum_{k=1}^K \bm{\beta}_k^T\bm{x}_i (\mathbbm{1}(c_i=k)-\mathbbm{1}(e_i=k)) \Big |,  \nonumber
\end{align}
and
\begin{align}
&~\sup_{\|\bm{\beta}-\bm{\beta}_0\|_2 \leq \epsilon} \sup_{\bm{e} \in \mathcal{E} } \Big |\frac{1}{n} \sum_{i=1}^n \sum_{k=1}^K \bm{\beta}_k^T\bm{x}_i (\mathbbm{1}(c_i=k)-\mathbbm{1}(e_i=k)) \Big |\label{uni:three}\\
\overset{(a)}{\leq} &~ C_8 \cdot \sup_{\bm{e} \in \mathcal{E} } \frac{1}{n} \sum_{i=1}^n \|\bm{x}_i\|_2 \cdot \mathbbm{1}(c_i\neq e_i) \nonumber \\
\overset{(b)}{\leq} &~C_8 \cdot \Big( \frac{1}{n}\sum_{i=1}^n\|\bm{x}_i\|_2^{\alpha} \Big)^{\frac{1}{\alpha}} \cdot \sup_{\bm{e} \in \mathcal{E} }  \Big(\frac{1}{n} \sum_{i=1}^n \mathbbm{1}(c_i\neq e_i) \Big)^{\frac{\alpha-1}{\alpha}} \nonumber \\
\leq &~C_9\cdot  \Big( \frac{1}{n}\sum_{i=1}^n\|\bm{x}_i\|_2^{\alpha} \Big)^{\frac{1}{\alpha}} \cdot (\delta_n)^{\frac{\alpha-1}{\alpha}} \overset{P}{\rightarrow} 0,  \nonumber
\end{align}
where $C_8, C_9 $ are two positive constants; (a) is by Cauchy-Schwarz inequality and (b) is due to H$\ddot{o}$lder's inequality. Putting \eqref{uni:one}, \eqref{uni:two} and \eqref{uni:three} together, we are able to show
\begin{align}\label{uni:new}
\sup_{\|\bm{\beta}-\bm{\beta}_0\|_2 \leq \epsilon} \sup_{\bm{e} \in \mathcal{E}} |\mathcal{R}_n(\bm{\beta},\bm{e})-\mathcal{R}(\bm{\beta})| \overset{P}{\rightarrow } 0.
\end{align}
Having the uniform convergence of $\mathcal{R}_n(\bm{\beta}, \bm{e})$, we next study the convergence of $\hat{\bm{\beta}}(\bm{e})$. However, the uniformity in \eqref{uni:new} only holds over a compact set. The result may not be applied directly. For this reason, we introduce an ancillary variable: $\bar{\bm{\beta}}(\bm{e})=\alpha \hat{\bm{\beta}}(\bm{e})+(1-\alpha)\bm{\beta}_0$, where $\alpha=\frac{\epsilon}{\epsilon + \|\hat{\bm{\beta}}(\bm{e})-\bm{\beta}_0\|_2}$. Clearly,
\begin{align*}
\|\bar{\bm{\beta}}(\bm{e})-\bm{\beta}_0\|_2=\alpha \|\hat{\bm{\beta}}(\bm{e})-\bm{\beta}_0\|_2 \leq \epsilon.
\end{align*}
Besides, since $\mathcal{R}_n(\bm{\beta}, \bm{e})$ is a concave function of $\bm{\beta}$ and $\hat{\bm{\beta}}(\bm{e})$ is its maximizer ,  we have
\begin{align*}
\mathcal{R}_n(\bar{\bm{\beta}}(\bm{e}), \bm{e}) \geq \alpha \mathcal{R}_n(\hat{\bm{\beta}}(\bm{e}), \bm{e})+(1-\alpha) \mathcal{R}_n(\bm{\beta}_0, \bm{e}) \geq  \mathcal{R}_n(\bm{\beta}_0, \bm{e}).
\end{align*}
Based on the last two inequalities and the fact that $\bm{\beta}_0$ is the maximizer of $\mathcal{R}(\bm{\beta})$, we can derive
\begin{align*}
0 &\leq \sup_{\bm{e} \in \mathcal{E}} ~[\mathcal{R}(\bm{\beta}_0)-\mathcal{R}(\bar{\bm{\beta}}(\bm{e}))]\\
&\le \sup_{\bm{e} \in \mathcal{E}}~ [\mathcal{R}(\bm{\beta}_0)-\mathcal{R}_n(\bm{\beta}_0, \bm{e})] + \sup_{\bm{e} \in \mathcal{E}}~[\mathcal{R}_n(\bm{\beta}_0, \bm{e})-\mathcal{R}_n(\bar{\bm{\beta}}(\bm{e}), \bm{e})] \\
&\quad +\sup_{\bm{e} \in \mathcal{E}}~[\mathcal{R}_n(\bar{\bm{\beta}}(\bm{e}),  \bm{e})-\mathcal{R}(\bar{\bm{\beta}}(\bm{e}))] \leq 2~ \sup_{\bm{e} \in \mathcal{E}} \sup_{\|\bm{\beta}-\bm{\beta}_0\|_2 \leq \epsilon} |\mathcal{R}_n(\bm{\beta},\bm{e})-\mathcal{R}(\bm{\beta})|  \overset{P}{\rightarrow } 0.
\end{align*}
Hence,
\begin{align}\label{obj:uni}
 \sup_{\bm{e} \in \mathcal{E}} ~[\mathcal{R}(\bm{\beta}_0)-\mathcal{R}(\bar{\bm{\beta}}(\bm{e}))]  \overset{P}{\rightarrow} 0.
\end{align}
Moreover, it can be directly verified that the maximizer $\bm{\beta}_0$ is unique and isolated. So $\forall ~\tilde{\epsilon}>0, \exists ~\eta >0$ such that 
\begin{align*}
P\Big(\sup_{\bm{e} \in \mathcal{E}}~ \|\bar{\bm{\beta}}(\bm{e})-\bm{\beta}_0\|_2>\tilde{\epsilon} \Big) \leq P\Big(\sup_{\bm{e} \in \mathcal{E}}~[ \mathcal{R}(\bm{\beta}_0)-\mathcal{R}(\bar{\bm{\beta}}(\bm{e}))] >\eta \Big) \rightarrow 0,
\end{align*}
where the last limit is implies by \eqref{obj:uni}. We thus have shown that $\sup_{\bm{e} \in \mathcal{E}}~\|\bar{\bm{\beta}}(\bm{e})-\bm{\beta}_0\|_2 \overset{P}{\rightarrow}  0$. It then leads to the consistency of $\hat{\bm{\beta}}(\bm{e})$:
\begin{align*}
\sup_{\bm{e} \in \mathcal{E}}~\|\hat{\bm{\beta}}(\bm{e})-\bm{\beta}_0\|_2\overset{(c)}=\sup_{\bm{e} \in \mathcal{E}}~ \frac{\epsilon \|\bar{\bm{\beta}}(\bm{e})-\bm{\beta}_0\|_2}{\epsilon - \|\bar{\bm{\beta}}(\bm{e})-\bm{\beta}_0\|_2} \leq \frac{\epsilon \sup_{\bm{e} \in \mathcal{E}}~  \|\bar{\bm{\beta}}(\bm{e})-\bm{\beta}_0\|_2}{\epsilon -\sup_{\bm{e} \in \mathcal{E}}~  \|\bar{\bm{\beta}}(\bm{e})-\bm{\beta}_0\|_2} \overset{P}{\rightarrow} 0,
\end{align*}
where $(c)$ is due to the definition of $\bar{\beta}(\bm{e})$. We are now in the position to derive \eqref{logistic:ref}.
\begin{align*}
&~\mathcal{C}(\bm{e})-\mathcal{C}(\bm{c})\\
=&~\frac{n}{\mu_n}\Big[ \mathcal{R}_n(\hat{\bm{\beta}}(\bm{e}),\bm{e})-\mathcal{R}_n(\hat{\bm{\beta}}(\bm{c}),\bm{c})\Big] \\
=&~\frac{n}{\mu_n}\Big[ \mathcal{R}_n(\hat{\bm{\beta}}(\bm{e}),\bm{e})-\mathcal{R}_n(\hat{\bm{\beta}}(\bm{e}),\bm{c})+  \mathcal{R}_n(\hat{\bm{\beta}}(\bm{e}),\bm{c})-\mathcal{R}_n(\hat{\bm{\beta}}(\bm{c}),\bm{c}) \Big  ] \\
\leq&~ \frac{n}{\mu_n}\Big[ \mathcal{R}_n(\hat{\bm{\beta}}(\bm{e}),\bm{e})-\mathcal{R}_n(\hat{\bm{\beta}}(\bm{e}),\bm{c})\Big] =\frac{1}{\mu_n} \sum_{i=1}^n \sum_{k=1}^K \hat{\bm{\beta}}_k^T(\bm{e})\bm{x}_i \cdot  [ \mathbbm{1}(e_i=k)-\mathbbm{1}(c_i=k)] \\
\leq&~ \frac{2}{\mu_n} \sum_{i=1}^n |\hat{\bm{\beta}}_k^T(\bm{e})\bm{x}_i |\cdot   \mathbbm{1}(e_i \neq c_i) \leq \frac{2 \|\hat{\bm{\beta}}(\bm{e})\|_2}{\mu_n} \sum_{i=1}^n \|\bm{x}_i\|_2 \cdot  \mathbbm{1}(e_i \neq c_i)  \\
\leq&~  \frac{2\|\hat{\bm{\beta}}(\bm{e})\|_2  \cdot \max_i\|\bm{x}_i\|_2}{\mu_n} \sum_{i=1}^n    \mathbbm{1}(e_i \neq c_i) \leq  \frac{n \|\hat{\bm{\beta}}(\bm{e})\|_2  \cdot \max_i\|\bm{x}_i\|_2}{\mu_n}\|V(\bm{e})-V(\bm{c})\|_1.
\end{align*}
Therefore, we can obtain
\begin{align}\label{inter:one}
&~\sup_{\bm{e} \in \mathcal{E}}~ \Big[\mathcal{C}(\bm{e})-\mathcal{C}(\bm{c}) - C_7 \cdot \|V(\bm{e})-V(\bm{c})\|_1\Big]  \\
\leq&~ \sup_{\bm{e} \in \mathcal{E}}~ \Big(\frac{n \|\hat{\bm{\beta}}(\bm{e})\|_2  \cdot \max_i\|\bm{x}_i\|_2}{\mu_n}-C_7 \Big) \|V(\bm{e})-V(\bm{c})\|_1. \nonumber
\end{align}
Because $\sup_{\bm{e} \in \mathcal{E}}~\|\hat{\bm{\beta}}(\bm{e})-\bm{\beta}_0\|_2 \overset{P}{\rightarrow}0 $, it is straightforward to show
\begin{align}\label{inter:two}
P(\sup_{\bm{e} \in \mathcal{E}}~\|\hat{\bm{\beta}}(\bm{e})\|_2 \leq 2 \|\bm{\beta}_0\|_2 ) \rightarrow 1.
\end{align}
According to the condition $P(\|\bm{x}\|_2 > t)=O(e^{-\kappa_2 t})$ and $\frac{n \rho_n}{\log n} \rightarrow \infty$, we have for any positive constant $c_0$,
\begin{align*}
 n P\Big(\frac{n}{\mu_n}\|\bm{x}\|_2 > c_0 \Big)=O(\exp\{\log n -\kappa_2 c_0n\rho_n\})=o(1),
\end{align*}
resulting in 
\begin{align}\label{inter:three}
\lim_{n\rightarrow \infty }P\Big(\frac{n}{\mu_n}\max_i \|\bm{x}_i\|_2 \leq c_0 \Big) &=\lim_{n \rightarrow \infty}\Big[1-P\Big(\frac{n}{\mu_n}\|\bm{x}\|_2 > c_0 \Big)\Big]^n\\
&=\lim_{n\rightarrow \infty} e^{-nP(\frac{n}{\mu_n}\|\bm{x}\|_2 > c_0 )}=1.\nonumber 
\end{align}
Choosing $c_0=\frac{C_7}{2 \|\bm{\beta}_0\|_2}$ and combining \eqref{inter:two} and \eqref{inter:three} lead to
\begin{align*}
&~ P\Big(\sup_{\bm{e} \in \mathcal{E}}~\Bigg[ \frac{n \|\hat{\bm{\beta}}(\bm{e})\|_2  \cdot \max_i \|\bm{x}_i\|_2}{\mu_n}-C_7 \Bigg] \leq 0 \Big) \\
\geq&~ P(\sup_{\bm{e} \in \mathcal{E}}~ \|\hat{\bm{\beta}}(\bm{e})\|_2 \leq 2 \|\bm{\beta}_0\|_2)-P\Big(\frac{n}{\mu_n}\max_i \|\bm{x}_i\|_2 > \frac{C_7}{2 \|\bm{\beta}_0\|_2} \Big) \rightarrow 1.
\end{align*}
The result \eqref{logistic:ref} can be  proved by combining the previous line with \eqref{inter:one}.

We now turn to deriving the consistency of $\tilde{\beta}$. We first prove the result in part (i). We have showed that
\begin{align*}
\sup_{\bm{e} \in \mathcal{E}}~\|\hat{\bm{\beta}}(\bm{e})-\bm{\beta}_0\|_2 \overset{P}{\rightarrow} 0,  \quad P(\|V(\tilde{\bm{c}})-V(\bm{c})\|_1 < \delta_n) \rightarrow 1.
\end{align*}
Hence, for any given $\epsilon >0$,
\begin{align*}
P(\|\tilde{\bm{\beta}}-\bm{\beta}_0\|_2<\epsilon)&=P(\|\hat{\bm{\beta}}(\tilde{\bm{c}})-\bm{\beta}_0\|_2<\epsilon) \\
&\geq P\Big(\sup_{\bm{e} \in \mathcal{E}}~\|\hat{\bm{\beta}}(\bm{e})-\bm{\beta}_0\|_2 < \epsilon,\|V(\tilde{\bm{c}})-V(\bm{c})\|_1< \delta_n \Big ) \\
&\geq P\Big(\sup_{\bm{e} \in \mathcal{E}}~\|\hat{\bm{\beta}}(\bm{e})-\bm{\beta}_0\|_2 < \epsilon \Big)-P\Big(\|V(\tilde{\bm{c}})-V(\bm{c})\|_1  \geq \delta_n \Big ) \rightarrow 1,
\end{align*}
which shows that $\|\tilde{\bm{\beta}}-\bm{\beta}_0\|_2\overset{P}{\rightarrow} 0$. To derive the convergence rate, denote $p_k(\bm{x};\bm{\beta})=\frac{\exp({\bm{\beta}_k^T\bm{x}})}{\sum_{k=1}^K\exp({\bm{\beta}_k^T\bm{x}})}$. We may suppress the dependency on $\bm{x}$ or $\bm{\beta}$ when it is clear from the context. By a first order Taylor expansion of $\nabla \mathcal{R}_n(\hat{\beta}(\tilde{\bm{c}}), \tilde{\bm{c}})$ around $\bm{\beta}_0$, we have\footnote{For notational simplicity, we have included $k=K$ in the subsequent arguments, though we should keep in mind that $\bm{\beta}_K\equiv \bm{0}$.}
\begin{align}
\bm{0}=\frac{1}{n}\sum_{i=1}^n [\mathbbm{1}(\tilde{c}_i=k)-p_k(\bm{x}_i;\bm{\beta}_0)]\bm{x}_i +\frac{1}{n}\sum_{i=1}^nH_k(\bm{x}_i; \acute{\bm{\beta}}^k) (\hat{\bm{\beta}}(\tilde{\bm{c}})-\bm{\beta}_0), \quad   1\leq k \leq K,  \label{taylor:exp}
\end{align}
where $$H_k(\bm{x}; \bm{\beta})\triangleq (p_kp_1\bm{x}\bm{x}^T,  \dots, p_kp_{k-1}\bm{x}\bm{x}^T, -p_k(1-p_k)\bm{x}\bm{x}^T, p_kp_{k+1}\bm{x}\bm{x}^T, \dots, p_kp_{K}\bm{x}\bm{x}^T)$$ and $\acute{\bm{\beta}}^k$ is between $\hat{\bm{\beta}}(\tilde{\bm{c}})$ and $\bm{\beta}_0$. Since $\|\hat{\bm{\beta}}(\tilde{\bm{c}})-\bm{\beta}_0\|_2 \overset{P}{\rightarrow} 0$\footnote{We recall that $\hat{\bm{\beta}}(\tilde{\bm{c}})=\tilde{\bbeta}$, according to our definition \eqref{m:estimator}. }, $\|\acute{\bm{\beta}}^k-\bm{\beta}_0\|_2 \overset{P}{\rightarrow}0$. We can then apply Theorem 9.4 in \cite{keener2010} to conclude that
\begin{align}\label{combine:zero}
\frac{1}{n}\sum_{i=1}^nH_k(\bm{x}_i;\acute{\bm{\beta}}^k)\overset{P}{\rightarrow} \mathbb{E}H_k(\bm{x}; \bm{\beta}_0), \quad 1 \leq k \leq K.
\end{align}

Next we analyze the term $\frac{1}{n}\sum_{i=1}^n [\mathbbm{1}(\tilde{c}_i=k)-p_k(\bm{x}_i; \bm{\beta}_0)] \bm{x}_i $ in  \eqref{taylor:exp}. First note that $\frac{1}{\sqrt{n}}\sum_{i=1}^n [\mathbbm{1}(c_i=k)-p_k(\bm{x}_i; \bm{\beta}_0)] \bm{x}_i =O_p(1)$. Also,
\begin{align*}
&~\Big \|\frac{1}{n}\sum_{i=1}^n [\mathbbm{1}(\tilde{c}_i=k)-\mathbbm{1}(c_i=k)] \bm{x}_i \Big \|_2 \leq \frac{1}{n} \sum_{i=1}^n \mathbbm{1}(\tilde{c_i} \neq c_i) \cdot \|\bm{x}_i\|_2\\
\leq &~ \Big(\frac{1}{n}\sum_{i=1}^n \mathbbm{1}(\tilde{c}_i\neq c_i) \Big)^{\frac{\alpha-1}{\alpha}} \cdot \Big(\frac{1}{n}\sum_{i=1}^n \|\bm{x}_i\|_2^{\alpha} \Big)^{\frac{1}{\alpha}}
\end{align*}
Therefore, we can obtain 
\begin{align}\label{piece:final}
\frac{1}{n}\sum_{i=1}^n [\mathbbm{1}(\tilde{c}_i=k)-p_k(\bm{x}_i; \bm{\beta}_0)] \bm{x}_i =O_p((n\rho_n)^{\frac{1-\alpha}{2\alpha}}).
\end{align}
Based on \eqref{taylor:exp}, \eqref{combine:zero} and \eqref{piece:final}, it is straightforward to derive the second result of part (i). Regarding the asymptotic normality of $\tilde{\bm{\beta}}$ in part (ii), we have already proved that $P(\tilde{\bm{c}} = \bm{c}) \rightarrow 1$. From standard asymptotic results, we know that $\sqrt{n}(\hat{\bm{\beta}}(\bm{c})-\bm{\beta}_0)\overset{d}{\rightarrow} N({\bf 0}, I^{-1}(\bm{\beta}_0))$. The proof can be finished by showing that $\sqrt{n}(\hat{\bm{\beta}}(\bm{c})-\tilde{\bm{\beta}}) \overset{P}{\rightarrow} 0$, with one line of arguments as follows.
\begin{align*}
P(|\sqrt{n}(\hat{\bm{\beta}}(\bm{c})-\tilde{\bm{\beta}})| <\epsilon)=P(|\sqrt{n}(\hat{\bm{\beta}}(\bm{c})-\hat{\bm{\beta}}(\tilde{\bm{c}}))| <\epsilon) \geq P(\tilde{\bm{c}} = \bm{c}) \rightarrow 1.
\end{align*}

$\hfill \Box$

\vspace{0.5cm}

\noindent \emph{\textbf{Proof of Theorem \ref{thm:sdp}}}. The proof is motivated by the arguments presented in \cite{guedon2015community}. To make the analysis compact, we will use the concentration inequality proved there:
\begin{align}\label{con:sdp}
P\Big(\bm{z}^T[A-\mathbb{E}(A\mid \bm{c})]\bm{y} > \frac{n(n-1)t}{2}   \mid \bm{c} \Big)\leq \mbox{exp}\Big(-\frac{n(n-1)t^2}{16\bar{p}(\bm{c})+8t/3}\Big),
\end{align}
where $\bar{p}(\bm{c})=\frac{2}{n(n-1)}\sum_{i<j}\mbox{Var}(A_{ij} \mid \bm{c}), \bm{z},\bm{y}\in \{-1,1\}^n$. The above result is a direct application of Bernstein's inequality (See Page 14 in \cite{guedon2015community}). Let $M(\bm{c}) \in \mathbb{R}^{n \times K}$ with $M_{ik}(\bm{c})=1$ if $c_i=k$; $M_{ik}(\bm{c})=0$ otherwise; $\bar{Z}(\bm{c})=M(\bm{c}) M^T(\bm{c})$ and
\begin{align*}
\mathcal{M}_{\bm{c}}=\Big \{Z \in \mathbb{R}^{n \times n}: Z\succeq 0, 0\leq Z_{ij}\leq 1, \sum_{ij}Z_{ij}=\sum_{k=1}^K\big(\sum_{i=1}^n\mathbbm{1}(c_i=k) \big)^2\Big \}.
\end{align*}
It is straightforward to verify that $\mathbb{E}(A\mid \bm{c})=\rho_nM(\bm{c})\bar{B}M^T(\bm{c})$ and $\bar{Z}_{ij}(\bm{c})=1$ if and only if $c_i=c_j$.
 
Before proceeding, define $S_n(Z) =  \langle A +\gamma_nXX^T, Z \rangle$ and  $S(Z) = \langle \mathbb{E}(A \mid \bm{c})+\gamma_n\mathbb{E}(X\mid \bm{c})\mathbb{E}(X^T\mid \bm{c}),  Z \rangle$. Then, by the definition in \eqref{sdp:formula}, we have $\hat Z = \arg\max_{Z\in  \mathcal{M}_{\bm{c}}} S_n(Z)$. 
Hence for any $Z \in \mathcal{M}_{\bm{c}}$, we have
\begin{align}\label{strong:convexity}
&~S(\bar{Z}(\bm{c})) - S(Z)\\
&=\langle \mathbb{E}(A \mid \bm{c})+\gamma_n\mathbb{E}(X\mid \bm{c})\mathbb{E}(X^T\mid \bm{c}),  \bar{Z}(\bm{c})-Z \rangle  \nonumber  \\
 &=\sum_{ij} (\rho_n\bar{B}_{c_ic_j}+\gamma_n\mathbb{E}(\bm{x}_i^T\mid c_i)\mathbb{E}(\bm{x}_j \mid c_j))\cdot (\bar{Z}_{ij}(\bm{c})-Z_{ij})\cdot \mathbbm{1}(c_i=c_j) \nonumber \\
 &\quad -\sum_{ij} (\rho_n\bar{B}_{c_ic_j}+\gamma_n\mathbb{E}(\bm{x}_i^T\mid c_i)\mathbb{E}(\bm{x}_j \mid c_j))\cdot (Z_{ij}-\bar{Z}_{ij}(\bm{c})) \cdot \mathbbm{1}(c_i \neq c_j)  \nonumber \\
 &\overset{(a)}{\geq} U \cdot \sum_{ij}(\bar{Z}_{ij}(\bm{c})-Z_{ij})\cdot \mathbbm{1}(c_i=c_j)-L \cdot \sum_{ij} (Z_{ij}-\bar{Z}_{ij}(\bm{c})) \cdot \mathbbm{1}(c_i \neq c_j) \nonumber \\
 &\overset{(b)}{\geq} \frac{U-L}{2} \cdot \|\bar{Z}(\bm{c})-Z\|_1, \nonumber
\end{align}
where $U=\min_{1 \leq k \leq K}\{\rho_n\bar{B}_{kk}+\gamma_n\mathbb{E}(\bm{x}^T\mid c=k)\mathbb{E}(\bm{x}\mid c=k)\}, L=\max_{a\neq b}\{\rho_n\bar{B}_{ab}+\gamma_n\mathbb{E}(\bm{x}^T\mid c=a)\mathbb{E}(\bm{x}\mid c=b)\}$; in (a) we have used the fact that $\bar{Z}_{ij}(\bm{c})\geq Z_{ij}$ if $c_i=c_j$ and $\bar{Z}_{ij}(\bm{c})\leq Z_{ij}$ otherwise; (b) holds because  $\sum_{ij}(\bar{Z}_{ij}(\bm{c})-Z_{ij})\cdot \mathbbm{1}(c_i=c_j)=\sum_{ij}(Z_{ij}-\bar{Z}_{ij}(\bm{c}))\cdot \mathbbm{1}(c_i \neq c_j)=\frac{1}{2}\|\bar{Z}(\bm{c})-Z\|_1$. Note that the conditions we assumed imply $U>L>0$ when $n$ is large enough. 
In addition, we observe that 
\begin{align}\label{eq:tri-Zbar-Zhat}
S(\bar{Z}(\bm{c})) - S(\hat Z)&\le [S(\bar{Z}(\bm{c})) - S_n(\bar{Z}(\bm{c}))] + [S_n(\bar{Z}(\bm{c}))  - S_n(\hat Z)] + [S_n(\hat Z) - S(\hat Z)]\\
&\le 2\max_{Z \in \mathcal{M}_{\bm{c}}}|S_n(Z)-S(Z)|\nonumber,
\end{align}
since $S_n(\bar{Z}(\bm{c}))  - S_n(\hat Z)\le 0$.

 Moreover, by applying the Grothendieck's inequality described in Section 3.1 of \citep{guedon2015community},
 we can have
\begin{align}
&~\max_{Z \in \mathcal{M}_{\bm{c}}}|S_n(Z)-S(Z)|\label{uniform:bound}\\
= &~\max_{Z \in \mathcal{M}_{\bm{c}}} |\langle A- \mathbb{E}(A \mid \bm{c})+\gamma_n XX^T-\gamma_n\mathbb{E}(X\mid \bm{c})\mathbb{E}(X^T\mid \bm{c}), Z  \rangle| \nonumber  \\
\leq &~2 \max_{\bm{z}, \bm{y} \in \{-1,1\}^n}\bm{z}^T[A- \mathbb{E}(A \mid \bm{c})+\gamma_n XX^T-\gamma_n\mathbb{E}(X\mid \bm{c})\mathbb{E}(X^T\mid \bm{c})]\bm{y}.  \nonumber
\end{align}
Combining \eqref{strong:convexity}, \eqref{eq:tri-Zbar-Zhat}, and \eqref{uniform:bound}, it is not hard to obtain
\begin{align}\label{first:step}
\|\bar{Z}(\bm{c})-\hat{Z}\|_1 \leq \frac{8}{U-L} \cdot \max_{\bm{z}, \bm{y} \in \{-1,1\}^n}&\bm{z}^T[A- \mathbb{E}(A \mid \bm{c})+\gamma_n XX^T\\
&-\gamma_n\mathbb{E}(X\mid \bm{c})\mathbb{E}(X^T\mid \bm{c})]\bm{y}.\nonumber 
\end{align}

We now bound the term on the right hand side of the above inequality. Note that for sufficiently large $n$
\begin{align*}
\bar{p}(\bm{c})=\frac{2}{n(n-1)}\sum_{i<j} \rho_n\bar{B}_{c_ic_j}(1-\rho_n\bar{B}_{c_ic_j})\geq \frac{1}{n(n-1)}\min_{ab}\bar{B}_{ab}\sum_{i<j}\rho_n=\frac{\min_{ab}\bar{B}_{ab}}{2}\cdot \rho_n.
\end{align*}
Therefore, according to \eqref{con:sdp}, choosing $t=n^{-1/3}\rho_n^{2/3}$ yields that there exists $C_1>0$ such that when $n$ is large
\begin{align*}
P\Big(\bm{z}^T[A-\mathbb{E}(A\mid \bm{c})]\bm{y} > n^{5/3}\rho_n^{2/3}  \mid \bm{c} \Big)\leq \mbox{exp}\big(-C_1 n\cdot(n\rho_n)^{1/3} \big),
\end{align*}
for any $\bm{z}, \bm{y}\in \{-1,1\}^n$. Using the union bound, we can conclude 
\begin{align}\label{unibound:one}
P\Big(\max_{\bm{z}, \bm{y}\in \{-1,1\}^n}\bm{z}^T[A-\mathbb{E}(A\mid \bm{c})]\bm{y} > n^{5/3}\rho_n^{2/3}  \Big)\leq 2^{2n} \cdot \mbox{exp}\big(-C_1 n\cdot(n\rho_n)^{1/3} \big) \rightarrow 0.
\end{align}
Regarding the bound on $\gamma_n XX^T-\gamma_n\mathbb{E}(X\mid \bm{c})\mathbb{E}(X^T\mid \bm{c})$, we first denote the $i$-th column of $X$ by $\bm{d}_i$. Since $\bm{d}_i \mid \bm{c}$ has independent sub-Gaussian elements\footnote{This can be directly shown from the condition that $\|\bm{x}\|_2$ is sub-Gaussian.}, we can apply Hoeffding's inequality to obtain
\begin{align*}
P\big(|\bm{y}^T(\bm{d}_i-\mathbb{E}(\bm{d}_i\mid \bm{c}))| > nt \mid \bm{c} \big) \leq 2e^{-C_2nt^2},
\end{align*}
where $C_2>0$ is a constant and $\bm{y}\in \{1,-1\}^n$. Note that we can choose $C_2$ small enough to guarantee that the above  inequality holds for all $\bm{c}$. Accordingly, we have
\begin{align}\label{inter:media}
P\big(|\bm{y}^T(\bm{d}_i-\mathbb{E}(\bm{d}_i\mid \bm{c}))| > nt  \big) \leq 2 e^{-C_2nt^2}.
\end{align}
This implies the following bound for large $t$, $\bm{y},\bm{z} \in \{-1,1\}^n$,
\begin{align}\label{inter:media2}
&~P(|\bm{z}^T[\bm{d}_i\bm{d}_i^T-\mathbb{E}(\bm{d}_i\mid \bm{c})\mathbb{E}(\bm{d}^T_i\mid \bm{c})]\bm{y}| > n^2t^2)\\
\leq &~P(|\bm{z}^T(\bm{d}_i-\mathbb{E}(\bm{d}_i\mid \bm{c}))\cdot \bm{y}^T(\bm{d}_i-\mathbb{E}(\bm{d}_i\mid \bm{c}))| >n^2t^2/3 ) \nonumber \\
&~+ P(|\bm{z}^T(\bm{d}_i-\mathbb{E}(\bm{d}_i\mid \bm{c}))\cdot \bm{y}^T\mathbb{E}(\bm{d}_i\mid \bm{c})| >n^2t^2/3 )\nonumber\\
&~+P(|\bm{z}^T\mathbb{E}(\bm{d}_i\mid \bm{c})\cdot \bm{y}^T(\bm{d}_i-\mathbb{E}(\bm{d}_i\mid \bm{c}))| >n^2t^2/3 ) \nonumber \\
\overset{(c)}{\leq} &~4e^{-C_3nt^2}+2e^{-C_4nt^4}+2e^{-C_4nt^4}\leq 8e^{-C_5nt^2}. \nonumber
\end{align}
where $C_3, C_4, C_5$ are positive constants; we have used \eqref{inter:media} and the fact that \\$\max_{\bm{y}\in \{-1,1\}^n}|\bm{y}^T\mathbb{E}(\bm{d}_i\mid \bm{c})| =O(n) $ to obtain (c). Applying the inequality \eqref{inter:media2} enables us to conclude that
\begin{align*}
&~P(\gamma_n |\bm{z}^T(XX^T-\mathbb{E}(X\mid \bm{c})\mathbb{E}(X^T\mid \bm{c}))\bm{y}| > n^2t^2)\\
 \leq&~ p\cdot P(|\bm{z}^T[\bm{d}_i\bm{d}_i^T-\mathbb{E}(\bm{d}_i\mid \bm{c})\mathbb{E}(\bm{d}^T_i\mid \bm{c})]\bm{y}| > n^2t^2/(p\gamma_n))  \leq 8pe^{-C_5 n t^2/(p\gamma_n)}.
\end{align*}
Choosing $t=C_6\sqrt{\gamma_n}$ and using the union bound, we then have
\begin{align}\label{unibound:two}
&P\Big(\max_{\bm{z},\bm{y} \in \{-1,1\}^n} \gamma_n |\bm{z}^T(XX^T-\mathbb{E}(X\mid \bm{c})\mathbb{E}(X^T\mid \bm{c}))\bm{y}|    > C_6n^2\gamma_n \Big)\\
& \leq 2^{2n}\cdot 8pe^{-C_5 C_6^2np^{-1}} \rightarrow 0 \nonumber,
\end{align}
where the last limit holds for sufficiently large $C_6$. Putting \eqref{first:step}, \eqref{unibound:one} and \eqref{unibound:two} together gives us that 
\begin{align*}
\frac{\|\bar{Z}(\bm{c})-\hat{Z}\|_1}{n^2} \leq \frac{8n^{-1/3}\rho_n^{2/3}+8C_6\gamma_n}{U-L},
\end{align*}
holds with probability approaching 1. Since $U-L$ is of order $\rho_n$ and $\gamma_n=o(\rho_n), n\rho_n\rightarrow \infty$, we obtain that $\frac{\|\bar{Z}(\bm{c})-\hat{Z}\|_1}{n^2} \overset{P}{\rightarrow}0$.

 Finally, we use similar arguments as in \cite{rohe2011spectral} to analyze the K-means step and show the mis-classification rate vanishes. Denote the $K$ centroids output from K-means run on $\hat{Z}$ by $\hat{\bm{\mu}}_1,\dots, \hat{\bm{\mu}}_K$ and on $\bar{Z}(\bm{c})$ by $\bm{\mu}_1, \dots, \bm{\mu}_K$; $\hat{C}=(\hat{\bm{\mu}}_{\bar{c}_1}, \dots, \hat{\bm{\mu}}_{\bar{c}_n})^T; \mathscr{C}=\{C\in \mathbb{R}^{n \times n}: C~ \mbox{has~exactly~} K \mbox{~non-identical~rows}\}$. We then know $\bar{Z}(\bm{c})=(\bm{\mu}_{c_1},\dots, \bm{\mu}_{c_n})^T$ and also
\begin{align}
\hat{C}=\argmin_{C \in \mathscr{C}} \|C-\hat{Z}\|_F^2. \label{optima:kmeans}
\end{align}
Note that the elements of $\bmu_i$ are either 0 or 1 and mutually orthogonal. Since the size of each community is proportional to $n$, we have with high probability, there exists a constant $C_6>0$ such that $\|\bm{\mu}_i-\bm{\mu}_j\|_2\geq C_6\sqrt{n}$ for $1 \leq i \neq j \leq K$.

We now define a set of nodes that are ``incorrectly" identified by the K-means:
\begin{align*}
\mathscr{N}=\{i\in \{1, 2,\dots, n\}:  \|\hat{\bm{\mu}}_{\bar{c}_i}-\bm{\mu}_{c_i}\|_2 \geq C_6 \sqrt{n}/2  \}.
\end{align*} 
Hence, we have
\begin{align*}
\frac{|\mathscr{N}|}{n} &\leq \frac{1}{n}\sum_{i \in \mathscr{N}} \|\hat{\bm{\mu}}_{\bar{c}_i}-\bm{\mu}_{c_i}\|^2_2 \cdot \frac{4}{C_6^2n} \leq \frac{4}{C_6^2n^2}\sum_{i } \|\hat{\bm{\mu}}_{\bar{c}_i}-\bm{\mu}_{c_i}\|^2_2 \\
&= \frac{4}{C_6^2n^2} \|\hat{C}-\bar{Z}(\bm{c})\|_F^2 \overset{(d)}{\leq}  \frac{16}{C_6^2n^2} \|\hat{Z}-\bar{Z}(\bm{c})\|_F^2\leq  \frac{16}{C_6^2n^2} \|\hat{Z}-\bar{Z}(\bm{c})\|_1,
\end{align*}
where (d) is due to \eqref{optima:kmeans}. This combined with the fact that $\frac{\|\bar{Z}(\bm{c})-\hat{Z}\|_1}{n^2} \overset{P}{\rightarrow}0$ shows that $\frac{|\mathscr{N}|}{n}  \overset{P}{\rightarrow} 0$. On the other hand, for any $i \notin \mathscr{N}$, it is straightforward to verify that for any $1\leq k\neq c_i \leq K$, we have 
\begin{align*}
\|\hat{\bm{\mu}}_{\bar{c}_i}-\bm{\mu}_{k}\|_2&\ge \|{\bm{\mu}}_{{c}_i}-\bm{\mu}_{k}\|_2- \|\hat{\bm{\mu}}_{\bar{c}_i}-{\bm{\mu}}_{{c}_i}\|_2\\
&\ge  C_6 \sqrt{n}- C_6 \sqrt{n}/2= C_6 \sqrt{n}/2> \|\hat{\bm{\mu}}_{\bar{c}_i}-\bm{\mu}_{c_i}\|_2. 
\end{align*}

We construct a new community assignment:
\begin{align*}
\bar{\bar{c}}_i= \argmin_{a \in \{1, \dots, K\}}  \|\hat{\bm{\mu}}_{\bar{c}_i}-\bm{\mu}_{a}\|_2.
\end{align*}
Note that the communities that $\{\bar{\bar{c}}_{i}\}$ represents might be different from $\{\bar{c}_i\}$, since $\{\bar{\bar{c}}_i\}$ could cover less than $K$ communities. We can view $\{\bar{c}_i\}$ as a possibly finer partition of $\{\bar{\bar{c}}_i\}$. Clearly, for any $i \notin \mathscr{N}$, $\bar{\bar{c}}_i=c_i$. We thus have 
\begin{align}\label{contradiction}
\frac{1}{n}\sum_{i}\mathbbm{1}(\bar{\bar{c}}_i \neq c_i) \leq\frac{1}{n} |\mathscr{N}| \overset{P}{\rightarrow}0.
\end{align}
This implies that $\{\bar{\bar{c}}_i\}=\{\bar{c}_i\}$ with probability approaching 1. Otherwise, the estimate $\{\bar{\bar{c}}_i\}$ yields less than $K$ communities with non-vanishing probability, which is impossible for $\{\bar{\bar{c}}_i\}$ to achieve the result in \eqref{contradiction}. This completes the proof.

$\hfill \Box$

\bibliography{mc}
\bibliographystyle{ims}

\end{document}